\theoremstyle{plain}
\newtheorem{theorem}{Theorem}[section]
\newtheorem{lemma}[theorem]{Lemma}
\newtheorem{proposition}[theorem]{Proposition}
\newtheorem*{claim*}{Claim}
\newtheorem*{fact*}{Fact} 
\theoremstyle{definition}
\newtheorem{remark}[theorem]{Remark}
\newcommand{\Set}{\mathbf{Set}}
\newcommand{\As}{\mathscr{A}}
\newcommand{\Bs}{\mathscr{B}}
\newcommand{\Cs}{\mathscr{C}}
\newcommand{\Rs}{\mathscr{R}}
\newcommand{\Sa}{\mathscr{S}_{a}}
\newcommand{\Sb}{\mathscr{S}_{b}}
\newcommand{\sg}{\sigma}
\newcommand{\RA}{R^{\As}}
\newcommand{\RB}{R^{\Bs}}
\newcommand{\RaA}{R_{\alpha}^{\As}}
\newcommand{\RbA}{R_{\beta}^{\As}}
\newcommand{\RaB}{R_{\alpha}^{\Bs}}
\newcommand{\IMP}{\; \Rightarrow \;}
\newcommand{\AND}{\, \wedge \,}
\newcommand{\CS}{\mathcal{R}(\sg)}
\newcommand{\CSI}{\mathcal{R}(\sgp)}
\newcommand{\CSp}{\mathcal{R}_{\star}(\sg)}
\newcommand{\preford}{\sqsubseteq}
\newcommand{\IFF}{\Longleftrightarrow}
\newcommand{\rarr}{\rightarrow}
\newcommand{\id}{\mathsf{id}}
\newcommand{\ie}{\textit{i.e.}~}
\newcommand{\Count}{\#}
\newcommand{\LL}{\mathcal{L}}
\newcommand{\Lk}{\mathcal{L}_k}
\newcommand{\Lc}{\mathcal{L}(\Count)}
\newcommand{\Lck}{\mathcal{L}_{k}(\Count)}
\newcommand{\ELk}{\exists\mathcal{L}_{k}}
\newcommand{\Lvk}{\mathcal{L}^{k}}
\newcommand{\Lvck}{\mathcal{L}^{k}(\Count)}
\newcommand{\ELvk}{\exists\mathcal{L}^{k}}
\newcommand{\eqLk}{\equiv^{\Lk}}
\newcommand{\eqELk}{\equiv^{\ELk}}
\newcommand{\eqLck}{\equiv^{\Lck}}
\newcommand{\eqLvk}{\equiv^{\Lvk}}
\newcommand{\eqELvk}{\equiv^{\ELvk}}
\newcommand{\eqLvck}{\equiv^{\Lvck}}
\newcommand{\KK}{\mathcal{K}}
\newcommand{\BB}{\mathcal{B}}
\newcommand{\DD}{\mathcal{D}}
\newcommand{\eqL}{\equiv^{\LL}}
\newcommand{\vphi}{\varphi}
\newcommand{\iffdef}{\;\; \stackrel{\Delta}{\IFF} \;\;}
\newcommand{\GG}{\mathsf{G}}
\newcommand{\Gk}{\mathcal{G}_k}
\newcommand{\Ck}{\mathbb{C}_{k}}
\newcommand{\Cl}{\mathbb{C}_{l}}
\newcommand{\Kl}{\mathsf{Kl}}
\newcommand{\epsA}{\varepsilon_{\As}}
\newcommand{\epsB}{\varepsilon_{\Bs}}
\newcommand{\eqak}{\rightleftarrows_{k}}
\newcommand{\eqbk}{\leftrightarrow_{k}}
\newcommand{\eqck}{\cong_{\Kl(\Ck)}}
\newcommand{\eqaCk}{\rightleftarrows_{k}^{\mathbb{C}}}
\newcommand{\eqbCk}{\leftrightarrow_{k}^{\mathbb{C}}}
\newcommand{\eqcCk}{\cong_{k}^{\mathbb{C}}}
\newcommand{\Alk}{A^{\leq k}}
\newcommand{\Blk}{B^{\leq k}}
\newcommand{\CC}{\mathcal{C}}
\newcommand{\Gd}{G}
\newcommand{\Ek}{\mathbb{E}_{k}}
\newcommand{\Ekp}{\mathbb{E}_{k}^{+}}
\newcommand{\Eomega}{\mathbb{E}_{\omega}}
\newcommand{\El}{\mathbb{E}_{l}}
\newcommand{\REk}{R^{\Ek \As}}
\newcommand{\Pk}{\mathbb{P}_{k}}
\newcommand{\Pkp}{\mathbb{P}_{k}^{+}}
\newcommand{\Mk}{\mathbb{M}_{k}}
\newcommand{\Mz}{\mathbb{M}_{0}}
\newcommand{\Mkp}{\mathbb{M}_{k'}}
\newcommand{\MLck}{\MLk(\Count)}
\newcommand{\eqEMk}{\equiv^{\exists\MLk}}
\newcommand{\eqMk}{\equiv^{\MLk}}
\newcommand{\eqMck}{\equiv^{\MLck}}
\newcommand{\pref}{\sqsubseteq}
\newcommand{\prefgt}{\sqsupseteq}
\newcommand{\Ralph}{R_{\alpha}}
\newcommand{\PMA}{P^{\Mk (\As, a)}}
\newcommand{\PMAp}{P^{\Mk (\As, a')}}
\newcommand{\PMBp}{P^{\Mk (\Bs, b')}}
\newcommand{\PMpA}{P^{\Mkp (\As, a)}}
\newcommand{\PMpB}{P^{\Mkp (\Bs, b)}}
\newcommand{\PA}{P^{\As}}
\newcommand{\PB}{P^{\Bs}}
\newcommand{\RMA}{R^{\Mk (\As, a)}}
\newcommand{\RMAp}{R^{\Mk (\As, a')}}
\newcommand{\RMBp}{R^{\Mk (\Bs, b')}}
\newcommand{\RMpA}{R^{\Mkp (\As, a)}}
\newcommand{\RMpB}{R^{\Mkp (\Bs, b)}}
\newcommand{\simord}{\preceq}
\newcommand{\Tk}{\mathbb{P}_k}
\newcommand{\Kcomp}{\bullet}
\newcommand{\nset}{\mathbf{n}}
\newcommand{\kset}{\mathbf{k}}
\newcommand{\Linf}{\mathcal{L}_{\infty,\omega}}
\newcommand{\MLk}{\mathcal{M}_{k}}
\newcommand{\lsem}{\llbracket}
\newcommand{\rsem}{\rrbracket}
\newcommand{\boxa}{\Box_{\alpha}}
\newcommand{\dia}{\Diamond_{\alpha}}
\newcommand{\elen}{\exists_{\leq n}}
\newcommand{\egen}{\exists_{\geq n}}
\newcommand{\eqaEk}{\rightleftarrows_{k}^{\mathbb{E}}}
\newcommand{\eqbEk}{\leftrightarrow_{k}^{\mathbb{E}}}
\newcommand{\eqcEk}{\cong_{k}^{\mathbb{E}}}
\newcommand{\eqaPk}{\rightleftarrows_{k}^{\mathbb{P}}}
\newcommand{\eqbPk}{\leftrightarrow_{k}^{\mathbb{P}}}
\newcommand{\eqcPk}{\cong_{k}^{\mathbb{P}}}
\newcommand{\eqaMk}{\rightleftarrows_{k}^{\mathbb{M}}}
\newcommand{\eqbMk}{\leftrightarrow_{k}^{\mathbb{M}}}
\newcommand{\eqcMk}{\cong_{k}^{\mathbb{M}}}
\newcommand{\eqcMkp}{\cong_{k'}^{\mathbb{M}}}
\newcommand{\dom}{\mathrm{dom}}
\newcommand{\gb}{\sim^{\mathsf{g}}}
\newcommand{\Fraisse}{Fra\"{i}ss\'{e}~}
\newcommand{\pow}{\mathscr{P}}
\newcommand{\comp}{{\uparrow}}
\newcommand{\adj}{\frown}
\newcommand{\da}{{\downarrow}}
\newcommand{\hgt}{\mathsf{ht}}
\newcommand{\td}{\mathsf{td}}
\newcommand{\tw}{\mathsf{tw}}
\newcommand{\md}{\mathsf{md}}
\newcommand{\Gf}{\mathcal{G}}
\newcommand{\Gsg}{\mathcal{G}_{\sg}}
\newcommand{\cnE}{\kappa^{\mathbb{E}}}
\newcommand{\cnP}{\kappa^{\mathbb{P}}}
\newcommand{\cnM}{\kappa^{\mathbb{M}}}
\newcommand{\lbfn}{\lambda}
\newcommand{\pth}{\mathsf{path}}
\newcommand{\labar}[1]{\overset{#1}{\to}}
\newcommand{\ve}{\varepsilon}
\newcommand{\Zp}{\mathbb{Z}^{+}}
\newcommand{\Comon}{\mathsf{Comon}}
\newcommand{\MM}{\mathbb{M}}
\newcommand{\PM}{\mathbb{P}}
\newcommand{\Momega}{\MM_{\omega}}
\newcommand{\es}{\varnothing}
\newcommand{\WABC}{\mathsf{W}(\As,\Bs)}
\newcommand{\cvr}{\prec}
\newcommand{\rcvr}{\succ}
\newcommand{\RT}{\mathcal{R}^{E}}
\newcommand{\RTk}{\mathcal{R}^{E}_{k}}
\newcommand{\RPk}{\mathcal{R}^{P}_{k}}
\newcommand{\RCk}{\mathcal{R}^{\mathbb{C}}_{k}}
\newcommand{\arEk}{\to_{k}^{\mathbb{E}}}
\newcommand{\arPk}{\to_{k}^{\mathbb{P}}}
\newcommand{\arMk}{\to_{k}^{\mathbb{M}}}
\newcommand{\arCk}{\to_{k}^{\mathbb{C}}}
\newcommand{\eqk}{\rightleftarrows_{k}}
\newcommand{\vv}{\vec{v}}
\newcommand{\va}{\vec{a}}
\newcommand{\QA}{Q^{\As}}
\newcommand{\QAp}{Q^{a}}
\newcommand{\QBp}{Q^{b}}
\newcommand{\QS}{Q^{\Sa}}
\newcommand{\MQAa}{MQ^{(\As,a)}}
\newcommand{\MQAb}{MQ^{(\As,b)}}
\newcommand{\rew}{\, \rightsquigarrow \,}
\newcommand{\FV}{\mathsf{FV}}
\newcommand{\redE}{\twoheadrightarrow^{\mathbb{E}}}
\newcommand{\redP}{\twoheadrightarrow^{\mathbb{P}}}
\newcommand{\redM}{\twoheadrightarrow^{\mathbb{M}}}
\newcommand{\Paths}{\mathsf{Paths}}
\newcommand{\Trees}{\mathsf{Trees}}
\newcommand{\Forests}{\mathsf{Forests}}
\newcommand{\embed}{\rightarrowtail}
\newcommand{\lastp}{\mathsf{last}_{p}}
\newcommand{\lastpp}{\mathsf{last}_{p'}}
\newcommand{\ww}{\vec{w}}
\newcommand{\lt}[1]{\stackrel{#1}{\longrightarrow}}
\newcommand{\sgp}{\sigma^{+}}
\newcommand{\EM}{\mathsf{EM}}
\title[Relating Structure and Power]{Relating Structure and Power: \\ Comonadic Semantics for Computational Resources}
\author{Samson Abramsky}
\address{Department of Computer Science, University of Oxford, Wolfson Building, Parks Road, Oxford OX1 3QD, U.K.}\email{samson.abramsky@cs.ox.ac.uk}
\author{Nihil Shah}
\address{Department of Computer Science, University of Oxford, Wolfson Building, Parks Road, Oxford OX1 3QD, U.K.}\email{nihil.shah@cs.ox.ac.uk}
\keywords{Finite model theory,
combinatorial games,
Ehrenfeucht-\Fraisse games,
pebble games,
bisimulation,
comonads,
coKleisli category,
coalgebras of a comonad}%mandatory
\begin{document}

\maketitle

\begin{abstract}
Combinatorial games are widely used in finite model theory, constraint satisfaction, modal logic and concurrency theory to characterize logical equivalences between structures. 
In particular, Ehrenfeucht-\Fraisse games, pebble games, and bisimulation games play a central role. 
We show how each of these types of games can be described in terms of an indexed family of comonads on the category of relational structures and homomorphisms. The index $k$ is a resource parameter which bounds the degree of access to the underlying structure. 
The coKleisli categories for these comonads can be used to give syntax-free characterizations of a wide range of important logical equivalences. Moreover, the coalgebras for these indexed comonads can be used to characterize key combinatorial parameters: tree-depth for the Ehrenfeucht-\Fraisse comonad, tree-width for the pebbling comonad, and synchronization-tree depth for the modal unfolding comonad. These results pave the way for systematic connections between two major branches of the field of logic in computer science which hitherto have been almost disjoint: categorical semantics, and finite and algorithmic model theory.
\end{abstract}

\section{Introduction}

There is a remarkable divide in the field of logic in Computer Science, between two distinct strands: one focusing on semantics and compositionality (``Structure''), the other on expressiveness and complexity (``Power''). 
It is remarkable because these two fundamental aspects of our field are studied using almost disjoint technical languages and methods, by almost disjoint research communities.
We believe that bridging this divide is a major issue in Computer Science, and may hold the key to fundamental advances in the field.

In this paper, we develop a novel approach to relating categorical semantics, which exemplifies the first strand, to finite model theory, which exemplifies the second. It builds on the ideas introduced in \cite{abramsky2017pebbling}, but goes much further, showing clearly that there is a strong and robust connection, which can serve as a basis for many further developments. These developments were first introduced in the conference version of this paper \cite{DBLP:conf/csl/AbramskyS18}. In this journal version, we have added more detailed proofs, and several additional sections demonstrating how game comonads generalize constructions in finite model theory. We have also introduced a new and more satisfactory unified account of back-and-forth equivalences, in terms of open pathwise embeddings, which capture a general notion of \emph{property-preserving bisimulation}.

\subsection*{Note on the presentation}
Since this work aims at connecting what are at present largely disjoint research areas and communities, we have attempted to provide enough background exposition in the paper to make it reasonably self-contained.
In the following section, we provide some background on model comparison games, and their correspondence with logical equivalences, to assist readers mainly coming from the categorical semantics side. For those whose background is mainly in finite model theory and descriptive complexity, we assume only some prior exposure to the basic notions of category, functor, and natural transformation.
Notions relating to comonads are presented in a self-contained way.

We thank the journal referees for their encouragement of, and indeed insistence on, our efforts in this regard.

\subsection*{Acknowledgements} 
The authors are grateful for the support of EPSRC grant EP/T00696X/1, and the 
Department of Computer Science, University of Oxford. They benefited from
very helpful comments from Dan Marsden.
They also acknowledge the valuable comments of the journal referees. In particular, they appreciate the detailed comments made by the first reviewer through several rounds of revisions, which led to a number of improvements in the presentation. 

\subsection*{The setting}
%\textbf{The setting}
Relational structures and the homomorphisms between them play a fundamental r\^{o}le in finite model theory, constraint satisfaction and database theory. The existence of a homomorphism $A \rarr B$ is an equivalent formulation of constraint satisfaction, and also equivalent to the preservation of existential positive sentences \cite{chandra1977optimal}. This setting also generalizes what has become a central perspective in graph theory \cite{hell2004graphs}.
\subsection*{Model theory and deception}
%\textbf{Model theory and deception}
In a sense, the purpose of model theory is ``deception''.  It allows us to see structures not ``as they really are'', \ie up to isomorphism, but only up to \emph{definable properties}, where definability is relative to a logical language $\LL$. The key notion is \emph{logical equivalence} $\eqL$. Given structures $\As$, $\Bs$ over the same vocabulary:
\[ \As \eqL \Bs \iffdef \forall \vphi \in \LL. \; \As \models \vphi \; \IFF \; \Bs \models \vphi . \]
If a class of structures $\KK$ is definable in $\LL$, then it must be saturated under $\eqL$. Moreover, for a wide class of cases of interest in finite model theory, the converse holds \cite{kolaitis1992infinitary}.

The idea of syntax-independent characterizations of logical equivalence is quite a classical one in model theory, exemplified by the Keisler-Shelah theorem \cite{shelah1971every}.
It acquires additional significance in finite model theory, where model comparison games such as Ehrenfeucht-\Fraisse games, pebble games and bisimulation games play a central role \cite{Libkin2004}.

We offer a new perspective on these ideas. We shall study these games, not as external artefacts, but as semantic constructions in their own right.
Each model-theoretic comparison game encodes ``deception'' in terms of limited access to the structure. These limitations are indexed by a parameter which quantifies the resources which control this access. For Ehrenfeucht-\Fraisse games and bisimulation games, this is the number of rounds; for pebble games, the number of pebbles.

\subsection*{Main Results}
We now give a conceptual overview of our main results. Technical details will be provided in the following sections. 
%We assume some familiarity with finite model theory notions, but essentially nothing beyond the notion of category on the categorical side.

We shall consider three forms of model comparison game: Ehrenfeucht-\Fraisse games, pebble games and bisimulation games \cite{Libkin2004}.
For each of these notions of game $\GG$, and value of the resource parameter $k$, we shall define a corresponding \emph{comonad} $\Ck$ on the category of relational structures and homomorphisms  over some relational vocabulary. For each structure $\As$, $\Ck \As$ is another structure over the same vocabulary, which encodes the limited access to $\As$ afforded by playing the game on $\As$ with $k$ resources. There is always an associated homomorphism $\epsA : \Ck \As \rarr \As$ (the \emph{counit} of the comonad), so that $\Ck \As$ ``covers'' $\As$. Moreover, given a homomorphism $h : \Ck \As \rarr \Bs$, there is a \emph{Kleisli coextension} homomorphism $h^* : \Ck \As \rarr \Ck \Bs$. This allows us to form the \emph{coKleisli category} $\Kl(\Ck)$ for the comonad. The objects are relational structures, while the morphisms from $\As$ to $\Bs$ in $\Kl(\Ck)$ are exactly the homomorphisms of the form $\Ck \As \rarr \Bs$. Composition of these morphisms uses the Kleisli coextension. The connection between this construction and the corresponding form of game $\GG$ is expressed  by the following result:
\begin{theorem}
The following are equivalent:
\begin{enumerate}
\item There is a coKleisli morphism $\Ck \As \rarr \Bs$
\item Duplicator has a winning strategy for the existential $\GG$-game with $k$ resources, played from $\As$ to $\Bs$.
\end{enumerate}
\end{theorem}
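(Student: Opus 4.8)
The plan is to unfold both sides of the equivalence into the same concrete object and match them move by move. Recall that a coKleisli morphism $\As \rarr \Bs$ in $\Kl(\Ck)$ is by definition a homomorphism of relational structures $h : \Ck \As \rarr \Bs$, so everything reduces to understanding the structure $\Ck \As$. In each of the three cases the universe of $\Ck \As$ will be (essentially) the set of \emph{positions} Spoiler can reach by playing the $\GG$-game on $\As$ with $k$ resources --- sequences of elements of $A$ of length at most $k$ for the Ehrenfeucht--\Fraisse and bisimulation comonads, and pebble-indexed sequences for the pebble comonad --- and the counit $\epsA : \Ck \As \rarr \As$ will read off the element currently under scrutiny (the last element played). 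The decisive point, which I would isolate as a preliminary lemma, is the exact shape of the lifted relations: a tuple of positions lies in $R^{\Ck \As}$ precisely when the positions are pairwise compatible (prefix-comparable, or sharing the relevant history) \emph{and} their images under $\epsA$ lie in $R^{\As}$. This is the bridge between the algebra and the game.

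Granting that description, the forward direction (1) $\Rightarrow$ (2) is to read a homomorphism $h : \Ck \As \rarr \Bs$ as a strategy: whenever Spoiler has produced a position $s$ in $\Ck \As$, Duplicator answers with $h(s) \in B$. First I would check this is a legal strategy, i.e.\ that Duplicator's responses are consistent along a line of play; this is immediate because positions extending one another are compatible and $h$ is a single function. Then I would verify it is winning: at any stage the tuple of positions played is, by construction, pairwise compatible, so whenever their $\epsA$-images satisfy a relation $R$ in $\As$ the tuple itself lies in $R^{\Ck \As}$, and since $h$ is a homomorphism the tuple of responses $(h(s_1), \dots, h(s_n))$ lies in $R^{\Bs}$. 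This is exactly the winning condition of the \emph{existential} game, where Duplicator must preserve (positive) relations but is not required to reflect them.

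For the converse (2) $\Rightarrow$ (1), I would run the strategy to define the homomorphism: given a winning strategy for Duplicator, set $h(s)$ to be the element Duplicator plays when Spoiler follows the line of moves recorded in the position $s$. That $h$ is well defined uses that a position determines a unique line of Spoiler moves; that $h$ is a homomorphism is the content of the strategy being winning, now read through the preliminary lemma in the opposite direction --- a tuple in $R^{\Ck \As}$ is a compatible family of positions whose $\epsA$-images are $R$-related in $\As$, hence realizable within a single play to which the winning strategy responds with an $R$-related tuple in $\Bs$.

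The hard part will be making the preliminary lemma on the lifted relations precise and, above all, \emph{uniform} across the three comonads, since the compatibility condition on positions is what differs between them: plain prefix-comparability for Ehrenfeucht--\Fraisse, a pebble-bookkeeping refinement for the pebble game, and the path/forest structure for bisimulation. I expect the cleanest route is to prove the equivalence once for an abstract comonad whose universe is a set of positions equipped with a counit-to-last-element and relations of exactly the above form, and then simply observe that each of the three concrete comonads instantiates this template; the remaining work in each case is the bookkeeping verification that the concrete relations match the abstract schema. A secondary subtlety worth flagging is the precise meaning of ``existential'': the single-morphism characterization captures one-directional relation preservation, and I would be careful to phrase the winning condition accordingly so that the equivalence is tight.
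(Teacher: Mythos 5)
Your overall translation --- homomorphism $h : \Ck\As \to \Bs$ as a dictionary for Duplicator's responses, and a winning strategy run forward to define $h$ --- is exactly the paper's approach. But there is a concrete gap in your direction $(1)\Rightarrow(2)$: you claim legality of the strategy ``respond with $h(s)$'' is immediate, and you only verify preservation of relations. The winning condition of the existential game is that the induced relation $\{(a_i,b_i)\}$ is a \emph{partial homomorphism}, which by definition must be single-valued. If Spoiler repeats an element, say $a_i = a_j$ with $i \neq j$, the positions $s_i = [a_1,\dots,a_i]$ and $s_j = [a_1,\dots,a_j]$ are distinct elements of $\Ck A$, and nothing forces $h(s_i) = h(s_j)$; the naive strategy then produces a relation that is not a partial function, and Duplicator loses. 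The paper's proof handles this explicitly by having Duplicator repeat their earlier answer whenever Spoiler repeats an element (and notes that the $I$-morphism condition eliminates the issue structurally). Your proof needs this patch, or it needs to restrict to the relative comonads $\Ekp$, $\Pkp$ where the $I$-morphism condition is built in.

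A second, smaller issue concerns your proposed ``preliminary lemma'' that a tuple lies in $R^{\Ck\As}$ iff the positions are pairwise compatible and their $\epsA$-images are $R$-related. This is accurate for $\Ek$, needs the extra pebble-currency condition for $\Pk$ (which you flag), but is not the right shape for $\Mk$: there $R_\alpha^{\Mk(\As,a)}(s,t)$ holds only when $t$ is an \emph{immediate} $\alpha$-extension of $s$, and the universe itself is cut down to genuine transition paths, so relatedness of the $\epsA$-images together with comparability does not suffice. Correspondingly, the paper does not prove a single abstract template; it proves the Ehrenfeucht--{\Fraisse} case directly, cites the pebbling case, and proves the modal case by a separate induction on $k$ via the simulation relations $\simord_k$. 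Your uniform schema is an attractive idea, but as stated the schema would have to be weakened to the point where the per-comonad ``bookkeeping'' carries most of the content --- which is in effect what the paper does by treating the three cases separately.
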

The existential form of the game has only a ``forth'' aspect, without the ``back''. This means that Spoiler can only play in $\As$, while Duplicator only plays in $\Bs$. This corresponds to the asymmetric form of the coKleisli morphisms $\Ck \As \rarr \Bs$. Intuitively, Spoiler plays in $\Ck \As$, which gives them limited access to $\As$, while Duplicator plays in $\Bs$. The Kleisli coextension guarantees that Duplicator's strategies can always be lifted to $\Ck \Bs$; while we can always compose a strategy $\Ck \As \rarr \Ck \Bs$ with the counit on $\Bs$ to obtain a coKleisli morphism.

This asymmetric form may seem to limit the scope of this approach, but in fact this is not the case. For each of these comonads $\Ck$, we have the following equivalences:
\begin{itemize}
\item $\As \eqak \Bs$ iff there are coKleisli morphisms $\Ck \As \rarr \Bs$ and $\Ck \Bs \rarr \As$. Note that there need be no relationship between these morphisms.
\item $\As \eqck \Bs$ iff $\As$ and $\Bs$ are isomorphic in the coKleisli category $\Kl(\Ck)$. This means that there are morphisms $\Ck \As \rarr \Bs$ and $\Ck \Bs \rarr \As$ which are inverses of each other in $\Kl(\Ck)$.
\end{itemize}
Clearly, $\eqck$ strictly implies $\eqak$.
%These three notions form a strict hierarchy: (E3) implies (E2), which implies (E1).
We can also define an intermediate ``back-and-forth'' equivalence $\eqbk$, in terms of the existence of a suitable kind of span of coKleisli morphisms.
%in terms of \emph{sets} of coKleisli morphisms from $\As$ to $\Bs$ and from $\Bs$ to $\As$, subject to a ``local invertibility'' condition.

For each of our three types of game, there are corresponding  fragments $\Lk$ of first-order logic:
\begin{itemize}
\item For Ehrenfeucht-\Fraisse games, $\Lk$ is the fragment of quantifier-rank $\leq k$.
\item For pebble games, $\Lk$ is the $k$-variable fragment.
\item For bisimulation games over relational vocabularies with symbols of arity at most 2, $\Lk$ is the modal fragment \cite{andreka1998modal} with modal depth $\leq k$.
\end{itemize}
In each case, we write $\ELk$ for the existential positive fragment of $\Lk$, and $\Lck$ for the extension of $\Lk$ with counting quantifiers \cite{Libkin2004}.

We can now state our first main result, in a suitably generic form.
\begin{theorem}
For finite structures $\As$ and $\Bs$:
%\[

\begin{tabular}{llcl}
(1) & $\As \eqELk \Bs$ & $\; \IFF \;$ & $\As \eqak \Bs$. \\
(2) & $\As \eqLk \Bs$ & $\; \IFF \;$ & $\As \eqbk \Bs$. \\
(3) & $\As \eqLck \Bs$ & $\; \IFF \;$ & $\As \eqck \Bs$.
\end{tabular}
%\]
\end{theorem}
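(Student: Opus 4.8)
The plan is to prove all three equivalences by sandwiching the comonadic notions between the corresponding combinatorial games and the logics those games characterize. On one side I would invoke the classical results of finite model theory relating each game $\GG$ to its fragment; on the other side I would use the preceding theorem, which identifies coKleisli morphisms $\Ck \As \rarr \Bs$ with Duplicator winning strategies in the existential (forth-only) $\GG$-game. The architecture is uniform across the three comonads, although the underlying game/logic correspondences invoked in each row (Ehrenfeucht-Fra\"{i}ss\'{e}, pebble, Hennessy--Milner) are game-specific and would be quoted separately.

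Row (1) is essentially immediate. By the preceding theorem, a coKleisli morphism $\Ck \As \rarr \Bs$ exists iff Duplicator wins the existential $\GG$-game from $\As$ to $\Bs$, which by the standard preservation theorem for the fragment holds iff every sentence of $\ELk$ satisfied by $\As$ is satisfied by $\Bs$. The relation $\As \eqak \Bs$ asks for coKleisli morphisms in both directions, so applying this correspondence twice yields exactly two-sided preservation of $\ELk$-sentences, that is, $\As \eqELk \Bs$.

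For row (2), I would first unfold the definition of $\eqbk$: a span of coKleisli morphisms of the prescribed kind should be shown to be precisely the data of a winning strategy for Duplicator in the \emph{symmetric} back-and-forth $\GG$-game, namely a family of partial matches closed under both forth- and back-extensions and compatible with the two counits $\epsA$ and $\epsB$. Once this decoding is in place, the classical Ehrenfeucht-Fra\"{i}ss\'{e}-type theorem for the fragment gives that Duplicator wins this game iff $\As \eqLk \Bs$. The real work here is the translation between the categorical span and the game-theoretic back-and-forth system; the logical half is quoted from finite model theory.

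Row (3) is where I expect the main difficulty. A coKleisli isomorphism consists of morphisms $f \colon \Ck \As \rarr \Bs$ and $g \colon \Ck \Bs \rarr \As$ whose Kleisli composites are the identities of $\Kl(\Ck)$, namely $g \circ f^{*} = \epsA$ and $f \circ g^{*} = \epsB$, where $f^{*}, g^{*}$ are the Kleisli coextensions and the identity on an object is its counit. The crux is to unwind these two equations into a \emph{bijective} back-and-forth system: the inverse laws should force the forth- and back-strategies to restrict to mutually inverse bijections between the relevant sets of one-step extensions, which is exactly the condition defining Duplicator's winning strategy in the bijective $\GG$-game. One then invokes the Hella/Immerman--Lander characterization, valid \emph{for finite structures}, that the bijective game captures equivalence in the counting extension $\Lck$. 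Finiteness is used essentially twice here: to make the inverse conditions force genuine bijections, via a pigeonhole argument on finite extension sets, and in the game/logic correspondence for counting logic itself. This bijective-versus-isomorphism correspondence, rather than any individual logical fact, is the heart of the argument.
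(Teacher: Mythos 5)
Your architecture coincides with the paper's: each row is proved by sandwiching, composing the classical game--logic correspondences (Kolaitis--Vardi for the existential positive fragment, Ehrenfeucht--\Fraisse/Hennessy--Milner for $\Lk$, Hella's bijection game for $\Lck$) with a comonadic characterization of the corresponding game --- coKleisli morphisms for the existential game in row (1), a span witnessing the back-and-forth game in row (2) (which the paper realizes as a span of \emph{open pathwise embeddings} and analyses via a generic game $\Gk$), and coKleisli isomorphisms decoded into position-indexed bijections $\psi_s(a) = f(s[a])$ for row (3). The one technical point you miss is that in rows (2) and (3) the partial-isomorphism winning conditions force the coKleisli maps to be $I$-morphisms (equivalently, one works with the relative comonads $\Ekp$ and $\Pkp$): this, not a pigeonhole argument, is what guarantees that repeated Spoiler moves receive consistent responses, while the bijectivity of each $\psi_s$ already follows directly from the inverse laws without any finiteness.
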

Note that this is really a family of three theorems, one for each type of game $\GG$. Thus in each case, we capture the salient logical equivalences in syntax-free, categorical form.

We now turn to the significance of indexing by the resource parameter $k$. When $k \leq l$, we have a natural inclusion morphism $\Ck \As \rarr \Cl \As$, since playing with $k$ resources is a special case of playing with $l \geq k$ resources. This tells us that the smaller $k$ is, the easier it is to find a morphism $\Ck \As \rarr \Bs$. Intuitively, the more we restrict Spoiler's abilities to access the structure of $\As$, the easier it is for Duplicator to win the game.

The contrary analysis applies to morphisms $\As \rarr \Ck \Bs$. The smaller $k$ is, the \emph{harder} it is find such a morphism. Note, however, that if $\As$ is a finite structure of cardinality $k$, then $\As \eqak \Ck \As$. In this case, with $k$ resources we can access the whole of $\As$. What can we say when $k$ is strictly smaller than the cardinality of $A$?

It turns out that there is a beautiful connection between these indexed comonads and combinatorial invariants of structures. This is mediated by the notion of \emph{coalgebra}, another fundamental (and completely general) aspect of comonads. A coalgebra for a comonad $\Ck$ on a structure $\As$ is a morphism $\As \rarr \Ck \As$ satisfying certain properties. We define the \emph{coalgebra number} of a structure $\As$, with respect to the indexed family of comonads $\Ck$, to be the least $k$ such that there is a $\Ck$-coalgebra  on $\As$.

We now come to our second main result.
\begin{theorem}
\begin{enumerate}
\item For the Ehrenfeucht-\Fraisse comonad, the coalgebra number of $\As$ corresponds precisely to the \emph{tree-depth} of $\As$ \cite{nevsetvril2006tree}.
\item For the pebbling comonad, the coalgebra number of $\As$ corresponds precisely to the \emph{tree-width} of $\As$.
\item For the modal comonad, the coalgebra number of $\As$ corresponds precisely to the \emph{modal unfolding depth} of $\As$. 
%This is a natural quantity, which will be defined in section~4.
\end{enumerate}
\label{thm:introCoalgebra}
\end{theorem}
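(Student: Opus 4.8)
The plan is to prove each of the three statements by giving a concrete combinatorial description of what a $\Ck$-coalgebra on a structure $\As$ amounts to, and then matching the least admissible index against the relevant invariant. In all three cases the strategy is the same. A coalgebra is a morphism $\alpha : \As \rarr \Ck \As$ subject to the counit law $\epsA \circ \alpha = \id_{\As}$ and the coassociativity law $\delta_{\As} \circ \alpha = \Ck \alpha \circ \alpha$, where $\delta$ is the comultiplication. The first task is to unpack these two equations. Since each $\Ck \As$ has as elements sequences (plays) over $A$ and the counit reads off the last move, the counit law forces $\alpha(a)$ to be a play ending in $a$, while coassociativity forces the assignment $a \mapsto \alpha(a)$ to be coherent under taking prefixes: for each element $a_i$ visited along $\alpha(a)$, the play $\alpha(a_i)$ must be exactly the corresponding prefix of $\alpha(a)$. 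Together these make $\alpha$ equivalent to endowing $A$ with a prefix-ordered, forest-like structure of controlled height or width, and the requirement that $\alpha$ be a homomorphism translates the relations of $\As$ into a comparability or adjacency constraint within that structure.

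For the Ehrenfeucht-\Fraisse comonad $\Ek$, I would show that a coalgebra is exactly a \emph{forest order} on $A$ of height $\leq k$ in which every tuple related by some relation of $\As$ forms a chain, i.e.\ lies on a single root-to-node branch. This is precisely the condition defining $\td(\As) \leq k$: related elements must be pairwise ancestor--descendant in a forest of height $\leq k$. Conversely, from such a forest one reads off $\alpha(a)$ as the branch from the root down to $a$ and checks the two coalgebra laws and the homomorphism property directly. Hence the least $k$ admitting a coalgebra equals $\td(\As)$.

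For the pebbling comonad $\Pk$, the same unpacking yields a tree-shaped decomposition of $\As$ in which the at-most-$k$ currently-pebbled elements play the role of the bag at each node; the pebble-reuse discipline corresponds to the connectivity (subtree) property of a tree decomposition, and the bound of $k$ pebbles corresponds to bags of size $\leq k$, hence width $\leq k-1$. The least such $k$ is therefore $\tw(\As) + 1$. For the modal comonad $\Mk$, coalgebras correspond to depth-bounded tree unfoldings of the pointed structure, and the least $k$ is the modal unfolding depth $\md(\As)$; this case is the most direct, since modal structures are already tree-like and the comonad performs exactly the unfolding.

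The main obstacle is the pebbling/tree-width case, and specifically the two inequalities relating the pebble discipline to the interpolation property of tree decompositions. Showing that a coalgebra yields a valid tree decomposition requires verifying that the set of tree nodes whose bag contains a fixed element $a$ is connected, which follows from coassociativity but needs care; the converse direction, building a coalgebra from a tree decomposition of width $k-1$, requires choosing a consistent linear discipline for introducing and forgetting pebbles along the tree, together with the bookkeeping to ensure that the resulting play assignment is prefix-coherent and a homomorphism, with the $+1$ offset tracked throughout. The Ehrenfeucht-\Fraisse and modal cases are comparatively routine once the coalgebra-as-forest dictionary of the first paragraph is in place.
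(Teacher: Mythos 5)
Your overall architecture --- unpack the counit and coassociativity laws to see that a coalgebra $\alpha$ encodes a prefix-coherent forest structure on $A$, then match the homomorphism condition against the relevant combinatorial invariant --- is exactly the paper's. The Ehrenfeucht--\Fraisse case goes through as you describe: the paper establishes precisely the bijection between $\Ek$-coalgebras and forest covers of the Gaifman graph $\Gf(\As)$ of height $\leq k$, with the pairwise-comparability clause in the definition of $R^{\Ek\As}$ supplying your ``related tuples form a chain'' condition. For the pebbling case your plan is also the paper's, but you have left the hard step as an acknowledged obstacle rather than solving it. The paper interposes the notion of a \emph{$k$-pebble forest cover} (a forest cover together with a pebbling function $p : V \to \kset$ subject to a reuse condition along adjacent pairs), shows $\Pk$-coalgebras biject with these, and then separately proves the equivalence with tree decompositions of width $< k$. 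The ``consistent linear discipline for introducing and forgetting pebbles'' that you flag is obtained by first passing to an \emph{orderly} (nice) tree decomposition of the same width, so that each node introduces at most one new vertex; this makes the map sending $v$ to the least bag containing $v$ injective and permits a greedy inductive assignment of pebble indices, while in the converse direction the bags are recovered as sets of \emph{active predecessors}. Without some such normalization your construction of a coalgebra from a tree decomposition does not obviously go through, so this is a genuine gap in the writeup even though the plan is sound.

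The modal case is where you have an actual error rather than an omission. Kripke structures are \emph{not} ``already tree-like'', and it is false that every pointed structure acquires an $\Mk$-coalgebra from its unfolding: a coalgebra must be a morphism $(\As,a) \to \Mk(\As,a)$ defined on the original universe, and the two coalgebra laws force every element of the generated submodel $\Sa$ to have a \emph{unique} transition path from $a$, of length $\leq k$. Hence a coalgebra exists, for any $k$ whatsoever, only when $\Sa$ is already a synchronization tree; a structure with a cycle reachable from $a$ has no $\Mk$-coalgebra for any $k$, and its coalgebra number is undefined. The paper states the modal part of the theorem conditionally for exactly this reason. Your claim that this case is ``the most direct'' inverts the situation: it is the one case where the mere existence of a coalgebra is a nontrivial structural constraint, and your proof needs to derive that constraint from the coalgebra laws rather than assume it away.
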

The main idea behind these results is that coalgebras on $\As$ are in bijective correspondence with decompositions of $\As$ of the appropriate form. We thus obtain categorical characterizations of these key combinatorial parameters.

These two results appeared in the first version of this paper \cite{DBLP:conf/csl/AbramskyS18}, and  unify logical resources with combinatorial resources. In this expanded version of the paper, we provide two new families of results that can be seen as further aspects of this unification. The first result utilizes a general fact about comonads to link the logical resource characterized by coKleisli morphisms $\Ck \As \rightarrow \Bs$ to the combinatorial resource characterized by the coalgebas $\As \rightarrow \Ck\As$. 
For each game comonad $\Ck$, we can define a preorder $\arCk$ over the category of structures, where
$\As \arCk \Bs$ iff for all structures $\Cs$ with combinatorial resource parameter $\leq k$,  
$\Cs \rarr \As \IMP \Cs \rarr \Bs$.\footnote{We use the notation $\Cs \rarr \As$ to mean that there exists a morphism with domain $\Cs$ and codomain $\As$.}
This preorder, when instantiated to the case of Ehrenfeucht-{\Fraisse}games and tree-depth, is used extensively in \cite{Rossman2008}.
We demonstrate that this preorder is characterized by the coKleisli category of $\Ck$.
\begin{proposition}
$\As \arCk \Bs \; \IFF \; \Ck \As \rightarrow \Bs$.
\end{proposition}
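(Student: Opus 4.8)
The plan is to prove both implications using only the general comonadic structure, together with the identification (Theorem~\ref{thm:introCoalgebra}) of the combinatorial resource parameter with the coalgebra number. Two facts carry the whole argument. First, the counit is a homomorphism $\epsA : \Ck \As \rarr \As$, so $\Ck \As$ always maps to $\As$. Second, the comultiplication $\delta_{\As} : \Ck \As \rarr \Ck \Ck \As$ is a $\Ck$-coalgebra on $\Ck \As$ (it is the cofree coalgebra, with coassociativity and counitality being exactly the comonad laws), so by Theorem~\ref{thm:introCoalgebra} the structure $\Ck \As$ itself has combinatorial resource parameter $\leq k$.

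For the implication $\Ck \As \rarr \Bs \IMP \As \arCk \Bs$, suppose $h : \Ck \As \rarr \Bs$ is a coKleisli morphism, and let $\Cs$ be any structure of combinatorial resource parameter $\leq k$ equipped with a homomorphism $f : \Cs \rarr \As$. By Theorem~\ref{thm:introCoalgebra}, $\Cs$ carries a coalgebra $\alpha : \Cs \rarr \mathbb{C}_{k'} \Cs$ for some $k' \leq k$. Applying the functor $\mathbb{C}_{k'}$ to $f$, then postcomposing with the inclusion $\mathbb{C}_{k'} \As \rarr \Ck \As$ and finally with $h$, yields the composite
\[ \Cs \xrightarrow{\; \alpha \;} \mathbb{C}_{k'} \Cs \xrightarrow{\; \mathbb{C}_{k'} f \;} \mathbb{C}_{k'} \As \rarr \Ck \As \xrightarrow{\; h \;} \Bs , \]
which is a homomorphism $\Cs \rarr \Bs$. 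Since $\Cs$ and $f$ were arbitrary, $\As \arCk \Bs$ follows.

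For the converse $\As \arCk \Bs \IMP \Ck \As \rarr \Bs$, we simply instantiate the hypothesis at the witnessing structure $\Cs := \Ck \As$. By the second fact above, $\Ck \As$ has combinatorial resource parameter $\leq k$, and by the first fact the counit $\epsA : \Ck \As \rarr \As$ provides a homomorphism $\Ck \As \rarr \As$. Hence $\As \arCk \Bs$ supplies a homomorphism $\Ck \As \rarr \Bs$, which is exactly a coKleisli morphism $\Ck \As \rarr \Bs$.

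The argument is short precisely because the coalgebra and coKleisli data of the comonad already encode the two sides of the preorder; the one point that genuinely requires the earlier development, and is the conceptual crux, is that $\Ck \As$ attains the combinatorial resource bound $k$ through its cofree coalgebra $\delta_{\As}$. This is what makes $\Ck \As$ a universal witness for $\arCk$: every homomorphism $\Cs \rarr \As$ out of a structure of parameter $\leq k$ factors through the counit $\epsA$ (by the construction in the first implication, now targeting $\Ck \As$ in place of $\Bs$), so that the single morphism $\Ck \As \rarr \Bs$ simultaneously encodes all the implications defining $\As \arCk \Bs$.
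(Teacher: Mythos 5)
Your proof is correct and follows essentially the same route as the paper: the forward direction composes $\Cs \to \Ck\Cs \to \Ck\As \to \Bs$ using a coalgebra on $\Cs$ supplied by the coalgebra--parameter correspondence, and the converse instantiates at $\Cs = \Ck\As$ using the comultiplication as its cofree coalgebra and the counit as the map to $\As$ (the paper packages exactly this as an abstract statement, Proposition~\ref{coalgapproxprop}, before specialising). The only difference is your unnecessary detour through $\mathbb{C}_{k'}$ with $k' \leq k$ and the inclusion into $\Ck$; since parameter $\leq k$ already yields a $\Ck$-coalgebra directly, this can be streamlined.
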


For the second result, we show how coalgebras can be interpreted syntactically. We show that a coalgebra $\As \rarr \Ck\As$ encodes a witness  to the fact that the canonical conjunctive query $\QA$ of $\As$ is equivalent to a sentence in the logical fragment corresponding to the comonad.
\begin{theorem}
  $\As$ has a $\Ck$-coalgebra iff $\QA$ can be rewritten into a formula $\varphi$, where $\varphi$ has logical resource at most $k$.
\end{theorem}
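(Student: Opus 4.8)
The plan is to route both directions through the bijective correspondence, established in the proof of Theorem~\ref{thm:introCoalgebra}, between $\Ck$-coalgebras $\As \rarr \Ck \As$ and decompositions of $\As$ witnessing the relevant combinatorial parameter: a tree-depth decomposition of height $\leq k$ for the Ehrenfeucht--\Fraisse comonad $\Ek$, a tree-decomposition of width $< k$ for the pebbling comonad $\Pk$, and a modal unfolding of depth $\leq k$ for the modal comonad $\Mk$. Recall that, by the counit law, a coalgebra assigns to each element $a$ a ``play'' ending at $a$, and that the coalgebra identities force these plays to cohere into exactly such a decomposition. I will read each decomposition as a \emph{schedule} for introducing the variables of the canonical query $\QA$, and conversely; the entire argument then rests on the observation that the logical resource of $\QA$ relative to a given schedule is precisely the parameter of the associated decomposition.

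For the forward direction I would start from a coalgebra, pass to the corresponding decomposition, and define $\vphi$ by recursion over the decomposition tree (or forest). At each node I conjoin exactly those atoms of $\As$ whose arguments all lie on the current root-to-node path (for $\Ek$) or inside the current bag (for $\Pk$), and which have not yet been placed, then existentially quantify the node's variable and recurse into the children. The resulting $\vphi$ lies in $\ELk$, and its resource is bounded by $k$: along any branch the quantifier nesting (for $\Ek$), the number of simultaneously active variables (for $\Pk$), or the modal depth (for $\Mk$) is controlled by the height, width, or depth of the decomposition. The crucial point is that \emph{every} atom of $\As$ is placed exactly once --- this is where the defining property of the decomposition (each tuple lies on a single path, inside a single bag, or along a single unfolding branch) is used --- so that $\vphi$ retains the full atomic diagram of $\As$. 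Consequently both $\vphi$ and $\QA$ are satisfied by a structure $\Bs$ iff there is a homomorphism $\As \rarr \Bs$ \cite{chandra1977optimal}, and the two are logically equivalent.

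The backward direction is where I expect the main obstacle. From a resource-$\leq k$ rewriting $\vphi$ of $\QA$ I must recover a decomposition of $\As$, equivalently, by Theorem~\ref{thm:introCoalgebra}, show that the parameter of $\As$ is $\leq k$. The delicate issue is the notion of equivalence: mere logical equivalence $\vphi \equiv \QA$ pins $\As$ down only up to homomorphic equivalence, i.e. up to its core, whereas a coalgebra is a datum on $\As$ \emph{itself}, and the combinatorial parameters are not invariant under homomorphic equivalence. I must therefore work with the notion of rewriting that preserves the atomic content of $\As$ --- so that the canonical tableau of $\vphi$ is $\As$ --- rather than with arbitrary logically equivalent sentences. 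Granting this, the plan is to put $\vphi$ into the normal form of an existential-positive resource-$k$ sentence whose tableau is $\As$, and then to read its quantifier, variable, or modal structure \emph{backwards} as a schedule: this directly exhibits a decomposition of the required parameter, hence the coalgebra. The hard part is thus twofold: fixing the correct content-preserving notion of equivalence, and the normal-form extraction, which for $\Pk$ specialises to the classical correspondence between $k$-variable conjunctive queries and tree-width $< k$ \cite{Libkin2004}, and which I intend to establish uniformly across the three comonads by induction on $\vphi$, recovering the schedule from the nesting of its quantifiers.
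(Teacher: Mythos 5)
Your architecture --- routing both directions through the coalgebra/decomposition correspondence --- is exactly the paper's, and you have correctly isolated the central subtlety: plain logical equivalence of $\vphi$ with $\QA$ only determines $\As$ up to homomorphic equivalence, so it cannot recover a coalgebra on $\As$ itself. But you leave that subtlety unresolved, and this is where the proposal has a genuine gap. The paper resolves it by fixing a concrete syntactic rewrite system --- associativity/commutativity/unit rewriting of conjunctions, scope narrowing $\exists v.\, (\vphi \AND \psi) \rew (\exists v.\, \vphi) \AND \psi$ for $v \notin \FV(\psi)$, quantifier exchange, and (for pebbling) renaming --- whose rules manifestly preserve the multiset of atoms and the variable set, hence the canonical structure. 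Once this is fixed, the direction you regard as the main obstacle (formula $\Rightarrow$ coalgebra) becomes the easy one: each rule preserves the property that the quantifier-scoping order on the variables is a forest cover of $\Gf(\As)$ (respectively, that the variable-reuse pattern gives a $k$-pebbling), so a single induction on the length of the rewrite sequence reads the decomposition directly off $\vphi$; no ``normal-form extraction'' is needed. Your deferral of this to an unspecified content-preserving equivalence plus a normal-form argument is not an executed proof of that direction.

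Conversely, the direction you treat as routine (coalgebra $\Rightarrow$ formula) is where the paper's real work lies, and your treatment falls short of the statement. You build $\vphi$ directly from the decomposition and argue only that $\vphi$ and $\QA$ are both equivalent to the existence of a homomorphism from $\As$; the theorem asserts that $\QA$ \emph{rewrites to} $\vphi$ under the given rules. The paper proves this by induction on $|A|$: linearize the forest order, observe that the maximal elements are pairwise non-adjacent in $\Gf(\As)$ (so their quantifiers can be pushed past one another and then inward by (R1)--(R2)), replace the resulting innermost subformulas by fresh relation symbols to obtain a smaller structure over an enlarged vocabulary with a forest cover of height $k-1$, and apply the induction hypothesis before substituting back. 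That reachability argument is absent from your proposal. Finally, your claim to handle the modal case uniformly in both directions overreaches: the paper establishes only the forward implication for $\Mk$ and notes explicitly that the converse holds merely up to bisimulation, since possessing an $\Mk$-coalgebra forces $\Sa$ to literally be a synchronization tree, which is not implied by the existence of a depth-$k$ canonical modal conjunctive query.
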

The formula $\varphi$ can be read out from the coalgebra map.

We also show how to rephrase the coalgebra results of Theorem~\ref{thm:introCoalgebra} in the language of adjunctions. 
\begin{theorem}
Each game comonad arises from an adjunction between the category of structures and a category of tree-ordered structures:
\begin{enumerate}
  \item For the Ehrenfeucht-{\Fraisse} comonad, the associated category is $k$-height tree-ordered structures.
  \item For the pebbling comonad, the associated category is $k$-pebble tree-ordered structures. 
  \item For the modal comonad, the associated category is $k$-height tree-ordered Kripke structures. 
\end{enumerate}
In each case, the left adjoint is the evident forgetful functor which forgets the order. The right adjoint is given by the functor part of the comonad.
\end{theorem}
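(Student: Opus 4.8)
The plan is to establish, for each game comonad $\Ck \in \{\Ek, \Pk, \Mk\}$, an adjunction $U \dashv G$ in which $U$ is the forgetful functor from the relevant category $\mathcal{T}$ of tree-ordered structures to the category of relational structures, and then to check that the induced comonad $UG$ coincides with $\Ck$ on the nose. Since adjoints are determined up to isomorphism by a universal property, the bulk of the work is to exhibit the right adjoint $G$ together with a bijection
\[ \hom(U\mathcal{S}, \As) \;\cong\; \hom_{\mathcal{T}}(\mathcal{S}, G\As) \]
natural in the tree-ordered structure $\mathcal{S}$ and the structure $\As$; the comonad $UG$ then carries counit $\epsA$ and comultiplication $U\eta G$, which I will match against the known comonad data.

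First I would pin down the categories of tree-ordered structures and the functor $G$. In each case $G\As$ has underlying set the comonad applied to $\As$ --- sequences (plays) for $\Ek$, pebble-indexed plays for $\Pk$, paths for $\Mk$ --- equipped with the prefix partial order, which is a forest order of the prescribed shape ($\leq k$ in height for the bounded cases, $k$-pebble-indexed for the pebbling case). I would verify that $G\As$ genuinely lies in $\mathcal{T}$ (the relation structure is compatible with the forest order and the height/pebble constraint holds) and that $Gf$ is an order-preserving homomorphism for every $f : \As \rarr \Bs$, so that $G$ is a functor landing in $\mathcal{T}$. By construction $UG\As$ is exactly $\Ck\As$, so the functor parts already agree; the content is the adjunction.

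The core step is the universal property. Given a tree-ordered structure $\mathcal{S}$ of the allowed shape and a homomorphism $f : U\mathcal{S} \rarr \As$, I would define the transpose $\hat f : \mathcal{S} \rarr G\As$ by sending each $s \in \mathcal{S}$ to the image under $f$ of the chain of predecessors of $s$ --- that is, to the \emph{history} $\ls f(s_1), \dots, f(s_n) \rs$ where $s_1 < \cdots < s_n = s$ lists the elements at or below $s$. The shape constraint on $\mathcal{S}$ guarantees this history is a legal element of $\Ck\As$; the map $\hat f$ is order-preserving (it respects prefixes) and a homomorphism; and post-composing with the counit (``read off the last coordinate'') recovers $f$. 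Conversely any $g$ transposes back via $\epsA \circ Ug$, and I would check these assignments are mutually inverse and natural. Uniqueness of $\hat f$ is forced, since an order-preserving lift must send $s$ to a sequence whose successive last-coordinates are the $f$-images along the chain below $s$.

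The main obstacle will be framing the three tree-ordered categories uniformly enough that this single argument covers all of them, rather than the verification itself, which is routine once the definitions are fixed. The pebbling case is the delicate one: the forests are not height-bounded, so the tree-order must additionally carry the pebble-index data, and I must check that the history construction produces a well-formed pebble-play and that the $k$-pebble constraint on $\mathcal{S}$ is precisely what makes $\hat f$ land in $\Pk\As$. For the modal case I must carry along the basepoint and the accessibility relation, so that $\mathcal{T}$ is a category of pointed tree-ordered Kripke structures and $G\As = \Mk\As$ respects the distinguished root. Once the bijection is verified in each case, matching the comultiplication $U\eta G$ with the comonad's own comultiplication is immediate from naturality, completing the identification $UG = \Ck$.
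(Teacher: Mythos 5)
Your proposal is correct, but it takes a genuinely different route from the paper. The paper does not verify the hom-set bijection directly: it first establishes (Theorems~\ref{tdth}, \ref{kpfcth} and Proposition~\ref{modalcoalgprop}) that coalgebras $\As \to \Ck\As$ correspond bijectively to forest covers, $k$-pebble forest covers, and synchronization-tree structure respectively, then upgrades this to an isomorphism of categories $\EM(\Ck) \cong \mathcal{R}^{\mathbb{C}}_k(\sg)$ by checking that coalgebra morphisms coincide with root- and cover-preserving homomorphisms (Theorem~\ref{EFadj} and its analogues), and finally invokes the general fact that every comonad arises from its Eilenberg--Moore resolution, whose left adjoint is the forgetful functor and whose right adjoint is the cofree-coalgebra functor $\As \mapsto (\Ck\As, \delta_{\As})$. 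You instead exhibit the adjunction by hand, defining the transpose of $f : U\mathcal{S} \to \As$ as the ``history'' map $s \mapsto [f(s_1),\dots,f(s_n)]$ along the predecessor chain of $s$, and checking the bijection, naturality, and the identification of counit and comultiplication. Your approach is more elementary and self-contained (it does not need the coalgebra--forest-cover correspondence at all), and it proves exactly the stated theorem; what the paper's route buys in exchange is the stronger structural fact that the tree-ordered categories are precisely the Eilenberg--Moore categories, i.e.\ that this adjunction is the \emph{terminal} resolution of each comonad, which is used as the organizing result of that section.

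One point to make explicit when you ``pin down the categories'': your uniqueness argument for the transpose requires that morphisms of tree-ordered structures preserve roots and the covering relation $\cvr$, not merely the order. With order-preservation alone a lift of $f$ could send an element to a sequence of the wrong length and uniqueness would fail; the paper's definition of $\RTk(\sg)$, $\RPk(\sg)$ and the modal analogue builds root- and cover-preservation in for exactly this reason. With that definition fixed, your verification goes through in all three cases, including the pebbling case where condition (P) on $(\As,\leq,p)$ is precisely what makes the history map a homomorphism into $\Pk\As$.
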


We have also put the notion of $I$-morphism introduced in \cite{abramsky2017pebbling}, and used extensively in \cite{DBLP:conf/csl/AbramskyS18}, on a more systematic basis, in terms of \emph{relative comonads} \cite{altenkirch2010monads}.

Finally, an important contribution of this expanded version of the paper is to give a unified, general treatment of back-and-forth equivalences, which play a central role in finite model theory. To do this, we use a refined version of the well-known notion of open map bisimulation \cite{joyal1993bisimulation}, in which the maps in the span witnessing a bisimulation are required not only to be open, but to be \emph{pathwise embeddings}.
This supports a general notion of \emph{property-preserving bisimulation}, which specialises to yield all the back-and-forth equivalences we study here, as well as many other examples, e.g.~\emph{guarded bisimulation} \cite{abramsky2017pebbling}.

\section{Background}

We shall now review some background on logics, model comparison games, and comonads.

\subsection{Notational preliminaries}
A relational vocabulary $\sg$ is a set of relation symbols $R$, each with a specified positive integer arity.
A $\sg$-structure $\As$ is given by a set $A$, the universe of the structure, and for each $R$ in $\sg$ with arity $n$, a relation $\RA \subseteq A^n$. A homomorphism $h : \As \rarr \Bs$ is a function $h : A \rarr B$ such that, for each relation symbol $R$ of arity $n$ in $\sg$, for all $a_1, \ldots , a_n$ in $A$:
$\RA(a_1,\ldots , a_n) \IMP \RB(h(a_1), \ldots , h(a_n))$. We write $\CS$ for the category of $\sg$-structures and homomorphisms.
 
A homomorphism $h : \As \to \Bs$ is \emph{strong} if for $n$-ary $R$ in $\sg$ and $a_1, \ldots , a_n \in A$, $R^{\Bs}(h(a_1), \ldots , h(a_n)) \Rightarrow R^{\As}(a_1, \ldots , a_n)$.
An \emph{embedding}  of relational structures is an injective strong homomorphism.  It is easily verified that the embeddings in $\CS$ are exactly the extremal monomorphisms; those monomorphisms $m$ such that, whenever $m = f \circ e$ for an epimorphism $e$, $e$ is an isomorphism.

A \emph{partial homomorphism} from $\As$ to $\Bs$ is a finite relation $r \subseteq A \times B$ which is single-valued on its domain $A_0 := \pi_1(r)$, \ie~$(a, b), (a, b') \in r \IMP b = b'$, and which is a homomorphism on the substructure of $\As$ determined by $A_0$.
A \emph{partial isomorphism} from $\As$ to $\Bs$ is a partial homomorphism $r$ such that the converse relation $r^{c} \subseteq B \times A$ is a partial homomorphism from $\Bs$ to $\As$.

We shall write $\Alk$ for the set of non-empty sequences of length $\leq k$ on a set $A$. We use list notation $[a_1,\ldots , a_j]$ for such sequences, and indicate concatenation by juxtaposition, i.e. $ss'$ is the concatenation of the sequences $s$ and $s'$. We write $s \pref t$ for the prefix ordering on sequences. If $s \preford t$, there is a unique $s'$ such that $ss' = t$, which we refer to as the suffix of $s$ in $t$. For each positive integer $n$, we define $\nset := \{ 1, \ldots , n\}$. 

We shall need a few notions on posets. The comparability relation on a poset  $(P, {\leq})$ is $x \comp y$ iff  $x \leq y$ or $y \leq x$.  A chain in a poset $(P, {\leq})$ is  a subset $C \subseteq P$ such that, for all $x, y \in C$, $x \comp y$. A \emph{forest} is a poset $(F, {\leq})$ such that, for all $x \in F$, the  set of predecessors $\da(x) \, := \, \{ y \in F \mid y \leq x\}$ is a finite chain. The height $\hgt(F)$ of a forest $F$ is $\sup_{C} | C |$, where $C$ ranges over chains in $F$.
A \emph{tree} is a forest with a least element $\bot$ (the root).
We write the covering relation for a poset as $\cvr$; thus $x \cvr y$ iff $x \leq y$, $x \neq y$, and for all $z$, $x \leq z \leq y$ implies $z = x$ or $z = y$.

\subsection{Logic fragments}
\label{fragssec}

We shall be considering logics $\LL$ which arise as fragments of $\Linf$, the extension of first-order logic with infinitary conjunctions and disjunctions, but where formulas contain only finitely many variables. In particular, we will consider the fragments $\Lk$, of formulas with quantifier rank $\leq k$, and $\Lvk$, the $k$-variable fragment. These play a fundamental r\^ole in finite model theory.

We shall also consider two variants for each of these fragments $\LL$. One is the existential positive fragment $\exists\LL$, which contains only those formulas of $\LL$ built using existential quantifiers, conjunction and disjunction. The other is $\Lc$, the extension of $\LL$ with counting quantifiers. These have the form $\elen$, $\egen$, where the semantics of $\As \models \egen x. \, \psi$ is that there exist at least $n$ distinct elements of $A$ satisfying $\psi$.
%; $\elen$ is dual.

For modal logic, we will  consider $\MLk$, the \emph{modal fragment} of modal depth $\leq k$. This arises from the standard translation of (multi)modal logic into $\Linf$ \cite{blackburn2002modal}. Let us fix a relational vocabulary $\sg$ with symbols of arity $\leq 2$. For each unary symbol $P$, there will be a corresponding propositional variable $p$. Formulas are built from these propositional variables by propositional connectives, and modalities $\boxa$, $\dia$ corresponding to 
relations $\Ralph$  indexed by the binary relation symbols from the given alphabet.
Modal formulas $\vphi$ then admit a translation into formulas $\lsem \vphi \rsem = \psi(x)$ in one free variable. The translation sends propositional variables $p$ to $P(x)$, commutes with the propositional connectives, and  sends $\dia \vphi$ to $\exists y. \, \Ralph(x,y) \AND \psi(y)$, and $\boxa \vphi$ to $\forall y. \, \Ralph(x, y) \rightarrow \psi(y)$, where $\psi(x) = \lsem \vphi \rsem$. This translation is semantics-preserving: given  a $\sg$-structure $\As$ and $a \in A$, then $\As, a \models \vphi$ in the sense of Kripke semantics iff $\As \models \psi(a)$ in the standard model-theoretic sense, where $\psi(x) = \lsem \vphi \rsem$.

We define the modal depth of a modal formula $\vphi$ as the maximum nesting depth of modalities occurring in $\vphi$. $\MLk$ is then the image of the translation of modal formulas of modal depth $\leq k$. The existential positive fragment $\exists\MLk$ arises from the modal sublanguage in which formulas are built from propositional variables using only conjunction, disjunction and the diamond modalities $\dia$.

Extensions of the modal language with counting capabilities have been studied in the form of \emph{graded modalities} \cite{Rijke2000}. These have the form $\dia^n$, $\boxa^n$, where $\As, a \models \dia^n \vphi$ if there are at least $n$ $\Ralph$-successors of $a$ which satisfy $\vphi$. 
We define $\MLk(\Count)$ to be the extension of the modal fragment with graded modalities.

\subsection{Model comparison games}

We review the model comparison games we will be concerned with in this paper. We are given structures $\As$ and $\Bs$.
Each game is a 2-person game, played between Spoiler, who is trying show that the structures are different, and Duplicator, who is trying to show they are the same. Each game is played in a number of rounds:
\begin{itemize}
    \item In the Ehrenfeucht-\Fraisse game, in the $i$'th round Spoiler chooses an element in one of the structures, and Duplicator then chooses an element in the other structure. Thus after $k$ rounds have been played, we have sequences $[a_1, \ldots , a_k]$ and $[b_1, \ldots b_k]$ of elements from $\As$ and $\Bs$ respectively. Duplicator wins  this play if $r := \{ (a_i,b_i) \mid 1 \leq i \leq k \}$ is a partial isomorphism. Duplicator wins the $k$-round game if they have a winning strategy in the usual sense. The resource parameter here is $k$, the number of rounds.
    
    \item In the pebble game, each player has $k$ pebbles available, which can be ``placed'' on elements of either structure. We label the pebbles by integers in $\kset$. In the $i$'th round, Spoiler places some pebble $p_i$ on an element of one of the structures. Note that if $p_i$ was previously placed on some other element, the effect is to remove it from that element, and place it on the newly chosen element.
    Duplicator then places their corresponding pebble $p_i$ on an element of the other structure.
     Thus the current positions of the pebbles determine  ``windows'' of  size bounded by $k$ onto the structures. These windows can slide around over different parts of the structures as moves are played.
     
    After $n$ rounds of the game have been played, we have sequences $[(p_1,a_1), \ldots , (p_n, a_n)]$ and $[(p_1, b_1), \ldots (p_n, b_n)]$ which record the placings of pebbles on elements during the play. The \emph{current placing} of pebble $p$ is the last element in the sequence with first component $p$.  Duplicator wins this play if the relation $r$ determined by the current placings of the pebbles is a partial isomorphism. Duplicator wins the $k$-pebble game if they have a strategy which is winning after $n$ rounds, for all $n$. The resource parameter here is $k$, the number of pebbles.
    
    \item Finally, in the bisimulation game, we have structures over a relational signature with symbols of arity at most 2. The game is played between ``pointed structures'' $(\As, a)$, $(\Bs, b)$, with specified elements $a \in A$ and $b \in B$. We start with elements $(a_0, b_0)$, where $a_0 = a$ and $b_0 = b$. At each round $i+1$, where the current elements are $(a_i, b_i)$, Spoiler chooses one of the structures, e.g.~$\As$, one of the binary relations $R_i$, and an element $a_{i+1}$ such that $R_i^{\As}(a_i, a_{i+1})$. Duplicator must respond in the other structure, in our example in $\Bs$ with $b_{i+1}$, such that $R_i^{\Bs}(b_i, b_{i+1})$. If Duplicator has no such response available, they lose. Duplicator wins after $k$ rounds if, for all unary predicates $P$ in the signature, we have $P^{\As}(a_{i}) \IFF P^{\Bs}(b_i)$ for all $i$ with $0 \leq i \leq k$. The resource parameter here is the number of rounds $k$.
\end{itemize}
There are classical theorems (Ehrenfeucht-\Fraisse \cite{fraisse1955quelques,ehrenfeucht1961application}, Kolaitis-Vardi \cite{kolaitis1992infinitary}, Hennessy-Milner \cite{hennessy1980observing}) which characterize various logical equivalences in terms of these games. These can be summarized as follows:
\begin{theorem}
\label{classicgamesthm}
Let $\As$ and $\Bs$ be structures over a finite vocabulary.\footnote{In the modal case, this means that there  finite numbers both of modalities, and of propositional atoms.}
\begin{enumerate}
    \item Duplicator has a winning strategy in the $k$-round Ehrenfeucht-\Fraisse game from $\As$ to $\Bs$ iff $\As$ and $\Bs$ satisfy the same sentences of quantifier rank $\leq k$.
    
    \item Duplicator has a winning strategy in the $k$-pebble game from $\As$ to $\Bs$ iff $\As$ and $\Bs$ satisfy the same sentences of $k$-variable logic.
    
    \item Duplicator has a winning strategy in the $k$-round bisimulation game from $(\As, a)$ to $(\Bs, b)$ iff $(\As, a)$ and $(\Bs, b)$ satisfy the same modal formulas of depth $\leq k$.
\end{enumerate}
\end{theorem}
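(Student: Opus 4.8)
The plan is to prove all three equivalences from a common template: for each game position, construct a single \emph{characteristic} (Hintikka) formula in the relevant fragment that holds of a position exactly when Duplicator wins from it, and then induct on the resource parameter, with the inductive step mirroring the forth/back structure of the game. Finiteness of the vocabulary is the crucial hypothesis, since it guarantees that at each resource level there are only finitely many inequivalent formulas, so that these characteristic formulas are genuinely formulas of the fragment (finite conjunctions and disjunctions).

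For the Ehrenfeucht--\Fraisse case (1), I would define for each structure $\As$, each tuple $\va$ of length $m$, and each round count $r$, a formula $\varphi^r_{\As,\va}(\vx)$ of quantifier rank $r$ in $m$ free variables. At $r = 0$ it is the atomic type of $\va$, the conjunction of all literals over $\sg$ satisfied by $\va$, which is finite because $\sg$ is finite. The inductive step is
\[ \varphi^{r+1}_{\As,\va}(\vx) \; := \; \bigwedge_{a \in A} \exists y.\, \varphi^{r}_{\As,\va a}(\vx, y) \;\AND\; \forall y. \bigvee_{a \in A} \varphi^{r}_{\As,\va a}(\vx, y), \]
where the existential conjunct records Spoiler playing in $\As$ (``forth'') and the universal conjunct records Duplicator's guaranteed response (``back''). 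A subsidiary induction shows that, up to logical equivalence, there are finitely many formulas of quantifier rank $\leq r$ in $m$ free variables, which collapses the indexed conjunctions and disjunctions to finite ones. The central lemma, by induction on $r$, is the equivalence of three conditions: Duplicator wins the $r$-round game from $(\va, \vec{b})$; $\Bs \models \varphi^{r}_{\As,\va}(\vec{b})$; and $(\As, \va)$ and $(\Bs, \vec{b})$ satisfy the same formulas of quantifier rank $\leq r$. Taking $\va, \vec{b}$ empty and $r = k$ gives (1): in particular, if the two structures agree on all rank-$k$ sentences then $\Bs$ must satisfy $\As$'s characteristic sentence $\varphi^{k}_{\As}$, forcing a win.

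The bisimulation case (3) follows the same pattern with modal depth replacing quantifier rank. I would set the characteristic modal formula $\chi^{d}_{\As,a}$ of a pointed structure to be, at depth $0$, the propositional type $\bigwedge\{p : P^{\As}(a)\} \AND \bigwedge\{\neg p : \neg P^{\As}(a)\}$, and at depth $d+1$ the conjunction of this with a diamond conjunct $\bigwedge_{a' : \Ralph(a,a')} \dia \chi^{d}_{\As,a'}$ (forth) and a box conjunct $\boxa \bigvee_{a' : \Ralph(a,a')} \chi^{d}_{\As,a'}$ (back), one pair for each binary symbol. Finiteness of the modal signature bounds the number of inequivalent formulas of each depth, and the same induction yields that $(\Bs, b) \models \chi^{d}_{\As,a}$ iff Duplicator wins the $d$-round bisimulation game iff the two pointed structures agree on all modal formulas of depth $\leq d$.

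The pebble case (2) is where the template must be adapted, and I expect it to be the main obstacle. Since pebbles may be lifted and replaced, the game can run for unboundedly many rounds, so the downward induction on a round counter is unavailable; the only relevant state is the current placement, a partial map of domain size $\leq k$. I would instead characterize a Duplicator win by the existence of a nonempty family of partial isomorphisms of size $\leq k$ enjoying the $k$-pebble back-and-forth property---closed under deleting a pair and then extending on either side---that is, by the \emph{greatest} such family rather than a level-indexed sequence. The matching logical invariant is the $k$-variable fragment, where reuse of the variables $x_1, \dots, x_k$ plays the role of reusing pebbles, and the forth and back closure conditions correspond to the existential and universal quantifier steps within those $k$ variables. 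The precise match is with infinitary $k$-variable logic $\mathcal{L}^{k}_{\infty,\omega}$, which agrees with $\Lvk$ on finite structures, where the finitely many realized $k$-types again make the Hintikka construction finitary. The care needed is exactly in replacing the well-founded induction on rounds by this fixed-point argument and in checking that reusing variables faithfully mirrors reusing pebbles.
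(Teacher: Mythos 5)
The paper does not prove this theorem: it is stated as classical background, attributed to \Fraisse and Ehrenfeucht, to Kolaitis and Vardi, and to Hennessy and Milner, with pointers to the textbook literature, so there is no in-paper argument to compare against. Your proposal is the standard textbook proof and is essentially correct: rank-indexed Hintikka formulas whose two conjuncts encode the forth and back moves for (1); the depth-indexed modal analogue for (3), where, as you note, finiteness of the vocabulary rather than image-finiteness is what bounds the number of inequivalent formulas at each depth and keeps the characteristic formulas finitary; and the replacement of round-induction by a nonempty back-and-forth family of partial isomorphisms of size $\leq k$ for (2). The one point worth aligning with the paper's conventions concerns (2): the paper defines $\Lvk$ as the $k$-variable fragment of $\Linf$, i.e.\ the \emph{infinitary} $k$-variable logic, under which reading the theorem holds for arbitrary structures and your greatest-fixed-point argument applies directly (with possibly infinitary characteristic formulas); the detour through finite structures is only needed if one insists on the finitary $k$-variable fragment, for which the direction from sentence-equivalence to a winning strategy genuinely fails on infinite structures. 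Either way the core argument is the right one, and the remaining work is the routine verification, which your sketch correctly identifies.
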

We refer to \cite{Libkin2004,ebbinghaus2005finite,blackburn2002modal} for textbook accounts and extensive bibliographies.

The above games are symmetric or ``back-and-forth'', since Spoiler can choose which structure to play in at each round.
There are also asymmetric (forth only) or \emph{existential} versions, in which Spoiler always plays in $\As$, and Duplicator always responds in $\Bs$. The winning conditions are modified for these existential versions:
\begin{itemize}
    \item For the existential Ehrenfeucht-\Fraisse game, the winning condition is that $r$ is a partial homomorphism.
    \item Similarly for the existential $k$-pebble game.
    \item For the $k$-round \emph{simulation game}, as the existential version of the bisimulation game is called, the winning condition is that, for all unary predicates $P$, $P(a_i) \IMP P(b_i)$ for all $i$ with $0 \leq i \leq k$.
\end{itemize}
For these existential games, we obtain a corresponding version of Theorem~\ref{classicgamesthm}.
\begin{theorem}
\label{classicposgamesthm}
Let $\As$ and $\Bs$ be structures over a finite vocabulary.
\begin{enumerate}
    \item Duplicator has a winning strategy in the existential $k$-round Ehrenfeucht-\Fraisse game from $\As$ to $\Bs$ iff for every existential-positive sentence $\vphi$ of quantifier rank $\leq k$, $\As \models \vphi$ implies $\Bs \models \vphi$. 
    
    \item Duplicator has a winning strategy in the existential $k$-pebble game from $\As$ to $\Bs$ iff for every existential-positive sentence $\vphi$ of $k$-variable logic, $\As \models \vphi$ implies $\Bs \models \vphi$.
    
    \item Duplicator has a winning strategy in the $k$-round simulation game from $(\As, a)$ to $(\Bs, b)$ iff for every existential-positive sentence $\vphi$ of the modal fragment with modal depth $\leq k$, $\As \models \vphi$ implies $\Bs \models \vphi$.
\end{enumerate}
\end{theorem}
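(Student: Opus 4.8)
The plan is to prove all three parts by a single uniform method: positive existential \emph{characteristic formulas}, the existential-positive analogue of Hintikka formulas, defined by recursion on the number of rounds and tailored to the legal moves of each game. For a play $p$ in $\As$ (a tuple of elements for the Ehrenfeucht-\Fraisse game, a partial $k$-pebble assignment for the pebble game, or a single point for the modal game) and a round-count $m$, I would define a formula $\vphi^{m}_{\As,p}$ in the appropriate fragment, with the two features that (i) $\As$ satisfies $\vphi^{m}_{\As,p}$ at $p$ by construction, and (ii) $\vphi^{m}_{\As,p}$ records exactly the positive information Spoiler can extract in $m$ further rounds. Concretely, $\vphi^{0}_{\As,p}$ is the conjunction of the atomic facts holding at $p$ (relations and equalities in the first two games, the unary predicates holding at the point in the modal case), and the recursive clause inserts, for each legal Spoiler move in $\As$, one existential quantifier guarding the deeper formula: $\exists y\,\vphi^{m}_{\As,pa}$ for each $a \in A$ in the Ehrenfeucht-\Fraisse case; a rebinding $\exists x_{q}\,\vphi^{m}_{\As,p[q \mapsto a]}$ for each pebble $q \in \kset$ and target $a$ in the pebble case; and the diamond formula $\dia\,\vphi^{m}_{\As,a'}$ for each $\Ralph$-successor $a'$ in the modal case. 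By inspection $\vphi^{m}_{\As,p}$ lies in $\ELk$, $\ELvk$, and $\exists\MLk$ respectively.

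For the direction ``game $\IMP$ preservation'' I would argue directly, by well-founded induction on the syntax of an arbitrary formula $\psi$ of the fragment, that if Duplicator wins the $m$-round existential game from a position $(p,q)$ --- or, in the pebble case, the unbounded $k$-pebble game --- and $\As \models \psi$ at $p$ with $\psi$ of quantifier rank, variable count, or modal depth within the bound, then $\Bs \models \psi$ at $q$. Atomic formulas transfer because the round-$0$ winning condition is precisely that the current map is a partial homomorphism (respectively that $P(a) \IMP P(b)$ for every unary $P$); conjunction and disjunction are immediate; and the only substantive clause is the existential, where a witness for $\psi$ in $\As$ is a Spoiler move, Duplicator's strategy returns a matching witness in $\Bs$ leaving a winning position, and the induction hypothesis applies. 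Taking $\psi$ arbitrary and $(p,q)$ the initial position with $m=k$ yields one half of each statement.

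For the converse ``preservation $\IMP$ game'' I would apply the preservation hypothesis to $\As$'s own characteristic sentence. The key is a back-and-forth lemma, proved by induction on $m$: Duplicator wins the $m$-round existential game from $(p,q)$ if and only if $\Bs$ satisfies $\vphi^{m}_{\As,p}$ at $q$. The inductive step simply unwinds the recursive clause above, since a Duplicator response to a Spoiler move from $p$ to $p'$ is exactly a witness for the corresponding existential quantifier (or diamond) satisfying $\vphi^{m-1}_{\As,p'}$. As $\As$ satisfies the sentence $\vphi^{k}_{\As,\,\mathrm{init}}$ and this lies in the fragment, preservation gives $\Bs \models \vphi^{k}_{\As,\,\mathrm{init}}$, whence the lemma delivers Duplicator's winning strategy. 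Finiteness of the vocabulary enters here, ensuring that at each bounded depth there are only finitely many inequivalent positive types, so that the conjunctions indexed over $A$ collapse to finitary formulas in the Ehrenfeucht-\Fraisse and modal cases.

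The step I expect to be the main obstacle is the pebble game, where the resource bounds the number of \emph{variables} rather than rounds and the game runs for unboundedly many rounds. The characteristic formula is then genuinely infinitary, living in the $k$-variable fragment of $\Linf$, and the recursion on $m$ no longer terminates. The delicate point is that for arbitrary structures ``Duplicator survives every finite number of rounds'' is a priori weaker than ``Duplicator has a single strategy for the unbounded game''. I would handle this by characterising the winning positions coinductively, as the largest $k$-pebble forth-system containing the initial position, and expressing this greatest fixpoint as an infinitary conjunction in $\ELvk$, so that the back-and-forth lemma matches the fixpoint directly rather than each finite approximant; for finite structures the positions form a finite state space and a König's-lemma argument already collapses the two notions.
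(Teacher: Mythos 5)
The paper does not actually prove this theorem: it is stated as background, with the Ehrenfeucht-\Fraisse case attributed to Kolaitis--Vardi (the paper later cites this explicitly when proving Theorem~\ref{thm:mainEF}), the pebble case to the Kolaitis--Vardi analysis of the existential $k$-pebble game, and the modal case to standard simulation results as in Blackburn et al. So there is no in-paper argument to compare against; what you have written is a reconstruction of the standard literature proof. As such it is essentially correct: the existential-positive characteristic formulas are the right device, the syntactic induction for ``game $\Rightarrow$ preservation'' is routine (note that equality atoms are preserved but not reflected, which is exactly why the winning condition is \emph{partial homomorphism} rather than partial isomorphism), and your observation that finiteness of the vocabulary collapses the conjunctions over $A$ to finitary formulas is the key point that makes parts (1) and (3) work for arbitrary structures.

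The one place where your sketch is thinner than it should be is the step you yourself flag, the converse direction for the pebble game. Saying that you will ``express the greatest fixpoint as an infinitary conjunction so that the back-and-forth lemma matches the fixpoint directly'' elides the actual difficulty: from $\Bs \models \bigwedge_{\alpha} \vphi^{\alpha}_{\As,p}$ at $q$ you must extract, for each Spoiler move $p \to p'$, a \emph{single} response $q'$ witnessing $\vphi^{\alpha}_{\As,p'}$ for \emph{all} $\alpha$ simultaneously, whereas the conjunction only gives a witness per stage. The standard repair is to note that the transfinite sequence of sets $S_{\alpha} = \{(p,q) \mid \Bs \models \vphi^{\alpha}_{\As,p} \text{ at } q\}$ is decreasing in a fixed set of positions, hence stabilizes at some ordinal $\gamma$, and $S_{\gamma} = S_{\gamma+1}$ is then literally a forth-system; your K\"onig's-lemma remark handles only the finite case. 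With that stabilization argument supplied, the proof is complete and matches what the cited sources do.
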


\subsection{Comonads}
\label{comonadsec}

We recall that a comonad $(G, \varepsilon, \delta)$ on a category $\CC$ is given by a functor $G : \CC \rarr \CC$, and natural transformations $\varepsilon : G \Rightarrow I$ (the counit), and $\delta : G \Rightarrow G^2$ (the comultiplication), subject to the conditions that the following diagrams commute, for all objects $A$ of $\CC$:
%\[ 
\begin{center}
\begin{tikzcd}
G A \ar[r, "\delta_A"]  \ar[d, "\delta_A"']
& G G A \ar[d, "G \delta_A"] \\
G G A \ar[r, "\delta_{G A}"]  
& G G G A
\end{tikzcd}
$\qquad \qquad$
\begin{tikzcd}
G A \ar[r, "\delta_A"]  \ar[rd,equal] \ar[d, "\delta_A"'] 
& G G A \ar[d, "G \epsA"] \\
G G A \ar[r, "\varepsilon_{GA}"] 
& G A
\end{tikzcd}
\end{center}
%\]

An equivalent formulation is  \emph{comonad in Kleisli form} \cite{manes2012algebraic}. This is given by an object map $G$, arrows $\varepsilon_A : GA \rarr A$ for every object $A$ of $\CC$, and a Kleisli coextension operation which takes $f : GA \rarr B$ to $f^* : GA \rarr GB$. These must satisfy the following equations:
\[ \varepsilon_{A}^* = \id_{\Gd A}, \qquad \varepsilon_{B} \circ f^* = f, \qquad (g \circ f^*)^* = g^* \circ f^* , \]
where $g : GB \to C$.
We can then extend $G$ to a functor by $\Gd f = (f \circ \varepsilon_{A})^*$; and if we define the comultiplication $\delta : \Gd \Rightarrow \Gd^2$ by $\delta_{A} = \id_{GA}^*$, then $(\Gd, \varepsilon, \delta)$ is a comonad in the standard sense.
Conversely, given a comonad $(\Gd, \varepsilon, \delta)$, we can define the coextension by $f^* = Gf \circ \delta_A$.
This allows us to define the coKleisli category $\Kl(G)$, with objects the same as those of $\CC$, and morphisms from $A$ to $B$ given by the morphisms in $\CC$ of the form $GA \rarr B$. Kleisli composition of $f : GA \rarr B$ with $g : GB \rarr C$ is given by $g \Kcomp f \, := \, g \circ f^*$.

\section{Game Comonads}

We shall now define the three comonads we shall study, corresponding to the three forms of model comparison game described in the previous section.

\subsection{The Ehrenfeucht-\Fraisse Comonad}

We shall define a comonad $\Ek$ on $\CS$ for each positive integer $k$. It will be convenient to define $\Ek$ in Kleisli form.
For each structure $\As$, we define a new structure $\Ek \As$, with universe $\Ek A \, := \, \Alk$. We define the map $\epsA : \Ek A \rarr A$ by $\epsA [a_1, \ldots , a_j ] = a_j$. For each relation symbol $R$ of arity $n$, we define $\REk$ to be the set of $n$-tuples $(s_1, \ldots , s_n)$ of sequences which are pairwise comparable in the prefix ordering, and such that $\RA(\epsA s_1, \ldots , \epsA s_n)$. Finally, we define the coextension. Given a homomorphism $f : \Ek \As \rarr \Bs$, we define $f^* : \Alk \rarr \Blk$ by $f^* [a_1, \ldots , a_j ] = [b_1, \ldots , b_j]$, where $b_i = f [a_1, \ldots , a_i]$, $1 \leq i \leq j$.

\begin{proposition}\label{efcomonprop}
The triple $(\Ek, \varepsilon, (\cdot)^*)$ is a comonad in Kleisli form.
\end{proposition}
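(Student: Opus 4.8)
The plan is to verify the three coextension equations from the definition of a comonad in Kleisli form, and then separately confirm that the coextension $f^*$ actually lands in the category $\CS$, i.e.\ that it is a homomorphism $\Ek\As \rarr \Ek\Bs$ whenever $f : \Ek\As \rarr \Bs$ is one. Most of the work is a direct unwinding of the definition $f^*[a_1,\ldots,a_j] = [b_1,\ldots,b_j]$ with $b_i = f[a_1,\ldots,a_i]$, so I would organise the argument around the observation that $f^*$ acts ``prefix-wise'': the $i$-th entry of $f^*(s)$ depends only on the length-$i$ prefix of $s$. This single structural fact drives all three equations.

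First I would check $\varepsilon_A^* = \id_{\Ek A}$. Unwinding, $\varepsilon_A^*[a_1,\ldots,a_j]$ has $i$-th entry $\varepsilon_A[a_1,\ldots,a_i] = a_i$, so the output is $[a_1,\ldots,a_j]$, as required. Next, $\varepsilon_B \circ f^* = f$: applying $\varepsilon_B$ to $f^*[a_1,\ldots,a_j]$ extracts its last entry, which is $b_j = f[a_1,\ldots,a_j]$. The one genuinely multi-step equation is the composition law $(g \circ f^*)^* = g^* \circ f^*$ for $g : \Ek\Bs \rarr \Cs$. Here I would compute the $i$-th entry of each side on an input $s = [a_1,\ldots,a_j]$. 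On the left, the $i$-th entry is $(g \circ f^*)[a_1,\ldots,a_i] = g(f^*[a_1,\ldots,a_i])$. On the right, $g^* \circ f^*$ first forms $t = f^*(s) = [b_1,\ldots,b_j]$ and then takes the $i$-th entry of $g^*(t)$, which is $g[b_1,\ldots,b_i]$. The key point is that the prefix-wise definition of $f^*$ guarantees $f^*[a_1,\ldots,a_i] = [b_1,\ldots,b_i]$, the length-$i$ prefix of $t$; so the two sides agree entrywise. I would state this prefix-compatibility $(f^*(s))$ restricted to its first $i$ entries $= f^*$ of the first $i$ entries of $s$ as the crux lemma and then read off the equation.

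The step I expect to require the most care is the standing obligation that $f^*$ is a well-defined morphism of $\CS$, since the Kleisli equations implicitly presuppose this. I would verify that $f^*(s) \in \Blk$ (immediate, as $f^*$ preserves length and $j \leq k$) and, more substantively, that $f^*$ preserves each relation $R^{\Ek\As}$. So suppose $(s_1,\ldots,s_n) \in R^{\Ek\As}$; by definition the $s_i$ are pairwise $\pref$-comparable and $\RA(\epsA s_1,\ldots,\epsA s_n)$ holds. Prefix-comparability of the $s_i$ together with the prefix-wise definition forces $f^*(s_1),\ldots,f^*(s_n)$ to be pairwise $\pref$-comparable as well, and applying the homomorphism $f$ at the appropriate common positions, using $\varepsilon_B \circ f^* = f$ to identify the last entries, yields $\RB(\epsB f^*(s_1),\ldots,\epsB f^*(s_n))$. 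This is the only place where the homomorphism property of $f$ and the precise shape of $R^{\Ek\As}$ are both used, so I would present it as the technical heart of the proposition; the three equations are then essentially formal consequences of the prefix-wise structure described above.
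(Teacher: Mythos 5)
Your proof is correct, and its substantive content coincides with the paper's: the technical heart in both cases is the verification that $f^*$ preserves each relation $R^{\Ek\As}$, using exactly the two facts you isolate (prefix-comparability of the $s_i$ is preserved because $f^*$ acts prefix-wise, and $\RB(\epsB f^*(s_1),\ldots,\epsB f^*(s_n))$ follows from $f$ being a homomorphism together with $\epsB \circ f^* = f$). Where you diverge is in the first half: the paper does not verify the three Kleisli equations at all, but instead cites the known fact that non-empty lists form a comonad on $\Set$, adapts it to lists of length $\leq k$, and treats the passage to $\CS$ as the only thing requiring proof. Your explicit entrywise verification, organised around the prefix-compatibility lemma $f^*$ of a prefix $=$ the corresponding prefix of $f^*(s)$, is more self-contained and makes visible exactly which structural property of the coextension drives all three equations; the paper's route is shorter but leaves that to the reader or the reference. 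One small omission on your side: the Kleisli-form data also requires each counit $\epsA : \Ek\As \to \As$ to be a morphism of $\CS$, i.e.\ a homomorphism, and the paper checks this explicitly (it is immediate, since $\REk(s_1,\ldots,s_n)$ includes the condition $\RA(\epsA s_1,\ldots,\epsA s_n)$ by definition). You should add that one line alongside your verification that $f^*$ is a homomorphism.
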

\begin{proof}
It is well known that non-empty lists form a comonad on $\Set$ \cite{ahman2012container}. This is easily adapted to showing that non-empty lists of length $\leq k$ form a comonad. To lift this to $\CS$, we must show that each $\varepsilon_{\As}$ is a homomorphism, and that if $f : \Ek \As \to \Bs$ is a homomorphism, so is $f^*$. The first follows directly from the definition of $\REk$ for each $R$ in $\sg$.
For the second, if $\REk(s_1,\ldots,s_n)$, then $s_i \comp s_j$ implies $f^*(s_i) \comp f^*(s_j)$, and since $f$ is a homomorphism, $\REk( s_1, \ldots , s_n)$ implies $\RB(\varepsilon f^*(s_1),\ldots , \varepsilon f^*(s_n))$, so $R^{\Ek \Bs}(f^*(s_1),\ldots , f^*(s_n))$ as required.
\end{proof}
Intuitively, an element of $\Alk$ represents a play in $\As$ of length $\leq k$. A coKleisli morphism $\Ek \As \rarr \Bs$ represents a Duplicator strategy for the existential Ehrenfeucht-\Fraisse game with $k$ rounds, where Spoiler plays only in $\As$, and $b_i = f [a_1, \ldots , a_i]$ represents Duplicator's response in $\Bs$ to the $i$'th move by Spoiler. The winning condition for Duplicator in this game is that, after $k$ rounds have been played, 
%for $1 \leq j \leq k$, 
the \textit{induced relation} $\{ (a_i, b_i) \mid 1 \leq i \leq k \}$ is a partial homomorphism from $\As$ to $\Bs$. %%
%% Bibliography
%%

%% Please use bibtex, 
These intuitions are confirmed by the following result.
\begin{theorem}
\label{EFgamethm}
The following are equivalent:
\begin{enumerate}
\item There is a homomorphism $f : \Ek \As \rarr \Bs$.
\item Duplicator has a winning strategy for the existential Ehrenfeucht-\Fraisse game with $k$ rounds, played from $\As$ to $\Bs$.
%\item For every existential positive sentence $\vphi$ with quantifier rank $\leq k$, $\As \models \vphi \IMP \Bs \models \vphi$.
\end{enumerate}
\end{theorem}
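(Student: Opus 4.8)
The plan is to prove the equivalence by exhibiting an explicit, bijective-in-spirit translation between homomorphisms $f : \Ek \As \rarr \Bs$ and winning Duplicator strategies, reading the strategy directly off the structure of $\Ek \As = \Alk$. The key observation is that an element $[a_1,\ldots,a_j] \in \Alk$ is exactly a record of a partial play in which Spoiler has chosen $a_1,\ldots,a_j$ in $\As$; the prefix ordering on $\Alk$ mirrors the temporal unfolding of the game, and the counit $\epsA[a_1,\ldots,a_j] = a_j$ recovers Spoiler's latest move.

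First I would prove $(1) \Rightarrow (2)$. Given a homomorphism $f$, define Duplicator's strategy by responding, after Spoiler has played the sequence $[a_1,\ldots,a_i]$, with $b_i := f[a_1,\ldots,a_i] \in B$. This is well-defined and history-dependent in exactly the way a strategy should be. The content of the claim is that this strategy is \emph{winning}, i.e. that after any play $[a_1,\ldots,a_k]$, the induced relation $r = \{(a_i,b_i) \mid 1 \leq i \leq k\}$ is a partial homomorphism. To check this, suppose $\RA(a_{i_1},\ldots,a_{i_n})$ holds for indices $i_1,\ldots,i_n$ with corresponding prefixes $s_t := [a_1,\ldots,a_{i_t}]$. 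These prefixes are pairwise comparable (they are all initial segments of the single sequence $[a_1,\ldots,a_k]$) and satisfy $\epsA s_t = a_{i_t}$, so by definition of $\REk$ we get $\REk(s_1,\ldots,s_n)$. Since $f$ is a homomorphism, $\RB(f s_1,\ldots,f s_n) = \RB(b_{i_1},\ldots,b_{i_n})$, which is exactly the partial-homomorphism condition for $r$. (One should also note that $r$ is single-valued on its domain, which follows since $b_i$ is determined by the prefix $[a_1,\ldots,a_i]$, though distinct indices may carry the same $\As$-element; the standard treatment of these games allows repeated moves, and the winning condition is the relational one just verified.)

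For $(2) \Rightarrow (1)$, I would invert the construction. A winning Duplicator strategy assigns to every Spoiler play $[a_1,\ldots,a_j]$ (with $1 \leq j \leq k$) a response $b_j \in B$ depending only on the play so far; define $f[a_1,\ldots,a_j] := b_j$. This is a total function $\Alk \rarr B$, and the task is to verify it is a homomorphism, i.e. $\REk(s_1,\ldots,s_n) \Rightarrow \RB(fs_1,\ldots,fs_n)$. Given $\REk(s_1,\ldots,s_n)$, the sequences $s_1,\ldots,s_n$ are pairwise prefix-comparable, hence all prefixes of a common longest sequence $s$, which represents a single legal Spoiler play of length $\leq k$; the winning condition applied to this play guarantees that the induced relation is a partial homomorphism, and since $\REk(s_1,\ldots,s_n)$ encodes $\RA(\epsA s_1,\ldots,\epsA s_n)$, partial-homomorphism preservation yields $\RB(fs_1,\ldots,fs_n)$ as required.

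The main obstacle, and the only genuinely delicate point, is the bookkeeping around comparability and the precise matching of the combinatorial relation $\REk$ with the game's winning condition: one must be careful that ``pairwise comparable prefixes of a common play'' correctly captures which tuples the winning condition constrains, and that repeated or out-of-order indices are handled. Everything else is a routine translation. I would therefore keep the two directions tightly parallel, emphasising in each that the defining clause of $\REk$ — pairwise prefix-comparability together with $\RA$ holding on the counit images — is exactly the data needed to invoke, respectively, the homomorphism property of $f$ and the winning condition of the strategy.
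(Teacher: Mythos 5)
Your translation between coKleisli morphisms and strategies is the same one the paper uses, and your $(2)\Rightarrow(1)$ direction matches the paper's argument essentially step for step. However, your $(1)\Rightarrow(2)$ direction has a genuine gap: the single-valuedness of the induced relation $r$. The paper defines a partial homomorphism to be a finite relation $r \subseteq A \times B$ that is \emph{single-valued on its domain}, and this is part of Duplicator's winning condition in the existential game. If Spoiler repeats an element, say $a_i = a_j$ with $j < i$, your strategy responds with $f[a_1,\ldots,a_j]$ and $f[a_1,\ldots,a_i]$, and nothing forces these to agree: $f$ is merely a homomorphism, and the structure $\Ek \As$ places no constraint tying together two prefixes with the same last element. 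So $r$ may contain $(a_i,b)$ and $(a_i,b')$ with $b \neq b'$, and Duplicator loses. Your parenthetical claim that single-valuedness ``follows since $b_i$ is determined by the prefix'' is a non sequitur --- determination by the prefix is precisely what allows the responses to differ when the underlying elements of $A$ coincide --- and the fallback to ``the winning condition is the relational one'' contradicts the definition in force.

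The repair is exactly what the paper does: modify the strategy so that on a repeated element Duplicator repeats the earlier answer ($b_i := b_j$ if $a_i = a_j$ for some $j < i$, and $b_i := f[a_1,\ldots,a_i]$ otherwise), and in the relation-preservation check apply $f$ to the \emph{minimal} prefix of the play ending in each $a_{i_t}$, so that the values fed into the homomorphism property of $f$ really are the $b$'s that were played. (Alternatively, one imposes the $I$-morphism condition of Section~\ref{Imorsec}, which forces $f$ to agree on comparable prefixes with equal last elements; this is how the paper later sharpens the equivalence to a bijective correspondence in Theorem~\ref{EFgameeqthm}.) With that adjustment your argument goes through, and the remainder of what you wrote --- in particular the observation that the defining clause of $\REk$ is exactly the data needed to invoke the homomorphism property of $f$ in one direction and the winning condition in the other --- coincides with the paper's proof.
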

\begin{proof}
 $(1) \Rightarrow (2)$. Suppose Spoiler plays $a_{1},\dots,a_{k}$, then for each $i \in \{1,\dots,k\}$, Duplicator responds with $b_i = b_j$ if $a_j = a_i$ for some $j < i$ \footnote{We could eliminate this case by imposing an additional \textit{I-morphism} condition on the homomorphisms of type $\Ek \As \rarr \Bs$. This condition is discussed in Section~\ref{Imorsec}.} or $b_i = f[a_1,\dots,a_i]$ otherwise. We must show that $\gamma = \{(a_{i},b_{i}) \mid 1 \leq i \leq k \}$ is a partial homomorphism. By construction, $b_i = b_j$ if $a_i = a_j$, so $\gamma$ is a well-defined partial function.  Let $R \in \sg$ and suppose $R^{\As}(a_{i_1},\dots,a_{i_n})$ with $i_{j}  \in \{1,\dots,k\}$ for all $j \leq n$. Let $s_{i_{j}}$ be the minimal subsequence of $s = [a_1,\dots, a_k]$ that ends in $a_{i_{j}}$. Since each $s_{i_{j}}$ is a subsequence of $s$, each pair of $s_{i_{j}}$ is prefix comparable and by definition, $\epsA(s_{i_{j}}) = a_{i_{j}}$, so $R^{\Ek\As}(s_{i_{1}},\dots,s_{i_{n}})$. By $f$ being a homomorphism, $R^{\Bs}(b_{i_{1}},\dots,b_{i_{n}})$, so $\gamma$ is a partial homomorphism.

$(2) \Rightarrow (1)$. Conversely, suppose Duplicator has a winning strategy in the $n$-round Ehrenfeucht-\Fraisse game (with $n \leq k$). For every possible set of Spoiler moves in the $n$-round game, i.e. for every sequence $s = [a_1,\dots,a_n]$, there exists a sequence $t = [b_1,\dots,b_n]$ such that $\gamma = \{(a_i,b_i) \mid 1 \leq i \leq n \}$ is a partial homomorphism. Let $\gamma_{s}$ be the partial homomorphism resulting from Duplicator's responses to Spoiler's play $s$. We define $f:\Ek\As \rightarrow \Bs$ by $f(s_i) = \gamma_s(a_i)$ where $s_i \sqsubseteq s$ ending in $a_i$. Since $\gamma_s$ is a partial homomorphism for all $s \in \Ek\As$, $f$ is a homomorphism. Namely, suppose $s_1,\dots,s_m $ are such that $R^{\Ek\As}(s_1,\dots,s_m)$. By the pairwise comparability condition, there exists some $\sqsubseteq$-greatest $s = [a_1,\dots,a_{l}]$ with $l \leq n$ amongst the $s_i$. Since $s_i \sqsubseteq s$, the last move of $s_i$, $\ve_{A}(s_i)$ must be in $\{a_1,\dots,a_{l}\}$. By the compatiblity condition, we have that $R^{\As}(\ve_{A}(s_1),\dots,\ve_{A}(s_m))$. Hence, since $\gamma_{s}$ is a partial homomorphism, $R^{\Bs}(\gamma_{s}(\epsA(s_1)),\dots,\gamma_{s}(\epsA(s_m)))$. Since $f(s_i) = \gamma_{s}(\epsA(s_i))$, we have that $R^{\Bs}(f(s_1),\dots,f(s_m))$.  
\end{proof}

\subsection{The Pebbling Comonad}

We now turn to the case of pebble games. The following construction appeared in \cite{abramsky2017pebbling}.
Given a structure $\As$, we define $\Tk \As$, which will represent plays of the $k$-pebble game on $\As$.\footnote{In \cite{abramsky2017pebbling} we used the notation $\mathbb{T}_k$ for this comonad.} The universe is  $(\kset \times A)^{+}$, the set of finite non-empty sequences of moves $(p, a)$, where $p \in \kset$ is a pebble index, and $a \in A$. We shall use the notation $s = [ (p_1, a_1), \ldots , (p_n, a_n)]$ for these sequences, which may be of arbitrary length. Thus the universe of $\Tk \As$ is always infinite, even if $\As$ is a finite structure. This is unavoidable, by 
\cite[Theorem 7]{abramsky2017pebbling}, which shows that no finite construction is equivalent to $\Tk$, even at the level of the preorder collapse of the coKleisli category. We  define $\epsA : \Tk A \rarr A$ to send a play  $[ (p_1, a_1), \ldots , (p_n, a_n)]$ to $a_n$, the $A$-component of its last move.

Given an $n$-ary relation $R \in \sg$, we define $R^{\Tk \As}(s_1, \ldots , s_n)$ iff (1) the $s_i$ are pairwise comparable in the prefix ordering;
%, and hence form a chain with greatest element $s = s_j$ for some $j$; 
(2) the pebble index of the last move in each $s_i$ does not appear in the suffix of $s_i$ in $s_j$ for any $s_j \prefgt s_i$; and (3) $R^{\As}(\epsA(s_1), \ldots , \epsA(s_n))$.

Note that this differs from the definition of $\REk$ just in condition (2). Intuitively, Spoiler moves by placing one from a fixed set of pebbles on an element of $A$; Duplicator responds by placing their matching pebble on an element of the other structure.
Thus there is a ``sliding window'' on the structures, of fixed size. It is this size which bounds the resource, not the length of the play. In particular, placing a pebble on an element implies removing it from the element on which it was previously placed. Condition (2) enforces that only the \emph{current} position of a pebble is taken into account when deciding whether elements are related.

Finally, given a homomorphism $f : \Tk \As \rarr \Bs$, we define $f^* : \Tk A \rarr \Tk B$ by \\
%$f^* [(p,a)] = [(p, f [(p,a)])]$, $f^* (s[(p,a)]) = f^* (s) [ (p, f (s[(p,a)]))]$.
$f^*  [ (p_1, a_1), \ldots , (p_j, a_j)] = [(p_1, b_1), \ldots , (p_j, b_j)]$,
where $b_i = f [(p_1, a_1), \ldots , (p_i, a_i)]$, $1 \leq i \leq j$.

\begin{proposition}\label{pebbcomonprop}
The triple $(\Tk, \varepsilon, (\cdot)^*)$ is a comonad in Kleisli form.
\end{proposition}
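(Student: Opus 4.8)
The plan is to follow the template of the proof of Proposition~\ref{efcomonprop}. At the level of $\Set$, the triple $(\Tk, \varepsilon, (\cdot)^*)$ is again a variant of the non-empty list comonad, in which each entry of a play $[(p_1,a_1),\dots,(p_j,a_j)]$ carries a pebble index that is transported passively: $\epsA$ discards all but the last $A$-component, while $f^*$ replaces each prefix $[(p_1,a_1),\dots,(p_i,a_i)]$ by its $f$-image but copies the indices $p_1,\dots,p_j$ verbatim. The three Kleisli identities $\epsA^* = \id_{\Tk A}$, $\epsB \circ f^* = f$, and $(g \circ f^*)^* = g^* \circ f^*$ are therefore verified by the same routine computations as for lists; one unwinds the definitions on a generic sequence and observes that the pebble indices are unaffected throughout. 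As in Proposition~\ref{efcomonprop}, the real content is the lift of this structure from $\Set$ to $\CS$, namely checking that $\epsA$ is a homomorphism and that $f^*$ is a homomorphism whenever $f$ is.

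That $\epsA : \Tk \As \rarr \As$ is a homomorphism is immediate from clause (3) of the definition of $R^{\Tk \As}$: if $R^{\Tk \As}(s_1,\dots,s_n)$ then $R^{\As}(\epsA(s_1),\dots,\epsA(s_n))$ by definition. For the coextension, suppose $f : \Tk \As \rarr \Bs$ is a homomorphism and $R^{\Tk \As}(s_1,\dots,s_n)$; I must establish the three clauses of $R^{\Tk \Bs}(f^*(s_1),\dots,f^*(s_n))$. Clause (1) follows because $f^*$ sends a sequence to the sequence of $f$-images of its prefixes, hence preserves length and the prefix ordering, so $s_i \comp s_j$ forces $f^*(s_i) \comp f^*(s_j)$ and pairwise comparability is inherited. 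Clause (3) follows from the Kleisli identity $\epsB \circ f^* = f$, which gives $\epsB(f^*(s_i)) = f(s_i)$: since $f$ is a homomorphism, $R^{\Tk \As}(s_1,\dots,s_n)$ yields $R^{\Bs}(f(s_1),\dots,f(s_n))$, which is precisely clause (3) for the images.

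The one point requiring genuine care, and the only place where the argument departs from that of Proposition~\ref{efcomonprop}, is clause (2), the condition on pebble indices. The decisive structural feature is that $f^*$ leaves the pebble-index component of every entry untouched and preserves both lengths and prefixes: the list of indices of $f^*(s_i)$, position by position, is identical to that of $s_i$. Consequently, whenever $f^*(s_i) \pref f^*(s_j)$ (equivalently $s_i \pref s_j$), the suffix of $f^*(s_i)$ in $f^*(s_j)$ occupies exactly the positions, and carries exactly the indices, of the suffix of $s_i$ in $s_j$, while the index of the last move of $f^*(s_i)$ equals that of the last move of $s_i$. Thus clause (2) for the family $(f^*(s_i))_i$ is literally the same statement as clause (2) for $(s_i)_i$, which holds by hypothesis. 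I expect this clause to be the main thing to get right, but it reduces to the single observation that the combinatorial shape of a play, its pebble indices together with its prefix structure, is invariant under $f^*$, so that only clause (3) ever appeals to $f$ being a homomorphism. This completes the lift to $\CS$ and hence the verification that $(\Tk, \varepsilon, (\cdot)^*)$ is a comonad in Kleisli form.
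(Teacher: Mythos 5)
Your proof is correct. The paper itself gives no argument here, deferring entirely to the citation of Theorem~4 of \cite{abramsky2017pebbling}; your verification follows the same lifting template as the proof of Proposition~\ref{efcomonprop}, and you correctly isolate the only genuinely new point, namely clause~(2) of the definition of $R^{\Tk \As}$, observing that $f^*$ preserves lengths, prefixes and pebble indices verbatim, so that this clause transfers unchanged while only clause~(3) uses the hypothesis that $f$ is a homomorphism.
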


This is \cite[Theorem 4]{abramsky2017pebbling}. The following is  \cite[Theorem 13]{abramsky2017pebbling}.
\begin{theorem}
\label{strmorth}
The following are equivalent:
\begin{enumerate}
\item There is a homomorphism $\Tk \As \rarr \Bs$.
\item There is a winning strategy for Duplicator in the existential $k$-pebble game from $\As$ to $\Bs$.
\end{enumerate}
\end{theorem}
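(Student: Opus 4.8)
The plan is to establish the two implications of Theorem~\ref{strmorth} directly, mirroring the structure of the proof of Theorem~\ref{EFgamethm} for the Ehrenfeucht-\Fraisse comonad but carefully tracking the pebble indices and the ``current placing'' bookkeeping that distinguishes condition (2) of $R^{\Tk \As}$ from $\REk$. The essential insight is that a homomorphism $f : \Tk \As \rarr \Bs$ \emph{is} a Duplicator strategy: a play of the $k$-pebble game in which Spoiler has produced the move sequence $s = [(p_1,a_1),\dots,(p_n,a_n)]$ corresponds exactly to the element $s \in \Tk A$, and $f(s) \in B$ is Duplicator's response for pebble $p_n$. The induced relation $r$ recording the \emph{current} pebble placings is then read off from $f$ applied to the relevant prefixes.

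First I would prove $(1) \Rightarrow (2)$. Given $f$, define Duplicator's strategy so that after Spoiler plays to reach position $s_i \pref s$ (the prefix of the full play $s$ ending at round $i$), Duplicator places the matching pebble $p_i$ on $f(s_i)$. To verify this is winning after $n$ rounds, I must show the relation $r$ determined by the current pebble placings is a partial isomorphism. The key is to identify, for each pebble $p$ currently in play, the \emph{shortest} prefix witnessing its current placing, and to observe that for any tuple related by some $R \in \sg$ among the current placings, the corresponding prefixes $s_{i_1},\dots,s_{i_n}$ are pairwise prefix-comparable (being prefixes of the common play $s$) and satisfy condition (2), precisely because each is chosen to end at the \emph{current} occurrence of its pebble, so that pebble index does not recur in the suffix leading to any longer prefix. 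Hence $R^{\Tk \As}(s_{i_1},\dots,s_{i_n})$ holds, and since $f$ is a homomorphism, $R^{\Bs}(f(s_{i_1}),\dots,f(s_{i_n}))$ follows; this gives one direction of the partial-isomorphism condition, with the converse handled symmetrically by the standard argument that the winning condition for the existential game only requires a partial homomorphism.

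Conversely, for $(2) \Rightarrow (1)$, I would take a winning Duplicator strategy and define $f : \Tk \As \rarr \Bs$ by setting $f(s)$ to be Duplicator's response to the play $s$ when Spoiler's moves are the sequence recorded in $s$. Since the strategy is winning after every number of rounds, the induced relation at each stage is a partial homomorphism, and I must check that $f$ respects each relation symbol. Given $R^{\Tk \As}(s_1,\dots,s_n)$, pairwise comparability yields a $\pref$-greatest element $s$, all the $s_i$ are prefixes of $s$, and condition (2) guarantees that the last move of each $s_i$ is a \emph{current} pebble placing in the play $s$, so that $(\epsA(s_1),\dots,\epsA(s_n))$ are among the current placings after playing $s$. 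Condition (3) then gives $R^{\As}(\epsA(s_1),\dots,\epsA(s_n))$, and since the strategy wins this play, $R^{\Bs}(f(s_1),\dots,f(s_n))$ as required.

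The main obstacle, and the point where this argument genuinely departs from the Ehrenfeucht-\Fraisse case, is the correct handling of condition (2): one must argue precisely that the prefixes selected to witness a related tuple end exactly at the \emph{current} placing of their pebbles, so that overwriting of pebbles (a pebble being lifted and replaced later in the play) does not spuriously relate stale placings. Getting the bookkeeping right between ``the element $s_i \in \Tk A$'', ``the pebble index $p$ of its last move'', and ``the current placing of $p$ in the larger play $s$'' is the delicate step; once condition (2) is matched up correctly with the game's rule that only current pebble positions count, both implications follow the same template as Theorem~\ref{EFgamethm}. Since this is \cite[Theorem 13]{abramsky2017pebbling}, I would also note that the full details appear there and need only be adapted to the present notation.
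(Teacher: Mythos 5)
Your overall architecture is right, and your $(2) \Rightarrow (1)$ direction is sound: condition (2) in the definition of $R^{\Tk \As}$ guarantees exactly that the last move of each $s_i$ is still the \emph{current} placing of its pebble in the $\pref$-greatest element of the tuple, so the winning condition after that play transfers to $f$. (The paper itself gives no proof here, only the citation to \cite[Theorem 13]{abramsky2017pebbling}, so you are supplying more detail than it does.) However, your $(1) \Rightarrow (2)$ has a genuine gap: you never verify that the relation $r$ induced by the strategy ``respond to $s_i$ with $f(s_i)$'' is \emph{single-valued}, which is part of the paper's definition of a partial homomorphism and hence of the winning condition for the existential game. If Spoiler places two different pebbles on the same element $a$ of $\As$, the two witnessing prefixes $s_i \pref s_j$ both end in $a$, but a plain homomorphism (as opposed to an $I$-morphism) is free to send them to different elements of $\Bs$: take $\sg = \emptyset$, $A = \{a\}$, $B = \{b_1,b_2\}$ and any $f$ with $f([(1,a)]) = b_1$ and $f([(1,a),(2,a)]) = b_2$; your strategy then produces the non-functional relation $\{(a,b_1),(a,b_2)\}$ and loses, even though Duplicator trivially wins this game. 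This is precisely the ``duplicating sequences'' issue that Section~\ref{Imorsec} exists to address.

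The gap is not cosmetic, because the patch used in the paper's proof of Theorem~\ref{EFgamethm} (copy the earlier response when an element is repeated) does not transfer to the pebbling case. In the Ehrenfeucht-\Fraisse case the copied witness is the minimal prefix ending in the repeated element, and $\REk$ imposes nothing beyond pairwise comparability, so the copied witnesses still form a related tuple to which $f$ can be applied. In the pebbling case the copied witness is a prefix whose last move carries the \emph{other} pebble's index, and that index may be re-used later in the play (pebble $1$ on $a$, pebble $2$ on $a$, pebble $1$ moved to $a'$): condition (2) then fails for the pair of witnesses, so $R^{\Tk\As}$ does not hold of them and the homomorphism property of $f$ no longer yields relation preservation; one can arrange structures where the copying strategy genuinely loses. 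A correct proof must do something further, e.g.\ first replace $f$ by an $I$-morphism (the route supported by the $\Pkp$ machinery of Section~\ref{Imorsec}), or work with the pebble-indexed form of the winning condition and separately reconcile it with the element-indexed one used in this paper. Your closing remark that the delicate step is the bookkeeping around condition (2) identifies the right pressure point, but the proposal as written does not resolve it.
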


\subsection{The Modal Comonad}
 
For the modal case, we assume that the relational vocabulary $\sg$ contains only symbols of arity at most 2. We can thus regard a $\sigma$-structure as a Kripke structure for a multi-modal logic, where the universe is thought of as a set of worlds, each binary relation symbol $\Ralph$ gives the accessibility relation for one of the modalities, and each unary relation symbol $P$ give the valuation for a corresponding propositional variable. If there are no unary symbols, such structures are exactly the labelled transition systems widely studied in concurrency \cite{milner1989communication}.

Modal logic localizes its notion of satisfaction in a structure to a world. We shall reflect this by using the category of \emph{pointed relational structures} $\CSp$. Objects of this category are pairs $(\As, a)$ where $\As$ is a $\sg$-structure and $a \in A$. Morphisms $h : (\As, a) \rarr (\Bs, b)$ are homomorphisms $h : \As \rarr \Bs$ such that $h(a) = b$. Of course, the same effect could be achieved by expanding the vocabulary $\sg$ with a constant, but pointed categories appear in many mathematical contexts.

For each $k \geq 0$, we shall define a comonad $\Mk$, where $\Mk (\As, a)$ corresponds to unravelling the structure $\As$, starting from $a$, to depth $k$. 
%We shall use the notation $\Mka \As$ for $\Mk (\As, a)$. 
The universe of $\Mk (\As, a)$ comprises the unit sequence $[a]$, which is the distinguished element,  together with all sequences of the form $[a_0, \alpha_1, a_1, \ldots , \alpha_j, a_{j}]$, where $a = a_0$, $1 \leq j  \leq k$, and $\RA_{\alpha_i}(a_i, a_{i+1})$, $0 \leq i < j$. Note that the universe of $\Mz (\As, a)$ is $\{ [a] \}$.

The map $\epsA : \Mk (A, a) \rarr  (A, a)$ sends a sequence to its last element. Unary relation symbols $P$ are interpreted by $\PMA(s)$ iff $\PA(\epsA s)$. For binary relations $\Ralph$, the interpretation is $\RMA_{\alpha}(s,t)$ iff for some $a' \in A$, $t = s[\alpha, a']$.
 Given a homomorphism $f : \Mk (\As, a) \rarr (\Bs, b)$, we define $f^* : \Mk (\As, a) \rarr \Mk (\Bs, b)$ recursively by $f^*[a] = [b]$,
 $f^*(s[\alpha, a']) = f^*(s)[\alpha, b']$ where $b' = f(s[\alpha, a'])$.
 %$f^* [a_1, \alpha_1, \ldots , a_j, \alpha_j, a_{j+1}]= [b_1, \alpha_1, \ldots , b_j, \alpha_j, b_{j+1}]$, where $b_i = f [a_1, \alpha_1, \ldots , a_i], \alpha_i, a_{i+1}]$, $1 \leq i < j$. 
 %This is well-defined since $f$ is a homomorphism by assumption.

\begin{proposition}
The triple $(\Mk, \varepsilon, (\cdot)^*)$ is a comonad in Kleisli form on $\CSp$.
\end{proposition}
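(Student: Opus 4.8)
The plan is to verify the three Kleisli-form comonad equations
\[
\varepsilon_{(\As,a)}^* = \id_{\Mk(\As,a)}, \qquad \varepsilon_{(\Bs,b)} \circ f^* = f, \qquad (g \circ f^*)^* = g^* \circ f^*,
\]
for $f : \Mk(\As,a) \rarr (\Bs,b)$ and $g : \Mk(\Bs,b) \rarr (\Cs,c)$, exactly as was done for $\Ek$ in Proposition~\ref{efcomonprop}. As with the Ehrenfeucht-\Fraisse comonad, I would split the argument into two parts: first that the underlying object map and coextension satisfy the three equations at the level of $\Set$ (indeed on sequences), and second that the structure-preservation conditions lift the result to $\CSp$, namely that each $\varepsilon_{(\As,a)}$ is a morphism of pointed structures and that $f^*$ is a morphism whenever $f$ is.

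First I would treat the set-level equations. The essential observation is that $\Mk(\As,a)$ is a \emph{tree} of sequences under the prefix ordering, generated by the single-step extension operation $s \mapsto s[\alpha,a']$, and that $f^*$ is defined by the evident recursion $f^*[a] = [b]$, $f^*(s[\alpha,a']) = f^*(s)[\alpha, f(s[\alpha,a'])]$. The first equation follows by induction on the length of $s$: since $\varepsilon$ returns the last element, $\varepsilon^*$ relabels each prefix $s'\pref s$ ending in $a_i$ by $\varepsilon(s') = a_i$, which reproduces $s$ itself, so $\varepsilon^* = \id$. The second, $\varepsilon \circ f^* = f$, is immediate from the recursion, since $\varepsilon(f^*(s[\alpha,a'])) = \varepsilon(f^*(s)[\alpha,b']) = b' = f(s[\alpha,a'])$, and on $[a]$ both sides give $b$. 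The third, associativity of coextension, again goes by induction on sequence length, unwinding both $(g\circ f^*)^*$ and $g^*\circ f^*$ one step at a time and checking that the newly appended $C$-components agree; here one uses that $f^*$ preserves the tree structure step by step, so that the suffix appended by $(g\circ f^*)^*$ at $s[\alpha,a']$ equals $g(f^*(s[\alpha,a'])) = g(f^*(s)[\alpha,b'])$, matching the corresponding step of $g^* \circ f^*$.

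Second I would verify the lifting to $\CSp$. That $\varepsilon_{(\As,a)}$ preserves the point is automatic, since $\varepsilon([a]) = a$. For structure preservation, I must check unary and binary symbols against the definitions $\PMA(s) \IFF \PA(\epsA s)$ and $\RMA_\alpha(s,t) \IFF t = s[\alpha,a']$ for some $a'$. Preservation of each $P$ by $\varepsilon$ holds definitionally; preservation of each $\Ralph$ holds because $\RMA_\alpha(s, s[\alpha,a'])$ forces $\RA_\alpha(\epsA s, a')$ by the construction of the universe of $\Mk(\As,a)$. For $f^*$, I would check that if $\RMA_\alpha(s, s[\alpha,a'])$ then $R^{\Mk(\Bs,b)}_\alpha(f^*(s), f^*(s[\alpha,a']))$: by the recursion $f^*(s[\alpha,a']) = f^*(s)[\alpha, b']$ with $b' = f(s[\alpha,a'])$, which is exactly the required single-step extension, and the unary case uses that $f$ is a homomorphism together with $\epsB(f^*(s)) = f(s)$ from the second equation.

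I expect the main obstacle to be purely bookkeeping: keeping the recursion on sequences aligned with the tree structure in the associativity equation, and making sure the depth bound $j \leq k$ is respected throughout (one must confirm that $f^*$ never produces a sequence longer than $k$, which holds because $f^*$ is length-preserving). There is no conceptual difficulty beyond what already appears in Proposition~\ref{efcomonprop}; indeed the modal comonad is essentially the restriction of the list comonad to alternating world/modality sequences that trace genuine paths in the Kripke structure, so the only real content over the Ehrenfeucht-\Fraisse case is checking the binary-relation clause, which is built to match the single-step extension recursion and therefore goes through directly.
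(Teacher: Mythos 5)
Your proposal is correct and follows essentially the same route as the paper, which simply notes that the argument runs along the same lines as for $\Ek$ and $\Pk$: verify the three Kleisli-form equations by induction on sequences, then check that the counit and coextension are morphisms of pointed structures. Your additional observation that the binary-relation clause of $\Mk$ is designed to match the single-step recursion defining $f^*$ (which also guarantees $f^*$ lands in the universe of $\Mk(\Bs,b)$) is exactly the point that makes the lift to $\CSp$ go through.
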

\begin{proof}
This follows very similar lines to the proofs of Propositions~\ref{efcomonprop} and~\ref{pebbcomonprop}.
\end{proof}

%\noindent Note that we can  lift this comonad to  $\CS$ by defining $\Mk \As \, := \, \coprod_{a \in A} \Mk (\As, a)$. 

\noindent
We recall the notion of \emph{simulation} between Kripke structures \cite{blackburn2002modal}. Given structures $\As$, $\Bs$, we define relations ${\simord_k} \; \subseteq \; A \times B$, $k \geq 0$, by induction on $k$: $a \simord_{k+1} b$ iff
        \begin{enumerate}
          \item for all unary $P$, $\PA(a)$ implies $\PB(b)$ \label{sim:1}
          \item for all $\Ralph$, if $\RaA(a, a')$, then for some $b'$, $\RaB(b, b')$ and $a' \simord_k b'$. \label{sim:2}
        \end{enumerate}
The base case $a \simord_{0} b$ holds whenever the first condition is satisfied. It is standard that these relations are equivalently formulated in terms of a modified existential Ehrenfeucht-\Fraisse game \cite{blackburn2002modal,gradel2014freedoms}.

\begin{theorem}
\label{simthm}
Let $\As$, $\Bs$ be Kripke structures, with $a \in A$ and $b \in B$, and $k \geq 0$.
The following are equivalent:
\begin{enumerate}
\item There is a homomorphism $f : \Mk (\As, a) \rarr (\Bs, b)$.
\item $a \simord_k b$.
\item There is a winning strategy for Duplicator in the $k$-round simulation game from $(\As, a)$ to $(\Bs, b)$.
\end{enumerate}
\end{theorem}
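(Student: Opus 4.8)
The plan is to prove the cycle of implications $(1) \Rightarrow (2) \Rightarrow (3) \Rightarrow (1)$, since the equivalence of the simulation relation $\simord_k$ with both the homomorphism condition and the game condition is most naturally established this way. The three conditions are already known (by folklore, and by the analogy with Theorem~\ref{EFgamethm}) to be closely linked, so the real work is matching the \emph{recursive/inductive} structure of $\simord_k$ with the \emph{sequence-based} structure of $\Mk(\As,a)$.

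First I would prove $(1) \Rightarrow (2)$ by induction on $k$. Given a homomorphism $f : \Mk(\As,a) \rarr (\Bs,b)$, I claim more generally that for every sequence $s \in \Mk(\As,a)$ with $\epsA(s) = a'$ and $f(s) = b'$, we have $a' \simord_{k-|s|} b'$, where $|s|$ measures how much of the depth budget remains. The unary-predicate clause \eqref{sim:1} follows immediately from $f$ being a homomorphism together with the interpretation $\PMA(s) \IFF \PA(\epsA s)$. For clause \eqref{sim:2}, given $\RaA(a', a'')$, the sequence $s[\alpha, a'']$ lies in $\Mk(\As,a)$ (provided the depth bound is not exceeded), and $\RMA_\alpha(s, s[\alpha,a''])$ holds by definition; since $f$ preserves $\Ralph$, setting $b'' := f(s[\alpha,a''])$ gives $\RaB(b', b'')$, and the inductive hypothesis yields $a'' \simord_{k-1} b''$ at the appropriate depth. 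Starting from $s = [a]$ gives $a \simord_k b$.

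Next, $(2) \Rightarrow (1)$ is the step I expect to be the main obstacle, since it requires synthesizing a single global homomorphism $f$ out of the local, round-by-round witnesses provided by $\simord_k$. The idea is to define $f$ by recursion on the length of sequences, exactly as the coextension is defined: set $f[a] = b$, and having defined $f(s) = b'$ with $\epsA(s) = a'$ and $a' \simord_{k-|s|+1} b'$, for each successor $s[\alpha, a'']$ use clause \eqref{sim:2} to \emph{choose} a witness $b''$ with $\RaB(b', b'')$ and $a'' \simord_{k-|s|} b''$, and put $f(s[\alpha,a'']) = b''$. The delicate points are (i) maintaining the invariant that the decreasing simulation index always matches the remaining depth, so that the recursion is well-founded and terminates exactly at depth $k$; and (ii) checking that the resulting $f$ is genuinely a homomorphism of pointed structures --- preservation of unary $P$ is immediate from clause \eqref{sim:1} and the interpretation of $P$ in $\Mk$, while preservation of each $\Ralph$ holds by construction, since the only $\Ralph$-edges in $\Mk(\As,a)$ are of the form $(s, s[\alpha,a''])$, whose images were chosen precisely to satisfy $\RaB$.

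Finally, for $(2) \IFF (3)$ I would invoke the standard fact, cited in the excerpt just before the theorem statement, that the relations $\simord_k$ are equivalently formulated via the modified existential Ehrenfeucht-\Fraisse (i.e.~simulation) game \cite{blackburn2002modal,gradel2014freedoms}: a winning Duplicator strategy in the $k$-round simulation game is exactly a coherent family of choices witnessing clause \eqref{sim:2} while preserving clause \eqref{sim:1}, which is the same data as the recursive unfolding of $\simord_k$. Thus the game condition and the simulation relation are two presentations of the same back-and-forth-free (forth-only) matching, closing the cycle. Throughout I would keep the base case $k=0$ explicit, where $\Mz(\As,a)$ has universe $\{[a]\}$, a homomorphism just records $P(a) \IMP P(b)$ for all unary $P$, and $a \simord_0 b$ is precisely this condition.
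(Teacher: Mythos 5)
Your proposal is correct and matches the paper's proof in all essentials: the same induction for $(1)\Rightarrow(2)$ (the paper phrases it as restarting a homomorphism $f_{\alpha,a'} : \Mk(\As,a') \to (\Bs,b')$ at each successor, which is your ``remaining depth budget'' invariant in different clothing), the same recursive construction of $f$ from the forth-witnesses, and the same appeal to the standard equivalence of $\simord_k$ with the simulation game. The only cosmetic difference is the arrangement: the paper closes the cycle $(1)\Rightarrow(2)\Rightarrow(3)\Rightarrow(1)$, building $f$ from Duplicator's strategy, whereas you build $f$ directly from $\simord_k$ and handle $(2)\IFF(3)$ entirely by citation --- the underlying construction is identical.
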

\begin{proof}
(1) $\Rightarrow$ (2) We prove this by induction on $k$. For the base case $k = 0$, by (1), there exists a homomorphism, $f:\mathbb{M}_{0}(\As,a) \rarr (\Bs,b)$. Hence, for all unary $P$, $\PA(a) \Rightarrow P^{\mathbb{M}_1(\As,a)}([a]) \Rightarrow \PB(f[a]) \Rightarrow \PB(b)$, so $a \simord_{0} b$. For the inductive step, let $k' = k+1$ and assume there is a homomorphism $f:\Mkp(\As,a) \rightarrow (\Bs,b)$. Suppose that $\RaA(a,a')$, then there exists a $b' = f[a,\alpha,a']$ such that $\RaB(b,b')$. Consider the homomorphism $f_{\alpha,a'}:\Mk(\As,a') \rightarrow (\Bs,b')$ given by $f[a'] = b'$ and $f_{\alpha,a'}(s) = f([a,\alpha,a']s)$. Applying the inductive hypothesis, we obtain that $a' \simord_k b'$, verifying condition 2 for a simulation. To verify condition 1, by the definition of $\Mkp(\As,a)$ and $f$ being a homomorphism we have:  $\PA(a) \Rightarrow P^{\Mkp(\As,a)}([a]) \Rightarrow \PB(f[a]) \Rightarrow \PB(b)$. Therefore, $a \simord_{k+1} b$ as desired. 

(2) $\Rightarrow$ (3) This is a standard result, and is detailed in section 2.7 of \cite{blackburn2002modal}.

(3) $\Rightarrow$ (1) We construct $f_{i}$ for $i \leq k$ by induction. For the base case $k = 0$, define $f_{0}:\mathbb{M}_{0}(\As,a) \rarr (\Bs,b)$ as $f_{0}[a] = b$. 
This is trivially a homomorphism.
Suppose, by the inductive hypothesis, $f_{k}:\Mk(\As,a) \rarr (\Bs,b)$ has been constructed. Let $k' = k+1$ and we define $f_{k'}:\Mkp(\As,a) \rarr (\Bs,b)$ by splitting into two cases. For the first case, consider $t \in \Mkp(\As,a)$ such that $t = s[\alpha,a']$ for some $s \in \Mk(\As,a)$, $\Ralph \in \sigma$ and $\RaA(a,a')$. We can take $s$ as the record of the Spoiler's moves in the first $k$ rounds of the simulation game. In this case, we let $f_{k'}(t) = b'$ where $b'$ is Duplicator's response to Spoiler choosing the $\alpha$-transition to $a'$ in the $k'$-th round. For the second case, consider $t \in \Mk(\As,a) \subseteq \Mkp(\As,a)$. In this case, define $f_{k'}(t) = f_k(t)$. We need to verify $f_{k'}$ is indeed a homomorphism. Suppose $s,t \in \Mkp(\As,a)$ are such that $R_{\alpha}^{\Mkp (\As, a)}(s,t)$. If $s,t \in \Mk(\As,a) \subseteq \Mkp(\As,a)$, then $f_{k'}(s) = f_{k}(s)$ and $f_{k'}(t) = f_{k}(t)$. Therefore, by the inductive hypothesis that $f_{k}$ is a homomorphism, $\RaB(f_{k'}(s),f_{k'}(t))$,
and moreover, for each unary predicate $P$, $P^{\Mk(\As, a)}(t)$ implies $P^{(\Bs, b)}(f_{k'}(t))$. By the definition of $R_{\alpha}^{\Mkp(\As,a)}$, the only other case is $t = s[\alpha,a']$. Since $f_{k'}(t)$ is Duplicator's response to the $k'$-round in her winning strategy, we can conclude that $\RaB(f_{k'}(s),f_{k'}(t))$, and
for each unary predicate $P$, $P^{\Mk(\As, a)}(t)$ implies $P^{(\Bs, b)}(f_{k'}(t))$.
\end{proof}

\section{$I$-morphisms}
\label{Imorsec}

Before turning to logical equivalences, there is a technical issue which needs to be addressed.

A coKleisli morphism $f : \Ek \As \to \Bs$ is an \emph{$I$-morphism} if $s \preford t$ and $\epsA(s) = \epsA(t)$ implies that $f(s) = f(t)$. 
\begin{proposition}
Consider $\sgp = \sigma \cup \{I\}$, where $I$ is a binary relation symbol not in $\sg$. If we interpret $I^{\As}$ and $I^{\Bs}$ as the identity relations on $A$ and $B$, then $f : \Ek \As \to \Bs$ is an $I$-morphism iff it is a $\sgp$-homomorphism.
\end{proposition}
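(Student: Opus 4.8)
The plan is to unwind the definition of a $\sgp$-homomorphism and match it clause-by-clause against the $I$-morphism condition; no comonad laws are needed, only the description of $\Ek\As$ as a structure. First I would observe that, since $\sg \subseteq \sgp$, a function $f : \Ek\As \to \Bs$ is a $\sgp$-homomorphism exactly when it is a $\sg$-homomorphism \emph{and} it preserves the new relation $I$. As $f$ is already a coKleisli morphism, hence a $\sg$-homomorphism, the extra content in both directions of the biconditional is precisely the preservation of $I$. So the whole proposition reduces to showing that ``$f$ preserves $I$'' and ``$f$ is an $I$-morphism'' are the same statement.

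Next I would compute $I^{\Ek\As}$ from the comonad's defining clause for relations, now applied to $\As$ viewed as a $\sgp$-structure. Expanding the vocabulary to $\sgp$ leaves the universe $\Alk$ and all the $\sg$-relations of $\Ek\As$ untouched; it only adjoins the binary relation $I^{\Ek\As}$, which holds of $(s,t)$ iff $s$ and $t$ are comparable in the prefix ordering and $I^{\As}(\epsA s, \epsA t)$. Since $I^{\As}$ is the identity on $A$, the second conjunct is just $\epsA s = \epsA t$, so $I^{\Ek\As}(s,t)$ iff $s \comp t$ and $\epsA s = \epsA t$. On the target side, $I^{\Bs}(f(s), f(t))$ is likewise just $f(s) = f(t)$. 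Hence preservation of $I$ reads: for all $s,t$, if $s \comp t$ and $\epsA s = \epsA t$, then $f(s) = f(t)$.

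Finally I would reconcile this with the $I$-morphism condition, which is phrased with the prefix order $\preford$ rather than comparability: $s \preford t$ and $\epsA s = \epsA t$ imply $f(s) = f(t)$. The two differ only in that $\comp$ permits $s \preford t$ \emph{or} $t \preford s$, and this is absorbed by symmetry, since the conclusion $f(s) = f(t)$ is symmetric in $s$ and $t$; thus I may always assume $s \preford t$. Both implications then follow at once: the $I$-morphism condition gives preservation of $I$ after the symmetry reduction, and conversely preservation of $I$ yields the $I$-morphism condition because $s \preford t$ entails $s \comp t$.

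The only point requiring care is the bookkeeping in the middle step, namely confirming that applying $\Ek$ to $\As$ as a $\sgp$-structure yields exactly $\Ek\As$ with the single extra relation $I^{\Ek\As}$ computed as above, and reading off correctly that the identity interpretation of $I$ collapses both $I^{\Ek\As}$ and $I^{\Bs}$ to equalities of the relevant elements. Beyond this definitional unfolding there is no genuine obstacle.
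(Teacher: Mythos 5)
Your proposal is correct and follows essentially the same route as the paper's proof: both reduce the claim to the preservation of the single relation $I$, unfold $I^{\Ek\As}(s,t)$ as comparability plus $\epsA(s)=\epsA(t)$, and match this against the $I$-morphism condition. Your explicit remark that the mismatch between comparability $\comp$ and the prefix order $\preford$ is absorbed by the symmetry of the conclusion $f(s)=f(t)$ is a small point the paper leaves implicit, but it does not change the argument.
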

\begin{proof}
Suppose $f$ is an $I$-morphism, and $I^{\Ek \As}(s,t)$.
By definition, this means that $s \comp t$  and $I^{\As}(\epsA(s), \epsA(t))$, \ie $\epsA(s) = \epsA(t)$. By the $I$-morphism definition, this implies $f(s) = f(t)$, and by the interpretation of $I^{\Bs}$ as the identity relation, $I^{\Bs}(f(s),f(t))$. Hence, $f$ preserves $I$, and is therefore a $\sgp$-homomorphism. The proof of the converse is similar.  
\end{proof}
The significance of the $I$-morphism condition becomes apparent from the following proposition.
\begin{proposition}
\label{Imorpisoprop}
If $f : \Ek \As \to \Bs$ and $g : \Ek \Bs \to \As$ are $I$-morphisms, then $f^*(s) =t$, $g^*(t) = s$ imply that $(s,t)$ defines a partial isomorphism from $\As$ to $\Bs$.
\end{proposition}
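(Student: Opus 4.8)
The plan is to unpack the statement concretely and then verify, one at a time, the four conditions defining a partial isomorphism, assigning each to the hypothesis that implies it. Write $s = [a_1, \ldots , a_j]$ and $t = [b_1, \ldots , b_j]$; note that $f^*$ and $g^*$ preserve length, so the two sequences have equal length $j \leq k$. By the definitions of the coextensions, the hypotheses $f^*(s) = t$ and $g^*(t) = s$ unwind to the positional relations
\[ b_i = f(s_i), \qquad a_i = g(t_i) \qquad (1 \leq i \leq j), \]
where $s_i := [a_1, \ldots , a_i] \pref s$ and $t_i := [b_1, \ldots , b_i] \pref t$ are the length-$i$ prefixes. The relation I claim to be a partial isomorphism is $r := \{ (a_i, b_i) \mid 1 \leq i \leq j \}$, with converse $r^c$.

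First I would show that $r$ and $r^c$ are single-valued, which is precisely where the $I$-morphism hypothesis is used. Suppose $a_i = a_{i'}$ with $i \leq i'$. Then $s_i \pref s_{i'}$ and $\epsA(s_i) = a_i = a_{i'} = \epsA(s_{i'})$, so the $I$-morphism condition on $f$ gives $b_i = f(s_i) = f(s_{i'}) = b_{i'}$; hence $r$ is single-valued on its domain. The symmetric argument, applying the $I$-morphism condition on $g$ to prefixes $t_i \pref t_{i'}$ with $\epsB(t_i) = \epsB(t_{i'})$, shows $r^c$ is single-valued.

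Next I would verify the two homomorphism conditions, which follow from $f$ and $g$ being homomorphisms together with the structure of the comonad relations. Suppose $\RA(a_{i_1}, \ldots , a_{i_n})$. The prefixes $s_{i_1}, \ldots , s_{i_n}$ are pairwise prefix-comparable, being all prefixes of $s$, and they satisfy $\RA(\epsA s_{i_1}, \ldots , \epsA s_{i_n})$; so by definition $\REk(s_{i_1}, \ldots , s_{i_n})$, and applying the homomorphism $f$ yields $\RB(b_{i_1}, \ldots , b_{i_n})$. Thus $r$ is a partial homomorphism from $\As$ to $\Bs$. The mirror-image argument, starting from $\RB(b_{i_1}, \ldots , b_{i_n})$ and using that the prefixes $t_{i_m}$ of $t$ satisfy $R^{\Ek \Bs}(t_{i_1}, \ldots , t_{i_n})$ together with $g$ being a homomorphism, shows $r^c$ is a partial homomorphism from $\Bs$ to $\As$. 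Combining the four facts, $r$ is a partial isomorphism.

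There is no real obstacle here; the proof is routine once the hypotheses are unwound. The only point requiring care is the clean division of labour: the $I$-morphism condition is exactly what upgrades each of $f,g$ from merely preserving relations to being single-valued on the matched elements, supplying the two single-valuedness conditions, while the plain homomorphism property supplies relation preservation in each direction. I would make sure to state explicitly that $s$ and $t$ have the same length and that the pairing is positional, since the whole argument hinges on matching $a_i$ with $b_i$ index by index.
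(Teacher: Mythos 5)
Your proof is correct and follows essentially the same route as the paper's: unwind $f^*(s)=t$ and $g^*(t)=s$ into the positional relations $b_i = f(s_i)$, $a_i = g(t_i)$, use the $I$-morphism condition on $f$ and $g$ for single-valuedness of the relation and its converse, and use the pairwise comparability of prefixes together with the homomorphism property for relation preservation in each direction. The only cosmetic difference is that you state the role of $g^*(t)=s$ explicitly where the paper compresses the converse direction into a ``similarly''.
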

\begin{proof}
Let $s = [a_1,\dots,a_n]$ and $t = [b_1,\dots,b_n]$ with $n \leq k$, then the partial function defined by $(s,t)$ is $\gamma:a_i \mapsto b_i$. Suppose $a_j = a_l$ for some $j,l \leq n$ and let $s_j,s_l$ be the subsequences of $s$ ending in $a_j,a_l$ (respectively). Since $s_j \comp s_l$ and $\epsilon_{A}(s_j) = a_j = a_l = \epsilon_{A}(s_l)$, by the $I$-morphism condition, $f(s_j) = f(s_l)$. By definition of $f^*$ and $\gamma$, $b_j = b_l$. Hence, $\gamma$ is well-defined. Suppose $R \in \sigma$ is an $m$-ary relation symbol with $\RA(a_{i_1},\dots,a_{i_m})$ for $a_{i_j} \in s$ and let $s_{i_j}$ denote the prefix of $s$ such that $\epsilon_{A}(s_{i_j}) = a_{i_j}$. Since each $s_{i_j}$ is a prefix of $s$, the set of all $s_{i_j}$ is pairwise comparable, so $\REk(s_{i_1},\dots,s_{i_m})$ implying that $\RB(f(s_{i_1}),\dots,f(s_{i_m}))$ by $f$ being a homomorphism. This yields that $\RB(b_{i_1},\dots,b_{i_m})$, since $f(s_{i_j}) = b_{i_j}$ follows from $f^*(s) = t$. Hence, $\gamma$ preserves relations, and therefore is a partial homomorphism. Similarly, define $\gamma^{-1}:b_i \mapsto a_i$, which is also well-defined and a partial homomorphism. By construction, $\gamma$ and $\gamma^{-1}$ are inverses, so $(s,t)$ defines a partial isomorphism from $\As$ to $\Bs$.
\end{proof}

In the light of this result, the reader may wonder why we do not simply work with a sub-comonad of $\Ek$, in which only non-repeating sequences are allowed. The problem with this is that the functorial action of $\Ek$ on non-injective maps will produce repeating sequences. The same problem arises with the coKleisli extension.

A similar notion of $I$-morphism applies to the pebbling comonad \cite{abramsky2017pebbling}. 
Note that, if $I^{\As}$ is the identity relation on $\As$, $I^{\Pk \As}(s, t)$ holds if $s = s'[(p,a)]$, $t = t'[(p', a)]$, $s$ and $t$ are comparable in the prefix order, say $s \preford t$, and the placing of $p$ in $s$ is still current in $t$. If $p \neq p'$, this means that two different pebbles are placed on the same element of $\As$. We call such a sequence \emph{duplicating}. Duplicating sequences are to $\Pk$ what repeating sequences are to $\Ek$.

Dan Marsden and Tom\'a\v{s} Jakl have shown\footnote{Personal communication} that it is possible to define a retract of the $\Ek$ comonad in a more complicated fashion, which does eliminate repeating sequences. However, it is not clear how this can be extended to the pebbling comonad.

We shall develop an alternative, general approach, based on \emph{relative comonads}
\cite{altenkirch2010monads}. 
We shall not in fact need the full generality of relative comonads. What we shall use is the following special case, mentioned in \cite{altenkirch2010monads}. Given a functor $J : \CC \to \mathcal{D}$, and a comonad $G$ on $\mathcal{D}$, we  obtain a relative comonad on $\CC$ whose coKleisli category is defined as follows.
A morphism from $A$ to $B$, for objects $A$, $B$ of $\CC$, is a $\mathcal{D}$-arrow $G J A \to J B$. The counit at $A$ is $\ve_{J A}$, using the counit of $G$ at $J A$. Given $f : G J A \to J B$, the coKleisli extension $f^* : G J A \to G J B$ is the coKleisli extension of $G$.
Since $G$ is a comonad, these operations satisfy the equations for a comonad in Kleisli form.
It is in fact simpler to describe relative comonads in coKleisli form, as explained in \cite{altenkirch2010monads}.

To apply this to our setting,
we can take advantage of the fact that our comonads are defined uniformly in the relational vocabulary.
Given a relational vocabulary $\sg$, there is a full and faithful embedding $J : \CS \to \CSI$ such that $I^{J \As}$ is the identity on $A$.
Moreover, we have a comonad $\Ek^{I}$, which is our $\Ek$ construction applied to $\CSI$.
We use this data to obtain the $J$-relative comonad $\Ekp$ on $\CS$ as defined above.
Explicitly, we have the following coKleisli category: objects are those of $\CS$, morphisms from $\As$ to $\Bs$ are $\CS$-morphisms $\Ek^{I} J \As \to J \Bs$.
%, where $\Ek^{I}$ is our $\Ek$ construction applied to $\CSI$. 
The counit at $\As$ is $\ve_{J\As} : \Ek^{I} J \As \to J \As$, the counit for $\Ek^{I}$ at $J \As$. The coKleisli extension of $f : \Ek^{I} J \As \to J \Bs$ is $f^* : \Ek^{I} J \As \to \Ek^{I} J \Bs$, defined using the coKleisli extension for $\Ek^{I}$. 
%Since $\Ek^{I}$ is a comonad, these operations satisfy the equations for a comonad in Kleisli form.

The important point is that the morphisms in this $J$-relative coKleisli category are exactly the $I$-morphisms in the coKleisli category for the comonad $\Ek$ on $\CS$.

%In general, given a functor $J : \CC \to \mathcal{D}$, and a comonad $G$ on $\mathcal{D}$, we can obtain a relative comonad on $\CC$ in this fashion \cite{altenkirch2010monads}.
This construction is perfectly general.
In particular, it can be carried out in exactly the same fashion for the pebbling comonad. In this case, it is the \emph{duplicating} sequences which are problematic, \ie those in which different  pebbles are placed on the same element \cite{abramsky2017pebbling}.
This condition is discussed in more detail in Section~\ref{logeqIIsec}.
The relative comonad approach provides a uniform solution.

\section{Logical Equivalences I}

We now show how our game comonads can be used to give syntax-free characterizations of a range of logical equivalences, which play a central r\^ole in finite model theory and modal logic.

Two equivalences can be defined uniformly for any indexed family of comonads $\Ck$:
\begin{itemize}
\item $\As \eqaCk \Bs$ iff there are coKleisli morphisms $\Ck \As \rarr \Bs$ and $\Ck \Bs \rarr \As$. Note that there need be no relationship between these morphisms.
This is simply the equivalence induced by the preorder collapse of the coKleisli category.
\item $\As \eqcCk \Bs$ iff $\As$ and $\Bs$ are isomorphic in the coKleisli category $\Kl(\Ck)$. This means that there are morphisms $\Ck \As \rarr \Bs$ and $\Ck \Bs \rarr \As$ which are inverses of each other in $\Kl(\Ck)$.
\end{itemize}
Clearly, $\eqcCk$ strictly implies $\eqaCk$.

We introduced a number of logic fragments in Section~\ref{fragssec}.
Each of these logics $\LL$ induces an equivalence on structures in $\CS$:
\[ \As \eqL \Bs \iffdef \forall \vphi \in \LL. \; \As \models \vphi \; \IFF \; \Bs \models \vphi . \]
Our aim is to characterize these equivalences in terms of our game comonads, and more specifically, to use morphisms in the coKleisli categories as witnesses for these equivalences. 

We shall use the two equivalences $\eqaCk$ and $\eqcCk$ induced by comonads as described above to characterize the logical equivalences induced by the existential-positive and counting versions of these logic fragments, $\ELk$ and $\Lck$, respectively. We shall use the relative versions $\Ekp$ and $\Pkp$ of the Ehrenfeucht-\Fraisse and pebbling comonads, as described in the previous section. These impose the condition of being $I$-morphisms, which will be important to get a precise match with the winning conditions of the corresponding logical games.

We shall defer the treatment of back-and-forth equivalences to Section~\ref{backandforthsec}, as it will build on material to be developed later. 

We now turn to a detailed study of each of our comonads in turn.

\subsection{The Ehrenfeucht-\Fraisse comonad}
\label{logeqsecEF}
\label{sec}

We begin by using the $I$-morphism condition enforced by $\Ekp$ to sharpen Theorem~\ref{EFgamethm} from an equivalence to a bijective correspondence.
\begin{theorem}
\label{EFgameeqthm}
There is a bijective correspondence between:
\begin{enumerate}
\item $\Ekp$ coKleisli morphisms from $\As$ to $\Bs$.
\item Winning strategies for Duplicator for the existential Ehrenfeucht-\Fraisse game with $k$ rounds, played from $\As$ to $\Bs$.
%\item For every existential positive sentence $\vphi$ with quantifier rank $\leq k$, $\As \models \vphi \IMP \Bs \models \vphi$.
\end{enumerate}
\end{theorem}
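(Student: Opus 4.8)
The plan is to upgrade the logical equivalence in Theorem~\ref{EFgamethm} to a genuine \emph{bijection} by exhibiting mutually inverse maps between the two sets, using the $I$-morphism condition enforced by $\Ekp$ to pin down both directions precisely. The key observation is that the looseness in the proof of Theorem~\ref{EFgamethm} arose from Spoiler playing a repeated element $a_i = a_j$; in that case Duplicator's response was stipulated by hand ($b_i = b_j$) rather than being forced by the strategy. With $\Ekp$-morphisms this ambiguity disappears: by the $I$-morphism condition, $s \preford t$ and $\epsA(s) = \epsA(t)$ force $f(s) = f(t)$, so a repeated Spoiler element must receive a repeated Duplicator response, exactly matching the well-definedness requirement for a winning strategy.

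First I would make precise what counts as a ``winning strategy'' on the right-hand side, since for a bijection we need a canonical, history-determined notion. I would take a Duplicator strategy to be a function assigning, to each Spoiler play $s = [a_1, \ldots, a_j]$ with $j \leq k$, a response $b_j \in B$ depending only on the play so far, such that the induced relation is always a partial homomorphism, and such that the response is \emph{consistent}: if two plays agree on a prefix, the responses on that prefix agree (history-freeness), and if the current Spoiler element repeats an earlier one, the response repeats correspondingly. From a coKleisli morphism $f : \Ekp \As \to \Bs$, define the strategy by $\sigma(s) := f(s)$; the prefix-consistency is automatic because $f$ is a function on sequences, the repeated-element consistency is precisely the $I$-morphism condition, and the partial-homomorphism winning condition follows exactly as in the $(1)\Rightarrow(2)$ direction of Theorem~\ref{EFgamethm}. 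Conversely, from a strategy $\sigma$ define $f(s) := \sigma(s)$ for each $s \in \Ek A$; the argument in $(2)\Rightarrow(1)$ of Theorem~\ref{EFgamethm} shows $f$ is a $\sg$-homomorphism, and the repeated-element consistency of $\sigma$ shows $f$ satisfies the $I$-morphism condition, hence is a $\Ekp$-morphism.

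The two passages are visibly inverse to each other: both are ``the identity on the underlying function,'' since a coKleisli morphism $\Ekp \As \to \Bs$ is literally a function $\Ek A \to B$ satisfying the homomorphism and $I$-morphism conditions, and a strategy in the sense above is literally the same data viewed game-theoretically. So the bijection amounts to checking that the homomorphism-plus-$I$-morphism conditions on $f$ coincide, clause by clause, with the winning-plus-consistency conditions on $\sigma$. I would organise the proof as a single displayed correspondence $f \leftrightarrow \sigma$ with $f(s) = \sigma(s)$, and verify the equivalence of the two condition sets.

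The main obstacle I anticipate is purely definitional rather than technical: the right-hand side of the original Theorem~\ref{EFgamethm} is phrased merely as ``Duplicator has a winning strategy,'' which is an \emph{existential} statement, whereas a bijection requires the objects on both sides to be the \emph{same kind of thing}. I must therefore fix the representation of a winning strategy so that it is history-free (a function of the play rather than of the run of the game) and so that the repeated-move convention is built into the definition rather than chosen ad hoc as in the footnote to Theorem~\ref{EFgamethm}. Once the notion of strategy is normalised in this way, the correspondence is forced and the verification reduces to matching conditions; the real content is ensuring that this normalisation is the correct one, namely that it is exactly the $I$-morphism condition that removes the ad hoc choice and makes Duplicator's responses a well-defined function of Spoiler's plays.
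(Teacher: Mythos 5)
Your proposal is correct and follows essentially the same route as the paper: both directions reuse the argument of Theorem~\ref{EFgamethm}, with the $I$-morphism condition doing exactly the work of making Duplicator's responses single-valued on repeated elements (and conversely, the single-valuedness in the winning condition yielding the $I$-morphism property). Your explicit normalisation of ``winning strategy'' to a history-free function of Spoiler's plays is a point the paper leaves implicit, but it does not change the substance of the argument.
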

\begin{proof}
The proof proceeds as for Theorem~\ref{EFgamethm}, but we now use the fact that the $I$-morphism condition ensures that if $f^*(s) = t$, the correspondence $\epsA s_i \mapsto \epsB t_i$ is single-valued, and hence satisfies the partial homomorphism winning condition for the existential game. Indeed, if $\epsA s_i = \epsA s_j$, then $I^{\Ek \As}(s_i,s_j)$.
Since $f$ is an $I$-morphism, this implies $I^{\Ek \Bs}(t_i,t_j)$, and hence $\ve_{B} t_i = \ve_{B} t_j$.
Conversely, a winning strategy will
produce pairs $(s,t)$ which satisfy the partial homomorphism condition, and in particular single-valuedness. Hence it induces a coKleisli map satisfying the $I$-morphism condition.
\end{proof}

We now recall the \emph{bijection game} \cite{Hella1996}. In this variant of the Ehrenfeucht-\Fraisse game, Spoiler wins if the two structures have different cardinality. Otherwise, at round $i$, Duplicator chooses a bijection $\psi_i$ between $A$ and $B$, and Spoiler chooses an element $a_i$ of $A$. This determines the choice by Duplicator of $b_i = \psi_i(a_i)$. Duplicator wins after $k$ rounds if the relation $\{ (a_i, b_i) \mid 1 \leq i \leq k \}$ is a partial isomorphism.

\begin{proposition}
\label{bijgameprop}
The following are equivalent, for finite structures $\As$ and $\Bs$:
\begin{enumerate}
\item $\As \eqcEk \Bs$.
\item There is a winning strategy for Duplicator in the $k$-round bijection game.
\end{enumerate}
\end{proposition}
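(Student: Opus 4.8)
The plan is to establish the equivalence $\As \eqcEk \Bs \IFF$ Duplicator wins the $k$-round bijection game by routing both directions through the characterization of $\eqcEk$ as isomorphism in the coKleisli category $\Kl(\Ekp)$, together with Proposition~\ref{Imorpisoprop}, which tells us that a pair of mutually inverse coKleisli $I$-morphisms yields \emph{partial isomorphisms} (not just partial homomorphisms) on matching plays. The key observation driving the argument is that an isomorphism $\As \cong \Bs$ in $\Kl(\Ekp)$ packages a coherent system of responses in both directions that is forced to be \emph{bijective} at each level, and that this bijectivity is exactly the content of Duplicator's bijection choices $\psi_i$.

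For the direction $(1) \Rightarrow (2)$, I would start from morphisms $f : \Ekp \As \to \Bs$ and $g : \Ekp \Bs \to \As$ that are inverse in $\Kl(\Ekp)$, meaning $g \Kcomp f = \id_{\As}$ and $f \Kcomp g = \id_{\Bs}$ in the coKleisli sense, i.e.\ $g \circ f^* = \epsA$ and $f \circ g^* = \epsB$. First I would use these identities to show that on sequences of the \emph{same} length $j \leq k$, the maps induced by $f^*$ and $g^*$ set up a bijection between the relevant $A$-values and $B$-values; concretely, for a play $s \in \Alk$ with $f^*(s) = t$, the inverse law forces $g^*(t) = s$, so $f$ restricted to the last coordinate is invertible with inverse coming from $g$. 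Since $\As, \Bs$ are finite, these level-wise maps are bijections $A \to B$. I would then define Duplicator's strategy in the bijection game round by round: having recorded Spoiler's plays $a_1, \ldots, a_{i-1}$ as a sequence $s$, Duplicator offers the bijection $\psi_i$ obtained from $f$ at the next level, and sets $b_i = \psi_i(a_i) = f(s[a_i])$. Proposition~\ref{Imorpisoprop} then guarantees that after $k$ rounds the induced relation $\{(a_i,b_i)\}$ is a partial isomorphism, which is precisely Duplicator's winning condition.

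For $(2) \Rightarrow (1)$, I would take a winning strategy in the bijection game, given by bijections $\psi_i$ depending on the history, and read off a coKleisli $I$-morphism $f : \Ekp \As \to \Bs$ by $f(s) = \psi_{|s|}(\epsA s)$ applied along the play, exactly as in the $(2)\Rightarrow(1)$ construction of Theorem~\ref{EFgameeqthm}. The symmetric strategy obtained by running the same bijections \emph{backwards} (each $\psi_i$ is a bijection, so $\psi_i^{-1}$ is available) yields $g : \Ekp \Bs \to \As$. The partial-isomorphism winning condition, via Theorem~\ref{EFgameeqthm} and Proposition~\ref{Imorpisoprop}, ensures both are $I$-morphisms, and the fact that $g$ is built from the pointwise inverses $\psi_i^{-1}$ gives $g \circ f^* = \epsA$ and $f \circ g^* = \epsB$, so $f$ and $g$ are mutually inverse in $\Kl(\Ekp)$, witnessing $\As \eqcEk \Bs$.

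The main obstacle I anticipate is the bookkeeping in the $(1)\Rightarrow(2)$ direction: extracting genuine \emph{bijections} $\psi_i$ from the coKleisli isomorphism requires checking that the level-wise maps induced by $f$ and $g$ are not merely mutually inverse on the particular plays that arise, but are total bijections $A \to B$ usable as Duplicator's move for \emph{any} Spoiler choice at that round. Here finiteness of $\As$ and $\Bs$ is essential — a left-invertible map between finite sets of equal cardinality is a bijection — and one must verify that the cardinalities match (which follows since a coKleisli isomorphism forces $|A| = |B|$, as the level-one maps are already inverse bijections). I would also need to confirm that the bijection chosen at round $i$ depends only on the history $a_1, \ldots, a_{i-1}$ in a way consistent with the comonad structure, which is exactly what the coherence of $f^*$ (built from the Kleisli coextension) supplies.
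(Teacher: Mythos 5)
Your proposal is correct and follows essentially the same route as the paper: both directions hinge on extracting, from a coKleisli isomorphism $f,g$, the history-indexed bijections $\psi_s(a) = f(s[a])$ with inverses built from $g$, invoking Proposition~\ref{Imorpisoprop} for the partial-isomorphism winning condition, and conversely reassembling a coKleisli $I$-isomorphism from Duplicator's bijections. The only cosmetic difference is that you argue bijectivity of $\psi_s$ via left-invertibility and finiteness, whereas the paper exhibits the two-sided inverse $\chi_t$ directly from the coKleisli inverse laws, which needs no cardinality argument.
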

\begin{proof}
Assuming (1), we have $I$-morphisms $f : \Ek \As \to \Bs$ and $g : \Ek \Bs \to \As$ with $g^* = {f^*}^{-1}$. 
For each $s \in \{ []\} \cup A^{<k}$, where $[]$ is the empty sequence, and $A^{<k}$ is the set of sequences over $A$ of length strictly less than $k$, we can define a map $\psi_s : A \to B$, by $\psi_s (a) = f(s[a])$. This is a bijection, with inverse $\chi_{t}$, $t = f^*(s)$, defined similarly from $g$. These bijections provide a strategy for Duplicator. Since by Proposition~\ref{Imorpisoprop}, each $(s, f^*(s))$ is a partial isomorphism, this is a winning strategy.

Conversely, a winning strategy provides bijections $\psi_s$, which we can use to define $f$ by $f(s[a]) = \psi_s(a)$. The winning conditions imply that this is an $I$-isomorphism in the coKleisli category.
\end{proof}

%\newpage

We can now state our main result on logical equivalences for the Ehrenfeucht-\Fraisse comonad.
\begin{theorem}
\begin{enumerate}
\item For all structures $\As$ and $\Bs$: $\As  \eqELk \Bs \; \IFF \; \As \eqaEk \Bs$.
%\item  For all structures $\As$ and $\Bs$: $\As \eqLk \Bs \; \IFF \; \As \eqbEk \Bs$.
\item For all finite structures $\As$ and $\Bs$: $\As \eqLck \Bs \; \IFF \; \As \eqcEk \Bs$.
\end{enumerate}
\label{thm:mainEF}
\end{theorem}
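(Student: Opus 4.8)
The plan is to reduce each biconditional to a classical game-theoretic characterization via the results already established for $\Ek$, and then invoke (or re-derive) the corresponding Ehrenfeucht-\Fraisse-style theorem relating games to logic.

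For part (1) I would argue purely by composing equivalences, with no finiteness assumption. By Theorem~\ref{EFgamethm}, the existence of a homomorphism $\Ek \As \rarr \Bs$ is equivalent to Duplicator having a winning strategy in the existential $k$-round Ehrenfeucht-\Fraisse game from $\As$ to $\Bs$; by Theorem~\ref{classicposgamesthm}(1), this in turn is equivalent to the one-sided preservation condition $\forall \vphi \in \ELk.\; \As \models \vphi \IMP \Bs \models \vphi$. Writing $\As \preord \Bs$ for this preservation relation, we thus obtain $\Ek \As \rarr \Bs \IFF \As \preord \Bs$. Now $\As \eqaEk \Bs$ holds precisely when there are morphisms in both directions, i.e.\ when $\As \preord \Bs$ and $\Bs \preord \As$; since then each existential-positive sentence of quantifier rank $\leq k$ is preserved both ways, the two one-sided implications combine into the biconditional $\As \models \vphi \IFF \Bs \models \vphi$, which is exactly $\As \eqELk \Bs$. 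This settles (1).

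For part (2) I would route through the bijection game. By Proposition~\ref{bijgameprop}, for finite $\As$ and $\Bs$ we have $\As \eqcEk \Bs$ iff Duplicator has a winning strategy in the $k$-round bijection game. It therefore remains to show that this game characterizes $\eqLck$: Duplicator wins the $k$-round bijection game iff $\As$ and $\Bs$ agree on all sentences of quantifier rank $\leq k$ with counting quantifiers. This is the quantifier-rank analogue of Hella's theorem \cite{Hella1996}, which I would prove by induction on $k$. The inductive invariant is that after $i$ rounds Duplicator can maintain a bijection witnessing agreement of all counting-rank-$(k-i)$ formulas: the crucial point is that a bijection $\psi : A \to B$ respecting all such formulas of rank $k-1$ exists iff, for each $\Lck$-definable type of rank $k-1$, the number of realisers in $\As$ equals the number in $\Bs$ --- which is precisely what a counting quantifier $\egen x$ detects, and what a single bijection can realise by a counting/Hall's-marriage argument. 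Running the induction in both directions converts winning bijections into agreement of counting formulas and vice versa.

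The main obstacle is exactly this game-to-logic correspondence in part (2): establishing the bijection-game characterization of $\Lck$ in its quantifier-rank form rather than the more familiar $k$-variable form. The delicate step is the cardinality-matching argument showing that one bijection can simultaneously witness equality of realiser-counts across \emph{all} counting-rank-$(k-1)$ types; this is where finiteness of $\As$ and $\Bs$ is genuinely used, and where the counting quantifiers are matched exactly. Everything else is bookkeeping on top of the already-proven Theorem~\ref{EFgamethm}, Theorem~\ref{classicposgamesthm}, and Proposition~\ref{bijgameprop}.
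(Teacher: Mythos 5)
Your argument is correct and follows essentially the same route as the paper: part (1) combines Theorem~\ref{EFgamethm} with the Kolaitis--Vardi existential-game characterization (applied in both directions to pass from one-sided preservation to two-sided agreement), and part (2) combines Proposition~\ref{bijgameprop} with the bijection-game characterization of $\Lck$. The only difference is that the paper simply cites Hella (and Libkin's Theorem~8.13) for the quantifier-rank form of the bijection-game theorem, whereas you sketch its standard proof by induction on $k$ via equality of realiser-counts of rank-$(k-1)$ types; that sketch is sound and is where finiteness is used, exactly as you say.
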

\begin{proof}
(1) In \cite{kolaitis1990expressive}, it is shown that $\As  \eqELk \Bs$ iff Duplicator has a winning strategy in the $k$-round existential Ehrenfeucht-\Fraisse game from $\As$ to $\Bs$.
Combining this with Theorem~\ref{EFgamethm} yields the result. \\
(2) In \cite{Hella1996} (see also the exposition in \cite[Theorem 8.13]{Libkin2004})  it is shown that $\As \eqLck \Bs$ iff Duplicator has a winning strategy in the $k$-round bijection game.
Combining this with Proposition~\ref{bijgameprop} yields the result.
\end{proof}

%If we modify $\WAB^{\Ek}$, and hence $\eqbEk$, by asking for partial correspondences rather than partial isomorphisms, we obtain a characterization of elementary equivalence for equality-free logic \cite{Casanovas1996}.

\subsection{The Pebbling Comonad}

% We shall encounter this notion again in Section~\ref{logeqIIsec}.

%Given $s = [(p_1,a_1), \ldots , (p_n,a_n)] \in \Pk \As$ and $t = [(p_1,b_1), \ldots , (p_n,b_n)] \in \Pk \Bs$, we define $\phi_{s, t} \; = \; \{ (a_p, b_p) \mid p \in \kset, \; \mbox{$p$ occurs in $s$} \}$, where the last occurrence of $p$ in $s$ is on $a_p$, and the corresponding last occurrence in $t$ is on $b_p$.
%We define $\WAB^{\Pk}$ to be the set of all such $(s,t)$ for which $\phi_{s,t}$ is a partial isomorphism. This specifies the back-and-forth equivalence $\eqbPk$.

We now state the following result, characterizing the equivalences induced by finite-variable logics $\Lvk$.
\begin{theorem}
\begin{enumerate}
\item For all structures $\As$ and $\Bs$: $\As  \eqELvk \Bs \; \IFF \; \As \eqaPk \Bs$.
%\item  For all finite structures $\As$ and $\Bs$: $\As \eqLvk \Bs \; \IFF \; \As \eqbPk \Bs$.
\item For all finite structures $\As$ and $\Bs$: $\As \eqLvck \Bs \; \IFF \; \As \eqcPk \Bs$.
\end{enumerate}
\label{thm:mainPebble}
\end{theorem}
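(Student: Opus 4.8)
The plan is to mirror exactly the structure used for the Ehrenfeucht-\Fraisse comonad in Theorem~\ref{thm:mainEF}, since the pebbling case is the precise analogue with ``number of rounds'' replaced by ``number of pebbles'' and quantifier rank replaced by variable count. The two halves of the statement will be proved by chaining a comonadic game-theoretic characterization with a classical game-theoretic characterization of the logical equivalence, so that the logic and the comonad meet in the middle at the appropriate pebble game.

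For part (1), I would first invoke Theorem~\ref{strmorth}, which already gives that a coKleisli morphism $\Pk \As \rarr \Bs$ exists iff Duplicator wins the existential $k$-pebble game from $\As$ to $\Bs$. Unwinding the definition of $\eqaPk$, the relation $\As \eqaPk \Bs$ means there are morphisms $\Pk \As \rarr \Bs$ and $\Pk \Bs \rarr \As$, hence by Theorem~\ref{strmorth} that Duplicator wins the existential $k$-pebble game in both directions. The remaining ingredient is the Kolaitis--Vardi style result that $\As \eqELvk \Bs$ iff Duplicator wins the existential $k$-pebble game from $\As$ to $\Bs$ \emph{and} from $\Bs$ to $\As$; combining this with the comonadic characterization yields the equivalence. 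I would cite \cite{kolaitis1992infinitary} (and the textbook treatment in \cite{Libkin2004}) for this classical correspondence between existential-positive $k$-variable logic and the existential pebble game.

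For part (2), the strategy is entirely parallel to Proposition~\ref{bijgameprop} together with Theorem~\ref{thm:mainEF}(2), but using the pebble version of the bijection game, often called the \emph{$k$-pebble bijective game}. Here one first establishes a pebbling analogue of Proposition~\ref{bijgameprop}: namely that $\As \eqcPk \Bs$ iff Duplicator has a winning strategy in the $k$-pebble bijective game. This uses the relative comonad $\Pkp$ and the $I$-morphism (equivalently, non-duplicating) condition to convert an isomorphism $\Pk \As \cong \Pk \Bs$ in $\Kl(\Pk)$ into a family of bijections indexed by positions, exactly as in the Ehrenfeucht-\Fraisse argument, with Proposition~\ref{Imorpisoprop}'s pebbling analogue guaranteeing that each witnessing pair is a partial isomorphism. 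One then appeals to Hella's theorem \cite{Hella1996} that $\As \eqLvck \Bs$ iff Duplicator wins the $k$-pebble bijective game, and composes the two equivalences.

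The main obstacle will be making the pebbling bijective-game correspondence precise, since the pebble setting is bookkeeping-heavier than the Ehrenfeucht-\Fraisse setting: positions are sequences of pebble-placements rather than plain tuples, the relevant ``window'' is determined by the current placings of the $k$ pebbles, and one must track which pebble is being repositioned at each round. Concretely, one must show that an $I$-isomorphism (non-duplicating) $f : \Pk \As \to \Bs$ induces, for each reachable position $s$ and each pebble index $p \in \kset$, a bijection $\psi_{s,p} : A \to B$ via $\psi_{s,p}(a) = f(s[(p,a)])$, and that the $I$-morphism condition makes this well-defined and single-valued on the current pebble-content of the position; conversely that a winning strategy in the bijective game assembles into such an $I$-isomorphism. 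The comultiplication/coextension identities play the same role as before in ensuring that these local bijections cohere into a genuine coKleisli isomorphism, and I expect the argument to go through by the same reasoning as Proposition~\ref{bijgameprop}, with the duplicating-sequence condition of the relative comonad $\Pkp$ standing in for the repeating-sequence condition of $\Ekp$.
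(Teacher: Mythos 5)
Your proposal is correct and follows essentially the same route as the paper: the paper's proof is a one-line citation of Theorems 14 and 18 of \cite{abramsky2017pebbling}, and those theorems are established exactly by the chain you describe --- Theorem~\ref{strmorth} plus the classical correspondence between existential $k$-pebble games and existential-positive $k$-variable logic for part (1), and a pebbling analogue of Proposition~\ref{bijgameprop} combined with Hella's bijective $k$-pebble game characterization \cite{Hella1996} for part (2). The technical points you flag (the non-duplicating/$I$-morphism condition and the position-indexed bijections $\psi_{s,p}(a) = f(s[(p,a)])$) are precisely the ones handled in that reference.
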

\begin{proof}
This follows from Theorems 14 and 18  of \cite{abramsky2017pebbling}.
\end{proof}

%In the case of the pebbling comonad, it is possible to characterize the back-and-forth equivalence $\As \eqbPk \Bs$ in terms of a single pair of coKleisli morphisms $f : \Pk \As \to \Bs$, $g : \Pk \Bs \to \As$, satisfying a relationship weaker than being mutually inverse; see Theorem~20 in \cite{abramsky2017pebbling}.
%This uses the infinite character of $\Pk$, which allows plays of unbounded length.

\subsection{The Modal Comonad}

Corresponding to the graded modal logic described in Section~\ref{fragssec}, a notion of \emph{graded bisimulation} is given in  \cite{Rijke2000}. This is in turn related to \emph{resource bisimulation} \cite{corradini1999graded}, which has been introduced in the concurrency setting. The two notions are shown to coincide for image-finite Kripke structures in \cite{aceto2010resource}, where it is also shown that they can be presented in a simplified form.
We recall that a Kripke structure $\As$ is image-finite if for all $a \in A$ and $\Ralph$, $\Ralph^{\As}(a) \, := \, \{ a' \mid \RA(a,a') \}$  is finite.

Adapting the results in \cite{aceto2010resource}, we define approximants $\gb_k$ for graded bisimulation, by induction on $k \geq 0$. We define  $a \gb_{k+1} b$ to hold iff: 
  \begin{enumerate}
    \item for all $P$, $\PA(a)$ iff $\PB(b)$
    \item for all $\Ralph$, there is a bijection $\theta : \RA_{\alpha}(a) \cong \RB_{\alpha}(b)$ such that, for all $a' \in \RA_{\alpha}(a)$, $a' \gb_k \theta(a')$.
  \end{enumerate}
The base case, $a \gb_{0} b$, holds whenever only the first condition is satisfied. We can also define a corresponding graded bisimulation game between $(\As, a)$ and $(\Bs, b)$. 
\begin{itemize}
\item At round 0, the elements $a_0 = a$ and $b_0 = b$ are chosen. Duplicator wins if for all $P$, $\PA(a)$ iff $\PB(b)$, otherwise Spoiler wins.
\item  At round $i+1$, Spoiler chooses some $\Ralph$, and Duplicator chooses a bijection $\theta_i : \RaA(a_i) \cong \RaB(b_i)$. If there is no such bijection, Spoiler wins. Otherwise, Spoiler then chooses $a_{i+1} \in \RA(a_i)$, and $b_{i+1} := \theta_i(a_{i+1})$. Duplicator wins this round if for all $P$, $\PA(a_{i+1})$ iff $\PB(b_{i+1})$, otherwise Spoiler wins.
\end{itemize}
This game is evidently analogous to the bijection game we encountered previously.
\begin{proposition}
\label{gradedbisimprop}
The following are equivalent:
\begin{enumerate}
\item There is a winning strategy for Duplicator in the $k$-round graded bisimulation game between $(\As, a)$ and $(\Bs, b)$.
\item $a \gb_k b$.
\item $(\As, a) \eqcMk (\Bs, b)$.
\end{enumerate}
\end{proposition}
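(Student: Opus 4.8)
The plan is to prove $(1) \IFF (2)$ by the standard unwinding of the game into its inductive characterization, and to concentrate on the comonadic equivalence $(2) \IFF (3)$, which is where the new content lies. The first equivalence follows the same pattern as the game characterization of the simulation relations $\simord_k$ invoked in Theorem~\ref{simthm}: given a winning strategy for Duplicator, the bijections $\theta_i$ played at successive rounds, read off from positions where the current pair is $(a_i,b_i)$, witness $a \gb_k b$ by induction, the winning condition at round $0$ supplying clause~(1) of $\gb$; conversely the bijections witnessing $a \gb_k b$ furnish Duplicator's responses, maintaining the invariant that after $i$ rounds the current pair $(a_i,b_i)$ satisfies $a_i \gb_{k-i} b_i$.

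For $(3) \IMP (2)$, the key observation is a general comonadic fact: if $f$ and $g$ are inverse morphisms in $\Kl(\Mk)$, then $g \Kcomp f = \epsA$ and $f \Kcomp g = \epsB$, so applying $(\cdot)^*$ and using the Kleisli identities $(g \circ f^*)^* = g^* \circ f^*$ and $\epsA^* = \id$ gives $g^* \circ f^* = \id$ and $f^* \circ g^* = \id$. Hence a coKleisli isomorphism is exactly an isomorphism $f^* : \Mk(\As,a) \cong \Mk(\Bs,b)$ of pointed $\sg$-structures whose inverse is again of coextension form. Since $f^*$ preserves depth and modality labels (it maps $s[\alpha,a']$ to $f^*(s)[\alpha, b']$), it restricts, at each node $s$ of depth $j < k$, to a bijection between the $\Ralph$-successors of $s$ and those of $f^*(s)$; under the identification of these successor sets with $\RaA(\epsA s)$ and $\RaB(\epsB f^*(s))$, this yields a bijection $\theta_s^\alpha : \RaA(\epsA s) \cong \RaB(f(s))$, where $f = \epsB \circ f^*$. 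Writing $a_s := \epsA s$ and $b_s := f(s)$, I would then prove $a_s \gb_{k-j} b_s$ by induction on $k-j$: clause~(1) holds because $f^*$ preserves and reflects unary predicates, and clause~(2) is witnessed by $\theta_s^\alpha$ together with the inductive hypothesis applied to the children $s[\alpha,a']$. Taking $s = [a]$ gives $a \gb_k b$.

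For the converse $(2) \IMP (3)$, I would build the isomorphism recursively from the witnessing bijections. Given $a \gb_k b$, define $f^* : \Mk(\As,a) \to \Mk(\Bs,b)$ by $f^*[a] = [b]$ and $f^*(s[\alpha,a']) = f^*(s)[\alpha, \theta_s^\alpha(a')]$, where $\theta_s^\alpha$ is the bijection witnessing clause~(2) of $a_s \gb_{k-j} b_s$, the relation $a_s \gb_{k-j} b_s$ being maintained as an invariant along the recursion. Setting $f := \epsB \circ f^*$, a direct check against the definition of the coextension for $\Mk$ shows that this $f^*$ is indeed the coextension of $f$, so $f$ is a bona fide coKleisli morphism; it is a homomorphism of pointed structures because clause~(1) gives preservation of unary predicates and the fact that $\theta_s^\alpha$ lands in $\RaB(b_s)$ gives preservation of each $\Ralph$. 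Defining $g$ analogously from the inverse bijections $(\theta_s^\alpha)^{-1}$ yields $g^* = (f^*)^{-1}$, whence $g \Kcomp f = g \circ f^* = \epsA \circ g^* \circ f^* = \epsA$ and symmetrically $f \Kcomp g = \epsB$, i.e.~$(\As,a) \eqcMk (\Bs,b)$.

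The main obstacle I anticipate is bookkeeping rather than conceptual: in both directions one must keep the depth-indexed invariant $a_s \gb_{k-j} b_s$ aligned with the recursive structure of $\Mk$, and in the $(2)\IMP(3)$ direction one must verify carefully that the tree map assembled from the local bijections is genuinely a coextension $f^*$ of a coKleisli morphism, rather than merely some structure isomorphism, since only then does it represent a morphism of $\Kl(\Mk)$. The Kleisli identity $\epsB \circ f^* = f$ and the recursive shape of the comonad's coextension are exactly what make this work.
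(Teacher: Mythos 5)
Your proposal is correct, and its core construction coincides with the paper's. The central direction $(2) \Rightarrow (3)$ is the same recursive definition of $f^*$ and $g^*$ from the witnessing bijections $\theta$, with the same verification that they are homomorphisms and mutually inverse. The only real difference is in how the cycle of implications is closed. The paper proves $(1) \Rightarrow (2) \Rightarrow (3) \Rightarrow (1)$, extracting from a coKleisli isomorphism $f$ a Duplicator strategy for the graded bisimulation game (the bijections $\theta_{s,\alpha}(a') = f(s[\alpha,a'])$), whereas you prove $(1) \Leftrightarrow (2)$ and $(2) \Leftrightarrow (3)$, reading off the relation $\gb_k$ directly from the structure isomorphism $f^* : \Mk(\As,a) \cong \Mk(\Bs,b)$. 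Your route has the minor advantage of making explicit the general comonadic fact that mutually inverse coKleisli morphisms yield $g^* \circ f^* = \id$ and $f^* \circ g^* = \id$, and hence that $f^*$ restricts to bijections on $\alpha$-successor sets at every node; the paper's $(3) \Rightarrow (1)$ uses exactly this fact but leaves the bijectivity of $\theta_{s_i,\alpha}$ somewhat implicit. Both arguments need, and have, the depth-indexed invariant $a_s \gb_{k-j} b_s$ aligned with the tree structure of $\Mk$, so the bookkeeping you flag as the main risk is handled the same way in the paper.
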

\begin{proof}
(1) $\Rightarrow$ (2) We prove the statement by induction on $k$. For the base case $k = 0$, by (1), Duplicator has a winning strategy for the $0$-round game, so for all $P$, $\PA(a)$ iff $\PB(b)$, hence $a \gb_0 b$. Let $k' = k+1$ and assume that Duplicator has a winning strategy in the $k'$-round graded bisimulation game between $(\As,a)$ and $(\Bs,b)$. By the winning condition of this strategy, for all $P$, $\PA(a)$ iff $\PB(b)$. Moreover, for all $\Ralph$, Duplicator responds to Spoiler choosing $\Ralph$ in the first round of the game with a bijection $\theta_{1}:\RaA(a) \cong \RaB(b)$. It follows that for all $a' \in \RaA(a)$, Duplicator has a winning strategy in the $k$-round game from $(\As,a')$ to $(\Bs,\theta_{1}(a'))$. This fact together with the inductive hypothesis shows that $a' \gb_k \theta_1(a')$. Hence, $a \gb_{k'} b$. 

(2) $\Rightarrow$ (3) We prove the statement by induction on $k$. For the base case $k = 0$, we construct the coextension morphisms $f^{*}:\mathbb{M}_{0}(\As,a) \rarr \mathbb{M}_{0}(\Bs,b)$ by $f^{*}[a] = [b]$ and $g^{*}:\mathbb{M}_{0}(\Bs,b) \rarr \mathbb{M}_{0}(\As,a)$ by $g^{*}[b] = [a]$. By (2), $a \gb_0 b$, so $f$ and $g$ preserve unary relations $P$. Binary relations $\Ralph$ are vacuously preserved since $\mathbb{M}_{0}(\As,a)$ and $\mathbb{M}_{0}(\Bs,b)$ only have one element with no self-transitions. Using these definitions, $f^{*} \circ g^{*}[b] = [b]$ and $g^{*} \circ f^{*}[a] = [a]$. Therefore, $f$ and $g$ are inverses in the coKleisli category. Let $k' = k+1$ and assume that $a \gb_{k'} b$. We need to define coextension morphisms $f^{*}:\Mkp(\As,a) \rarr \Mkp(\Bs,b)$ and $g^{*}:\Mkp(\Bs,b) \rarr \Mkp(\As,a)$. Since $a \gb_{k'} b$, for every $\Ralph$, there exists a bijection $\theta:\RaA(a) \cong \RaB(b)$ such that for every $a' \in \RaA(a)$, $a' \gb_k \theta(a')$. For every $a' \in \RaA(a)$, let $b' = \theta(a')$. We can conclude $(\As, a') \eqcMk (\Bs, b')$ from the inductive hypothesis and $a' \gb_k b'$. Hence, there exists a coKleisli isomorphism given by the pair $f^{*}_{a',\alpha}:\Mk(\As,a') \rarr \Mk(\Bs,b')$ and  $g^{*}_{b',\alpha}:\Mk(\Bs,b') \rarr \Mk(\As,a')$. For every $\Ralph$ and $a' \in \RaA(a)$, we define $f^{*}$ as $f^{*}([a,\alpha]s) = [b,\alpha]f^{*}_{a',\alpha}(s)$ (by $s \in \Mk(\As,a')$, $a'$ is the first element of $s$). Similarly, for $\theta(a') = b'$ we define $g^{*}$ as $g^{*}([b,\alpha]t) = [a,\alpha]g^{*}_{b',\alpha}(t)$. Finally, let $f^{*}[a] = [b]$ and $g^{*}[b] = [a]$. To verify that $f^{*}$ is a homomorphism, suppose $s \in \Mkp(\As,a)$. We have two cases, either $s = [a,\alpha]t, s' = [a,\alpha]t'$ for some $t,t' \in \Mk(\As,a')$ and $a' \in \RaA(a)$ or $s = [a]$ and $s' = [a,\alpha,a']$. Since the interpretations of these relations only depend on the end of the sequence, suppose $\PMpA([a,\alpha]s)$ and $\RMpA([a,\alpha]s,[a,\alpha]s')$ for $s,s' \in \Mk(\As,a')$, then we have the following chains of equivalences resulting from $f^{*}_{a',\alpha}$ being a homomorphism:
\begin{align*}
\PMpA([a,\alpha]s) &\Leftrightarrow \PMAp(s) \\
&\Leftrightarrow \PMBp(f^{*}_{a',\alpha}(s)) \\ 
&\Leftrightarrow \PMpB([b,\alpha]f^{*}_{a',\alpha}(s)) \\ 
&\Leftrightarrow \PMpB(f^{*}([a,\alpha]s)) \\ 
\RMpA([a,\alpha]s,[a,\alpha]s') &\Leftrightarrow \RMAp(s,s')  \\
&\Leftrightarrow \RMBp(f^{*}_{a',\alpha}(s),f^{*}_{a',\alpha}(s'))  \\
&\Leftrightarrow \RMpB(f^{*}([a,\alpha]s),f^{*}([a,\alpha]s'))
\end{align*}
 If $s = [a]$, then by the first condition of $a \gb_{k'} b$ and the definitions of $\PMpA$,$\PMpB$, we obtain the following chain of equivalences: $\PMpA([a]) \Leftrightarrow \PA(a) \Leftrightarrow \PB(b) \Leftrightarrow \PMpB([b]) \Leftrightarrow \PMpB(f^{*}([a]))$. If $s = [a]$ and $s' = [a,\alpha,a']$, then $\RaA(a,a')$. Therefore, $a' \in \RaA(a)$ and $b' = \theta(a')$ where $\theta$ is the bijection $\RaA(a) \cong \RaB(b)$. Hence, $\RaB(b,b')$ and $\RMpB([b],[b,\alpha,b'])$. A similar argument holds for verifying that $g^{*}$ is a homomorphism. We must show that $f^{*}$ and $g^{*}$ are inverses. To show that $g^{*}(f^{*}(s)) = s$ we consider two cases:
\begin{enumerate}
\item Suppose $s = [a]$, then $g^{*} \circ f^{*}[a] = g^{*}[b] = [a]$ 
\item Suppose $s = [a,\alpha]s'$ for some $\Ralph$, $a' \in \RaA(a)$, and $\theta(a') = b'$:
\begin{align*}
g^{*} \circ f^{*}([a,\alpha]s') &= g^{*}([b,\alpha]f_{a',\alpha}^{*}(s')) \\
&= [a,\alpha](g^{*}_{b',\alpha}(f_{a',\alpha}^{*}(s')) \\
&= [a,\alpha](\id_{\Mk(\As,a')}(s')) \\
&= [a,\alpha]s'
\end{align*}
\end{enumerate}
Hence, $g^{*} \circ f^{*} = \id_{\Mkp(\As,a)}$. The proof that $f^{*} \circ g^{*} = \id_{\Mkp(\Bs,b)}$ is similar.  Therefore, $f$ is a coKleisli isomorphism, so $(\As, a) \eqcMkp (\Bs, b)$.

(3) $\Rightarrow$ (1) Assume $f:\Mk(\As,a) \rarr (\Bs,b)$ is a coKleisli isomorphism.  Let $a_0 = a$, $s_0 = [a]$ and $b_0 = b$. The fact that $\PA(a_0) \Leftrightarrow \PB(b_0)$ follows from $f$ being a coKleisli isomorphism. At round $i+1$, if Spoiler chooses $\Ralph$, Duplicator chooses the bijection $\theta_{s_i,\alpha}:\RaA(a_i) \cong \RaB(b_i)$ defined by $\theta_{s_i,\alpha}(a') = f(s_i[\alpha,a'])$. Spoiler then chooses an $a_{i+1} \in \RaA(a_i)$; define $b_{i+1} = \theta_{s_i,\alpha}(a_{i+1})$ and $s_{i+1} = s_{i} [\alpha,a_{i+1}]$. We must show that this determines a winning move for Duplicator, i.e. for all unary $P \in \sigma$, $\PA(a_{i+1}) \Leftrightarrow \PB(b_{i+1})$. By $f$ being a coKleisli isomorphism and the definition of $\PMA$, we obtain the following chain of equivalences $\PA(a_{i+1}) \Leftrightarrow \PMA(s_i[\alpha,a_{i+1}]) \Leftrightarrow \PB(f(s_i[\alpha,a_{i+1}])) \Leftrightarrow \PB(\theta_{s_i,\alpha}(a_{i+1})) \Leftrightarrow \PB(b_{i+1})$.
\end{proof}

\begin{theorem}
\begin{enumerate}
\item For all Kripke structures $(\As, a)$ and $(\Bs, b)$: 
\[ (\As, a)  \eqEMk (\Bs, b) \;\; \IFF \;\; (\As, a) \eqaMk (\Bs,b). \]
%\item  For all Kripke structures $\As$ and $\B s$: $\As \eqMk \Bs \; \IFF \; \As \eqbMk \Bs$.
\item For all image-finite Kripke structures $(\As,a)$ and $(\Bs,b)$: 
\[ (\As,a) \eqMck (\Bs,b) \;\; \IFF \;\; (\As,a) \eqcMk (\Bs,b). \]
\end{enumerate}
\end{theorem}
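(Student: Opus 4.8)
The plan is to prove each of the two biconditionals by connecting the comonadic equivalences to the corresponding logical equivalences through the intermediate notions of the simulation/bisimulation games already established in Theorem~\ref{simthm} and Proposition~\ref{gradedbisimprop}. These two results do most of the heavy lifting: Theorem~\ref{simthm} identifies the existence of a coKleisli morphism $\Mk(\As,a)\rarr(\Bs,b)$ with the simulation $a\simord_k b$, and Proposition~\ref{gradedbisimprop} identifies coKleisli isomorphism $\eqcMk$ with the graded bisimulation approximant $\gb_k$. So the task reduces to matching these game-theoretic notions with the logical equivalences $\eqEMk$ and $\eqMck$.

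For part (1), I would argue as follows. By definition, $(\As,a)\eqaMk(\Bs,b)$ means there are coKleisli morphisms in both directions, $\Mk(\As,a)\rarr(\Bs,b)$ and $\Mk(\Bs,b)\rarr(\As,a)$. By Theorem~\ref{simthm}, this is equivalent to having both $a\simord_k b$ and $b\simord_k a$, i.e.\ simulation in both directions. The logical side is the existential positive modal fragment $\exists\MLk$: the standard Hennessy--Milner style analysis (cf.\ Theorem~\ref{classicposgamesthm}(3)) says that Duplicator has a winning strategy in the $k$-round simulation game from $(\As,a)$ to $(\Bs,b)$ iff every $\exists\MLk$-formula satisfied at $a$ is satisfied at $b$. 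Two-sided simulation then corresponds exactly to agreement on all $\exists\MLk$-formulas, which is the equivalence $\eqEMk$. Assembling these equivalences gives part (1).

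For part (2), the image-finiteness hypothesis is what makes the counting/graded modalities behave well. Here I would invoke Proposition~\ref{gradedbisimprop}, which already gives $(\As,a)\eqcMk(\Bs,b)\iff a\gb_k b$. It then remains to show that for image-finite structures, $a\gb_k b$ coincides with $(\As,a)\eqMck(\Bs,b)$, agreement on all graded modal formulas of depth $\leq k$. This is the graded/resource-bisimulation analogue of the Hennessy--Milner theorem, established in the cited literature \cite{Rijke2000,aceto2010resource}: over image-finite Kripke structures, graded bisimulation equivalence coincides with equivalence in graded modal logic. I would cite these results and note the standard induction on $k$: the forward direction is a routine induction using that a bijection $\theta:\RaA(a)\cong\RaB(b)$ with componentwise $\gb_k$ preserves each graded diamond $\dia^n$, while the converse uses image-finiteness to reconstruct the required bijection from agreement on counting formulas at depth $\leq k$.

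The main obstacle I anticipate is the converse direction of the graded Hennessy--Milner correspondence in part (2): namely, extracting the bijection $\theta:\RaA(a)\cong\RaB(b)$ from logical equivalence under graded modalities. The subtlety is that agreement on counting formulas gives, for each depth-$k$ definable class, equal numbers of successors in $\As$ and $\Bs$ falling into that class, and one must patch these local bijections (one per $\gb_k$-class of successors) into a single global bijection respecting $\gb_k$ pointwise. This is exactly where image-finiteness is essential, since it guarantees each class is finite and the count is well-defined. Because this is precisely the content already proved in \cite{aceto2010resource}, I would lean on that citation rather than reproduce the combinatorial patching argument in full, keeping the proof short by routing everything through Proposition~\ref{gradedbisimprop}.
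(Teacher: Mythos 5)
Your proposal is correct and follows essentially the same route as the paper: both parts are reduced to the game-theoretic notions via Theorem~\ref{simthm} (two-sided simulation for $\eqaMk$) and Proposition~\ref{gradedbisimprop} ($\gb_k$ for $\eqcMk$), with the remaining link to the logical equivalences $\eqEMk$ and $\eqMck$ delegated to the standard and cited Hennessy--Milner-type results, exactly as in the paper's proof.
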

\begin{proof}
(1) It is a standard result (see e.g.~\cite[Theorem 2.78]{blackburn2002modal}) that $(\As,a)  \eqEMk (\Bs,b)$ iff $a \simord_k b$ and $b \simord_k a$.
Combining this with Proposition~\ref{simthm} yields the result. \\
%(2) follows from Theorem~\ref{bisimthm} and the  Hennesy-Milner Theorem \cite{hennessy1980observing,blackburn2002modal}.
(2) It is shown in \cite{Rijke2000} and \cite[Proposition 4.11]{aceto2010resource} that $(\As,a) \eqMck (\Bs,b)$ iff $a \gb_k b$.
Combining this with Proposition~\ref{gradedbisimprop} yields the result.
\end{proof}

\section{Coalgebras and combinatorial parameters}
\label{coalgsec}
Another fundamental aspect of comonads is that they have an associated notion of \emph{coalgebra}. A coalgebra for a comonad $(G, \varepsilon, \delta)$ is a morphism $\alpha : A \to G A$ such that the following diagrams commute:
\begin{center}
\begin{tikzcd}
A  \ar[r, "\alpha"] \ar[d, "\alpha"']
& G A \ar[d,  "\delta_{A}"] \\
G A  \ar[r, "G \alpha"] 
& G^2 A
\end{tikzcd}  
$\qquad \qquad$
\begin{tikzcd}
A \ar[r, "\alpha"] \ar[rd, "\id_A"']
& G A \ar[d, "\epsA"] \\
& A
\end{tikzcd}
\end{center}
%Note in particular that a $G$-coalgebra structure on $A$ makes it a retract of $G A$ via the counit $\epsA$.
%This implies that coalgebra maps are injective.
Given $G$-coalgebras $\alpha : A \to GA$ and $\beta : B \to GB$, a coalgebra morphism from $\alpha$ to $\beta$ is a morphism $h : A \to B$ such that the following diagram commutes:
\begin{center}
\begin{tikzcd}
A \ar[r, "\alpha"] \ar[d, "h"']
& GA \ar[d, "Gh"] \\
B \ar[r, "\beta"'] & GB
\end{tikzcd}
\end{center}
This gives a category of coalgebras and coalgebra morphisms, denoted by $\EM(G)$, the \textit{Eilenberg-Moore category} of $G$.

Our use of indexed comonads $\Ck$ opens up a new kind of question for coalgebras. Given a structure $\As$, we can ask: what is the least value of $k$ such that a $\Ck$-coalgebra exists on $\As$?  We call this the \emph{coalgebra number} of $\As$. We shall find that for each of our comonads, the coalgebra number is a significant combinatorial parameter of the structure.

\subsection{The Ehrenfeucht-\Fraisse comonad and tree-depth}

A graph is $G = (V, {\adj})$, where $V$ is the set of vertices, and $\adj$ is the adjacency relation, which is symmetric and irreflexive.
A \emph{forest cover} for $G$ is a forest $(F, {\leq})$ such that $V \subseteq F$, and if $v \adj v'$, then $v \comp v'$.
The \emph{tree-depth} $\td(G)$ is defined to be $\min_{F} \hgt(F)$, where $F$ ranges over forest covers of $G$.\footnote{We formulate this notion in order-theoretic rather than graph-theoretic language, but it is equivalent to the definition in \cite{nevsetvril2006tree}.} It is clear that we can restrict  to forest covers of the form $(V, {\leq})$, since given a forest cover $(F, {\leq})$ of $G = (V, {\adj})$, $(V, \, {\leq} \cap V^2)$ is also a forest cover of $G$, and $\hgt(V) \leq \hgt(F)$. Henceforth, by forest covers of $G$ we shall mean those with universe $V$.

Given a $\sg$-structure $\As$, the \emph{Gaifman graph} $\Gf(\As)$ is $(A, \adj)$, where $a \adj a'$ iff for some relation $R \in \sg$, for some $(a_1, \ldots , a_n) \in \RA$, $a = a_i$, $a' = a_j$, $a \neq a'$. The \emph{tree-depth of $\As$} is $\td(\Gf(\As))$.

\begin{theorem}
\label{tdth}
Let $\As$ be a finite $\sg$-structure, and $k>0$. There is a bijective correspondence between
\begin{enumerate}
\item $\Ek$-coalgebras $\alpha : \As \rarr \Ek \As$.
\item Forest covers of $\Gf(\As)$ of height $\leq k$.
\end{enumerate}
\end{theorem}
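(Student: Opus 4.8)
The plan is to exhibit the two constructions explicitly and verify that they are mutually inverse. The heart of the matter is to decode what the coalgebra equations say about a homomorphism $\alpha : \As \to \Ek \As$. The counit law $\epsA \circ \alpha = \id_A$ forces $\alpha(a)$ to be a sequence ending in $a$; write $\alpha(a) = [a_1, \ldots, a_j]$ with $a_j = a$ and $j \leq k$. Unwinding the coassociativity law $\delta_A \circ \alpha = \Ek \alpha \circ \alpha$ is the key step: since $\delta_A [a_1,\ldots,a_j] = [[a_1],[a_1,a_2],\ldots,[a_1,\ldots,a_j]]$ and $\Ek\alpha [a_1,\ldots,a_j] = [\alpha(a_1),\ldots,\alpha(a_j)]$, comparing these componentwise yields $\alpha(a_i) = [a_1,\ldots,a_i]$ for each $i \leq j$. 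Thus the image of $\alpha$ is closed under prefixes, and since $\alpha$ is injective (again by the counit law, as $\alpha(a) = \alpha(a')$ gives $a = \epsA\alpha(a) = \epsA\alpha(a') = a'$), the assignment $a \mapsto \alpha(a)$ faithfully records a family of chains.

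From a coalgebra to a forest cover: define $a' \leq a$ iff $\alpha(a') \preford \alpha(a)$. Antisymmetry follows from injectivity, and the consequence of coassociativity shows that $\da(a) = \{a_1, \ldots, a_j\}$ is exactly the set of elements occurring in $\alpha(a)$, a finite chain of size $\leq k$; hence $(A, \leq)$ is a forest of height $\leq k$. For the cover condition, if $a \adj a'$ in $\Gf(\As)$ then $a$ and $a'$ occur together in some tuple of a relation $R$, say $(c_1,\ldots,c_n) \in \RA$, so $\REk(\alpha(c_1),\ldots,\alpha(c_n))$ holds because $\alpha$ is a homomorphism; its pairwise-comparability clause gives $\alpha(a) \comp \alpha(a')$, i.e. $a \comp a'$.

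Conversely, from a forest cover $(A, \leq)$ of $\Gf(\As)$ of height $\leq k$, define $\alpha(a)$ to be the increasing enumeration $[a_1, \ldots, a_j]$ of the finite chain $\da(a)$ (so $a_j = a$ and $j \leq \hgt \leq k$, placing $\alpha(a) \in \Alk$). The counit law is immediate, and coassociativity holds because $\da(a_i) = \{a_1, \ldots, a_i\}$ forces $\alpha(a_i) = [a_1,\ldots,a_i]$. That $\alpha$ is a homomorphism reduces to the three clauses defining $\REk$: the last-element clause is exactly the assumption $\RA(c_1,\ldots,c_n)$, and pairwise prefix-comparability follows because for any two distinct arguments $c_i, c_j$ the cover condition makes them $\leq$-comparable, while $\leq$-comparability of elements translates directly into prefix-comparability of their down-set enumerations.

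Finally, the two passages are mutually inverse: starting from $\alpha$ and forming $\leq$, the increasing enumeration of $\da(a)$ is precisely $[a_1,\ldots,a_j] = \alpha(a)$; starting from $\leq$ and forming $\alpha$, prefix-containment of down-set enumerations is equivalent to the original order, so $\leq$ is recovered. I expect the main obstacle to be the careful bookkeeping in the coassociativity step and in matching the pairwise-comparability clause of $\REk$ with the single binary cover condition on $\Gf(\As)$; once that correspondence is pinned down, the remaining verifications are routine.
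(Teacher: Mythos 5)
Your proposal is correct and follows essentially the same route as the paper: decode the two coalgebra equations to show $\alpha$ is injective with prefix-closed image satisfying $\alpha(a_i)=[a_1,\ldots,a_i]$, transport the order along $\alpha$ to get the forest cover (using pairwise comparability in $\REk$ for the cover condition), and invert by enumerating down-sets. The explicit check that the two passages are mutually inverse is a small addition the paper leaves implicit, but the argument is the same.
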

\begin{proof}
        Suppose that $\alpha : \As \to \Ek \As$ is a coalgebra. For $a \in A$, let $\alpha(a) = [a_1, \ldots , a_j]$, then the action of the comultiplication $\delta_{\As}$ on $\alpha(a)$ is $[[a_1],[a_1,a_2],\dots,[a_1,\ldots,a_j]]$. Hence, the first coalgebra equation states that $\alpha(a_i) = [a_1, \ldots , a_i]$, $1 \leq i \leq j$. The second states that $a_j = a$. Thus $\alpha : A \to \Alk$ is an injective map whose image is a prefix-closed subset of $\Alk$. Defining $a \leq a'$ iff $\alpha(a) \preford \alpha(a')$ yields a forest order on $A$, of height $\leq k$. If $a \adj a'$ in $\Gf(\As)$, then for some $a_1, \ldots , a_n$ with $a = a_i$, $a' = a_j$, we have $\RA(a_1, \ldots , a_n)$. Since $\alpha$ is a homomorphism, we must have $R^{\Ek \As}(\alpha(a_1), \ldots , \alpha(a_n))$. By the pairwise comparability condition in the definition of $R^{\Ek \As}$, $\alpha(a_i) \comp \alpha(a_j)$, and so $a_i \comp a_j$. Thus $(A, {\leq})$ is a forest cover of $\Gf(\As)$, of height $\leq k$.

Conversely, given such a forest cover $(A, {\leq})$, for each $a \in A$, its predecessors form a chain $a_1 < \cdots < a_j$, with $a_j = a$, and $j \leq k$.
We define $\alpha(a) = [a_1, \ldots , a_j]$, which yields a map $\alpha : A \to \Alk$, which evidently satisfies the coalgebra equations. If $\RA(a_1, \ldots , a_n)$, then since $(A, {\leq})$ is a forest cover, we must have $a_i \comp a_j$ for all $i, j$, and hence $\alpha(a_i) \comp \alpha(a_j)$.
Thus $R^{\Ek \As}(\alpha(a_1), \ldots , \alpha(a_n))$, and $\alpha$ is a homomorphism.
\end{proof}

\noindent We write $\cnE(\As)$ for the coalgebra number of $\As$ with respect to the Ehrenfeucht-\Fraisse comonad.

\begin{theorem}
For all finite structures $\As$: $\td(\As) \, = \, \cnE(\As)$.
\end{theorem}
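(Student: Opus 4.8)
The plan is to derive this as an essentially immediate corollary of Theorem~\ref{tdth}, which already supplies the crucial bridge between the coalgebraic (semantic) side and the combinatorial side. The only real work is to line up the two notions of minimality correctly.

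First I would unwind the two quantities. By definition $\cnE(\As)$ is the least $k$ for which a $\Ek$-coalgebra $\alpha : \As \to \Ek \As$ exists, whereas $\td(\As) = \td(\Gf(\As)) = \min_{F} \hgt(F)$, with $F$ ranging over forest covers of the Gaifman graph $\Gf(\As)$. Theorem~\ref{tdth} tells us that for each $k > 0$ the $\Ek$-coalgebras on $\As$ are in bijection with the forest covers of $\Gf(\As)$ of height $\leq k$. Consequently the predicate ``there is a $\Ek$-coalgebra on $\As$'' is logically equivalent to ``there is a forest cover of $\Gf(\As)$ of height $\leq k$,'' and it suffices to compare the least $k$ witnessing each.

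Next I would carry out this comparison. Write $d = \td(\As)$. A forest cover attaining the minimal height $d$ is in particular a forest cover of height $\leq d$, so by the bijection a $\mathbb{E}_{d}$-coalgebra exists, giving $\cnE(\As) \leq d$. Conversely, if a $\Ek$-coalgebra exists then, by the bijection, some forest cover has height $\leq k$, whence $d = \min_{F}\hgt(F) \leq k$; instantiating $k = \cnE(\As)$ gives $d \leq \cnE(\As)$. Combining the two inequalities yields $\cnE(\As) = d = \td(\As)$, as required.

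The only point needing a moment's care — and it is genuinely minor — is to confirm that $\min_{F}\hgt(F)$ is well-defined, i.e.\ that forest covers exist and that the minimal height is actually attained, so that no off-by-one or vacuous-minimum issue arises. For a finite $\As$ this is immediate: any linear order on $A$ is a chain, hence a one-branch tree in which every pair is comparable, so it is a forest cover, and its height $|A|$ bounds the set of achievable heights from above. Thus the minimum is taken over a nonempty, bounded set of realizable heights, and one checks directly that ``least $k$ admitting a height-$\leq k$ cover'' is exactly the minimal attained height (for $k < d$ every cover has height $\geq d > k$, so none of height $\leq k$ exists). The main ``obstacle,'' such as it is, is therefore purely this bookkeeping; the substantive content lives entirely in Theorem~\ref{tdth}.
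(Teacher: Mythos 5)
Your argument is correct and is essentially the paper's own proof, which simply observes that by Theorem~\ref{tdth} the conditions $\td(\As) \leq k$ and $\cnE(\As) \leq k$ are equivalent for all $k > 0$. Your additional bookkeeping about existence and attainment of the minimum is fine but not needed beyond this observation.
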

\begin{proof}
By Theorem~\ref{tdth}, for all  $k>0$, $\td(\As) \leq k$ iff $\cnE(\As) \leq k$.
\end{proof}

\subsection{The pebbling comonad and tree-width}

We review the notions of tree decompositions and tree-width. A tree $(T, {\leq})$ is a forest with a least element (the root). A tree is easily seen to be a meet-semilattice: every pair of elements $x, x'$ has a greatest lower bound $x \wedge x'$ (the greatest common ancestor). The path from $x$ to $x'$ is  the set
$\pth(x, x') := [x \wedge x', x] \cup [x \wedge x', x']$, where we use interval notation: $[y, y'] := \{ z \in T \mid y \leq z \leq y' \}$. 

A \textit{tree-decomposition} of a graph $G = (V, {\adj})$ is a tree $(T, {\leq})$ together with a labelling function $\lbfn : T \rarr \pow(V)$ satisfying the following conditions: 
\begin{itemize}
\item (TD1) for all $v \in V$, for some $x \in T$, $v \in \lbfn(x)$; 
\item (TD2) if $v \adj v'$, then for some $x \in T$, $\{ v, v' \} \subseteq \lbfn(x)$; 
\item (TD3) if $v \in \lbfn(x) \cap \lbfn(x')$, then for all $y \in \pth(x, x')$, $v \in \lbfn(y)$. 
\end{itemize}
The \emph{width} of a tree decomposition is given by $\max_{x \in T} |\lbfn(x)| -1$. We define the \textit{tree-width} $\tw(G)$ of a graph $G$ as $\min_{T} \mathsf{width}(T)$, where $T$ ranges over tree decompositions of $G$.
\begin{remark}\label{ordernicerem}
 We say that a tree decomposition is \emph{orderly} if it has the following property: for all $x \in T$, there is at most one $v \in \lambda(x)$ such that for all $x' < x$, $v \not\in \lambda(x')$.
By  \cite[Definition~13.1.4]{kloks1994treewidth}, \emph{nice} tree decompositions are orderly. Moreover, by  \cite[Lemma 13.1.2]{kloks1994treewidth}, the existence of a tree decomposition implies the existence of a nice tree decomposition of the same width. Hence, the existence of a tree decomposition implies the existence of an orderly one of the same width.
This observation will prove useful for our next result.
\end{remark}

We shall now give an alternative formulation of tree-width which will provide a useful bridge to the coalgebraic characterization. It is also interesting in its own right: it clarifies the relationship between tree-width and tree-depth, and shows how pebbling arises naturally in connection with tree-width.

A \textit{$k$-pebble forest cover} for a graph $G = (V, {\adj})$ is a forest cover $(V, {\leq})$ together with a pebbling function $p : V \to \kset$ such that, if $v \adj v'$ with $v \leq v'$, then for all $w \in (v,v']$, $p(v) \neq p(w)$.

The following result is implicit in \cite{abramsky2017pebbling}, but it seems worthwhile to set it out more clearly.

\begin{theorem}
Let $G$ be a finite graph. The following are equivalent:
\begin{enumerate}
\item $G$ has a tree decomposition of width $< k$.
\item $G$ has a $k$-pebble forest cover.
\end{enumerate}
\end{theorem}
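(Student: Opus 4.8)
The plan is to prove the two implications separately, exploiting the order-theoretic reformulation of tree-width and the freedom, noted in Remark~\ref{ordernicerem}, to assume tree decompositions are \emph{orderly}. The key bridge is that both notions organize the vertices of $G$ along a tree/forest, and the pebble bound $k$ on one side corresponds to the bag-size bound $k$ (width $<k$) on the other. The heart of the matter is that in a $k$-pebble forest cover, the pebbling condition guarantees that at any vertex $v$, the set of \emph{currently active} ancestors — those ancestors $w \le v$ whose pebble $p(w)$ has not been overwritten by a later vertex on the path down to $v$ — has size at most $k$, and this active set is exactly what will play the role of a bag $\lambda(v)$.

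For $(2) \Rightarrow (1)$, I would start from a $k$-pebble forest cover $(V,{\leq})$ with pebbling function $p : V \to \kset$. First I would turn the forest into a tree by adjoining a fresh root $\bot$ below all roots of the forest, so that $\pth(x,x')$ is well-defined throughout. Then for each $v \in V$ I would define the bag $\lambda(v)$ to be $v$ together with the set of \emph{active ancestors}: those $w < v$ such that no $w'$ with $w < w' \le v$ satisfies $p(w') = p(w)$. Since each active ancestor carries a distinct pebble from $\kset$ and $v$ itself carries one, $|\lambda(v)| \le k$, giving width $< k$. Verifying (TD1) is immediate ($v \in \lambda(v)$). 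For (TD2), if $v \adj v'$ with $v \le v'$, the pebbling condition says $p(v) \ne p(w)$ for all $w \in (v,v']$, so $v$ remains active at $v'$, whence $v \in \lambda(v')$ and $\{v,v'\} \subseteq \lambda(v')$. The connectivity axiom (TD3) follows from the observation that ``$v$ is active at $x$'' is an \emph{interval} condition along any root-to-node path: if $v$ appears in $\lambda(x)$ and $\lambda(x')$, then $v \le x \wedge x'$ and $v$ stays active all along $\pth(x,x')$ because no vertex on that path overwrites its pebble.

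For $(1) \Rightarrow (2)$, I would take an orderly tree decomposition $(T,{\leq}, \lambda)$ of width $< k$. Orderliness means each node $x$ introduces at most one new vertex, so I can define a partial map sending each $v \in V$ to the unique (by TD3 and orderliness) highest node $\beta(v) \in T$ at which $v$ is introduced; this lets me transfer the tree order on $T$ to a forest order $\leq$ on $V$ via $v \le v'$ iff $\beta(v) \le \beta(v')$ in $T$ and $v \in \lambda(\beta(v'))$. The crucial point is that TD3 forces $\{v' : \beta(v') \le \beta(v),\ v' \in \lambda(\beta(v))\}$ — the elements ``still live'' at $\beta(v)$ — to be linearly ordered under $\le$, giving a genuine forest cover (the covering axiom TD2 ensures adjacency implies comparability). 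I would then assign pebbles by a greedy colouring: walking down each branch, assign $p(v)$ to be any pebble in $\kset$ not currently used by a live ancestor; since at most $k$ vertices are simultaneously live in a bag of size $\le k$, a free pebble always exists, and the pebbling condition holds because a vertex keeps its pebble exactly while it remains live, i.e.\ in the bag.

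The main obstacle I anticipate is the $(1) \Rightarrow (2)$ direction, specifically making the passage from bags to a forest order precise and proving that the induced order is genuinely a forest — that predecessor sets are chains. This is where orderliness is indispensable: without it a node could introduce several vertices and the ``most recent introducer'' map $\beta$ would be ill-defined or multivalued, breaking the forest structure. I would therefore lean on Remark~\ref{ordernicerem} to assume orderliness from the outset, and the bulk of the careful argument would be checking that the live-vertex sets are totally ordered and that the greedy pebble assignment respects the forest-cover pebbling constraint. The $(2) \Rightarrow (1)$ direction, by contrast, is essentially a direct verification once the ``active ancestor'' bag is defined correctly.
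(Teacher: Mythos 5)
Your overall strategy coincides with the paper's: for $(2)\Rightarrow(1)$ the bags are exactly the paper's ``active predecessors'', and your verifications of (TD1)--(TD3) and of the size bound $|\lambda(v)|\leq k$ are the same argument; for $(1)\Rightarrow(2)$ the paper likewise uses orderliness, the least node $\tau(v)$ containing $v$, and a greedy pebble assignment $p(v):=\min(\kset\setminus\{p(v')\mid v'\in\lambda(\tau(v))\setminus\{v\}\})$. The $(2)\Rightarrow(1)$ half of your proposal is correct as written.

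There is, however, a concrete flaw in your $(1)\Rightarrow(2)$ direction: the relation you propose, $v\leq v'$ iff $\beta(v)\leq\beta(v')$ \emph{and} $v\in\lambda(\beta(v'))$, is not transitive, hence not an order. Take bags $\lambda(x_1)=\{a\}$, $\lambda(x_2)=\{a,b\}$, $\lambda(x_3)=\{b,c\}$ along a branch $x_1<x_2<x_3$: then $a\leq b$ and $b\leq c$ but $a\not\leq c$, since $a\notin\lambda(x_3)$. Relatedly, your justification that the order is a forest targets the wrong set: what must be a chain is the full predecessor set $\{v'\mid v'\leq v\}$, not merely the set of vertices still ``live'' at $\beta(v)$ (a vertex introduced earlier and since dropped from the bags is still a predecessor). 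The paper's definition repairs both issues at once: set $v\leq v'$ iff $\tau(v)\leq\tau(v')$ with no side condition. This is the pullback of the tree order along the map $\tau$, which is injective by orderliness, so it is a partial order whose predecessor sets embed into chains of $T$ and are therefore chains; comparability of adjacent vertices follows from (TD2) since $\tau(v),\tau(v')\leq x$ forces $\tau(v)\comp\tau(v')$ in a tree. The ``liveness'' you emphasize is still exactly what is needed, but only for the pebbling condition: if $v\adj v'$ and $v<w\leq v'$, then $\tau(v)\leq\tau(w)\leq\tau(v')\leq x$ for a bag $x$ containing both $v$ and $v'$, so (TD3) gives $v\in\lambda(\tau(w))$ and the greedy assignment forces $p(w)\neq p(v)$. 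With that one correction your argument goes through and is essentially the paper's proof.
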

\begin{proof}
$(1) \Rightarrow (2)$. Assume that $G = (V, \adj)$ has a tree decomposition $(T, {\leq}, \lambda)$ of width $< k$. 
As explained in Remark~\ref{ordernicerem},  without loss of generality we can assume that the given tree decomposition is orderly.

For any $v \in V$, the set of $x \in T$ such that $v \in \lambda(x)$ is non-empty by (TD1), and closed under meets by (TD3). Since $T$ is a tree, this implies that this set has a least element $\tau(v)$. This defines a function $\tau : V \to T$. The fact that the tree decomposition is orderly implies that $\tau$ is injective. We can define an order on $V$ by $v \leq v'$ iff $\tau(v) \leq \tau(v')$. This is isomorphic to a sub-poset of $T$, and hence is a forest order.

We define $p : V \to \kset$ by induction on this order. Assuming $p(v')$ is defined for all $v' < v$, we consider $\tau(v)$. Since the tree decomposition is orderly, this means in particular that $p(v')$ is defined for all $v' \in S  :=  \lambda(\tau(v)) \setminus \{v\}$. Since the decomposition is of width $< k$, we must have $|S| < k$. We set $p(v) := \min (\kset \setminus \{ p(v') \mid v' \in S \})$.

To verify that $(V, {\leq})$ is a forest cover, suppose that $v \adj v'$. By (TD2), for some $x \in T$, $\{ v, v' \} \subseteq \lambda(x)$. We have $\tau(v) \leq x \geq \tau(v')$, and since $T$ is a tree, we must have $\tau(v) \, \comp \, \tau(v')$, whence $v \, \comp \, v'$.

Finally, we must verify the condition on the pebbling function $p$. Suppose that $v \adj v'$, and $v < w \leq v'$. 
%We must show that $p(v) \neq p(w)$.
Since $v \adj v'$, for some $x$, $\{ v, v' \} \subseteq \lambda(x)$. But then $\tau(v) < \tau(w) \leq \tau(v') \leq x$. Since $v \in \lambda(\tau(v)) \cap \lambda(x)$, by (TD3), $v \in \lambda(\tau(w))$. By construction of the pebbling function, this implies $p(v) \neq p(w)$.

$(2) \Rightarrow (1)$. Suppose that $(V, {\leq}, p)$ is a $k$-pebble forest cover of $G$. We define a tree $T = V_{\bot}$ by adjoining a least element $\bot$  to $V$. We say that $v'$ is an active predecessor of $v$ if $v' \leq v$, and for all $w \in (v', v]$, $p(v') \neq p(w)$. We define the labelling function by setting $\lambda(v)$ to be the set of active predecessors of $v$; $\lambda(\bot) := \es$. Since $p |_{\lambda(v)}$ is injective, $|\lambda(v)| \leq k$.

We verify the tree decomposition conditions. (TD1) holds, since $v \in \lambda(v)$. (TD2) If $v \adj v'$, then $v \comp v'$. Suppose $v \leq v'$.
Then $v$ is an active predecessor of $v'$, and $\{ v, v' \} \subseteq \lambda(v')$.
(TD3) Suppose $v \in \lambda(v_1) \cap \lambda(v_2)$. Then $v$ is an active predecessor of both $v_1$ and $v_2$. This implies that for all $w \in \pth(v_1,v_2)$, $v$ is an active predecessor of $w$, and hence $v \in \lambda(w)$.
\end{proof}

\begin{theorem}
\label{kpfcth}
Let $\As$ be a finite $\sg$-structure. There is a bijective correspondence between:
\begin{enumerate}
\item $\Pk$-coalgebras $\alpha : \As \to \Pk \As$
\item $k$-pebble forest covers of $\Gf(\As)$.
\end{enumerate}
\end{theorem}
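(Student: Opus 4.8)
The plan is to follow the proof of Theorem~\ref{tdth} almost verbatim, since the definitions of $R^{\Pk \As}$ and $R^{\Ek \As}$ differ only in the additional pebble constraint (2); the parts of the argument governing the forest structure (clause (1)) and relation preservation (clause (3)) transfer directly, and the only genuinely new work is matching clause (2) with the defining condition of a $k$-pebble forest cover.

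First I would read off the shape of a coalgebra $\alpha : \As \to \Pk \As$ from the two coalgebra equations. Writing $\alpha(a) = [(p_1,a_1),\ldots,(p_j,a_j)]$, the counit law $\epsA \circ \alpha = \id$ gives $a_j = a$. Unwinding the comultiplication, $\delta_{\As}(\alpha(a)) = [(p_1,s_1),\ldots,(p_j,s_j)]$ where $s_i$ is the length-$i$ prefix of $\alpha(a)$, while $\Pk \alpha = (\alpha \circ \epsA)^*$ sends $\alpha(a)$ to $[(p_1,\alpha(a_1)),\ldots,(p_j,\alpha(a_j))]$. The coassociativity law then forces $\alpha(a_i) = s_i$ for every $i$, exactly as in the tree-depth case. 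As before, this makes $\alpha$ an injective map whose image is prefix-closed, so that $a \leq a'$ iff $\alpha(a) \preford \alpha(a')$ defines a forest order on $A$. I would then set the pebbling function $p(a)$ to be the pebble index of the last move of $\alpha(a)$, i.e.\ $p_j$ above; note the indices automatically lie in $\kset$.

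The forest-cover condition (if $a \adj a'$ then $a \comp a'$) follows, just as in Theorem~\ref{tdth}, from the pairwise-comparability clause (1) of $R^{\Pk\As}$ together with $\alpha$ being a homomorphism. The key new step is the pebble condition. Suppose $a \adj a'$ with $a \leq a'$; since $a, a'$ lie together in some relation tuple, the homomorphism property yields $R^{\Pk\As}(\ldots,\alpha(a),\ldots,\alpha(a'),\ldots)$ with $\alpha(a) \preford \alpha(a')$. Clause (2) asserts that the last pebble index of $\alpha(a)$, namely $p(a)$, does not occur among the moves forming the suffix of $\alpha(a)$ in $\alpha(a')$. By prefix-coherence, these suffix moves are indexed, position by position, by the elements $w$ with $a < w \leq a'$, the move corresponding to $w$ carrying pebble index $p(w)$. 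Hence $p(a) \neq p(w)$ for every $w \in (a,a']$, which is exactly the $k$-pebble forest cover condition.

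For the converse, given a $k$-pebble forest cover $(V,\leq,p)$ I would, for each $a$, list its predecessors as a chain $a_1 < \cdots < a_j = a$ and set $\alpha(a) := [(p(a_1),a_1),\ldots,(p(a_j),a_j)]$. The counit and coassociativity equations hold by construction (the predecessor chain of $a_i$ is the length-$i$ prefix of $\alpha(a)$), and the two constructions are visibly mutually inverse. To see $\alpha$ is a homomorphism, observe that the elements of any relation tuple are pairwise adjacent in $\Gf(\As)$, hence pairwise comparable, giving clause (1); clause (3) is immediate; and clause (2) is obtained by reversing the suffix/interval translation above, reading the pebble condition on each pair $a \leq a'$ of tuple elements to conclude that $p(a)$ avoids the pebble indices of all $w \in (a,a']$, i.e.\ of the suffix. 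I expect clause (2) --- keeping straight the correspondence between the suffix of $\alpha(a)$ in $\alpha(a')$ and the interval $(a,a']$, and between the suffix moves and their pebble indices --- to be the one point requiring genuine care; everything else is inherited directly from the tree-depth argument.
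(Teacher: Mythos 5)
Your proof is correct, and it follows exactly the intended route: the paper itself only cites \cite[Theorem 6]{abramsky2017pebbling} here, but the argument there is precisely the adaptation of Theorem~\ref{tdth} that you carry out, with the coalgebra laws forcing prefix-coherence and injectivity, the forest order read off from prefix ordering, and the new pebbling function extracted from the last move. Your handling of the one genuinely new point --- the translation between clause (2) of $R^{\Pk\As}$ (the last pebble index of $\alpha(a)$ not recurring in the suffix of $\alpha(a)$ in $\alpha(a')$) and the interval condition $p(a)\neq p(w)$ for $w\in(a,a']$ of a $k$-pebble forest cover --- is exactly right, since prefix-coherence and injectivity identify the suffix positions with the elements of that interval.
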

\begin{proof}
See \cite[Theorem 6]{abramsky2017pebbling}.
\end{proof}
\noindent We write $\cnP(\As)$ for the coalgebra number of $\As$ with respect to the pebbling comonad.

\begin{theorem}
For all finite structures $\As$: $\tw(\As) \, = \, \cnP(\As) - 1$.
\end{theorem}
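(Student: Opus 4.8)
The plan is to chain together the two equivalences that immediately precede this statement, exactly in the spirit of the tree-depth case. By definition $\cnP(\As)$ is the least $k$ such that $\As$ carries a $\Pk$-coalgebra, so it suffices to determine, for every $k \geq 1$, precisely when such a coalgebra exists, and then read off the least admissible $k$.

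First I would unfold the definition of the coalgebra number: for each $k \geq 1$, there is a $\Pk$-coalgebra $\alpha : \As \to \Pk \As$ iff $\cnP(\As) \leq k$. Next, by Theorem~\ref{kpfcth}, the existence of a $\Pk$-coalgebra on $\As$ is equivalent to the existence of a $k$-pebble forest cover of the Gaifman graph $\Gf(\As)$. Then, by the preceding theorem relating tree decompositions and pebble forest covers, a $k$-pebble forest cover of $\Gf(\As)$ exists iff $\Gf(\As)$ has a tree decomposition of width $< k$, that is, iff $\tw(\Gf(\As)) < k$. Since $\tw(\As)$ is defined to be $\tw(\Gf(\As))$, this is just $\tw(\As) < k$, equivalently $\tw(\As) \leq k - 1$.

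Combining these three equivalences gives, for all $k \geq 1$, $\cnP(\As) \leq k \IFF \tw(\As) \leq k - 1 \IFF \tw(\As) + 1 \leq k$. Hence the set of $k$ admitting a $\Pk$-coalgebra is exactly $\{ k : k \geq \tw(\As) + 1 \}$, whose least element is $\tw(\As) + 1$. Therefore $\cnP(\As) = \tw(\As) + 1$, which is the asserted identity $\tw(\As) = \cnP(\As) - 1$. Finiteness of $\As$ guarantees that $\cnP(\As)$ is well-defined, since $k = |A|$ always admits a coalgebra.

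The one point to watch is the index shift by one: the width of a tree decomposition is $\max_{x} |\lambda(x)| - 1$, whereas the pebble count is $k$, so the threshold ``width $< k$'' translates to ``$\tw(\As) \leq k - 1$'' rather than ``$\tw(\As) \leq k$''. This is precisely what produces the $-1$ in the statement, in contrast to the exact equality $\td(\As) = \cnE(\As)$, where forest covers of height $\leq k$ match coalgebras with no shift. Beyond tracking this off-by-one carefully, I expect no real obstacle: both underlying equivalences are already established, and the argument is a routine composition of them, directly parallel to the proof that $\td(\As) = \cnE(\As)$.
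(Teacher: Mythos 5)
Your proposal is correct and is exactly the argument the paper intends: the theorem is stated as an immediate consequence of Theorem~\ref{kpfcth} together with the preceding equivalence between tree decompositions of width $<k$ and $k$-pebble forest covers, mirroring the one-line derivation of $\td(\As)=\cnE(\As)$ from Theorem~\ref{tdth}. Your careful tracking of the off-by-one shift coming from the definition of width as $\max_x|\lambda(x)|-1$ is precisely the point that distinguishes this case from the tree-depth one.
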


\subsection{The modal comonad and synchronization tree height}
Let $\As$ be a Kripke structure. It will be convenient to write labelled transitions $a \labar{\alpha} a'$ for $\Ralph(a, a')$.
Given $a \in A$, the submodel generated by $a$, denoted $\Sa$, is obtained by restricting the universe to the set $S_a$ of $a'$ such that there is a path $a \labar{\alpha_1} \cdots \labar{\alpha_k} a'$. This submodel forms a \emph{synchronization tree} \cite{milner1980calculus} if for all $a'$, there is a unique such path. The height of such a tree is the maximum length of any path from the root $a$.

\begin{proposition}\label{modalcoalgprop}
Let $\As$ be a Kripke structure, with $a \in A$. The following are equivalent:
\begin{enumerate}
\item There is a coalgebra $\gamma : (\Sa,a) \to \Mk (\Sa, a)$.
\item $\Sa$ is a synchronization tree of height $\leq k$.
\end{enumerate}
\label{prop:modalCoalg}
\end{proposition}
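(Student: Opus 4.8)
The plan is to establish the equivalence by analyzing what a coalgebra structure $\gamma : (\Sa, a) \to \Mk(\Sa, a)$ must look like, using the two coalgebra laws in exactly the same spirit as the proof of Theorem~\ref{tdth}. Recall that by definition of $\Mk$, an element of $\Mk(\Sa, a)$ is an alternating sequence $[a, \alpha_1, a_1, \ldots, \alpha_j, a_j]$ of length $\leq k$ starting at $a$ and tracing a path in $\Sa$, and the comultiplication $\delta$ records all prefixes of such a sequence. The counit law $\epsA \circ \gamma = \id$ forces $\gamma(a')$ to be a sequence ending in $a'$, while the comultiplication law $\delta \circ \gamma = \Mk\gamma \circ \gamma$ forces the values of $\gamma$ to be coherent along prefixes, exactly as in the tree-depth case.

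First I would prove $(1) \Rightarrow (2)$. Assume a coalgebra $\gamma$ exists. The counit law gives $\gamma(a) = [a]$ (the only length-$1$ sequence ending at $a$ is the root), and more generally $\epsA(\gamma(a')) = a'$. The comultiplication law, unwound on a value $\gamma(a') = [a, \alpha_1, a_1, \ldots, \alpha_j, a_j]$ with $a_j = a'$, forces $\gamma(a_i) = [a, \alpha_1, a_1, \ldots, \alpha_i, a_i]$ for each $i$, so the image of $\gamma$ is prefix-closed and $\gamma$ is injective. Because $\gamma$ is a homomorphism preserving each $\Ralph$, whenever $a' \labar{\alpha} a''$ in $\Sa$ we must have $\RMA_{\alpha}(\gamma(a'), \gamma(a''))$, which by the definition of the modal comonad means $\gamma(a'') = \gamma(a')[\alpha, a'']$. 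This shows that the unique sequence $\gamma(a'')$ assigned to each element records \emph{the} path reaching it; since every element of $\Sa$ is reachable from $a$, injectivity and this successor condition together give that each element has a unique path from the root, so $\Sa$ is a synchronization tree, and the length bound $\leq k$ in $\Mk$ caps the height at $k$.

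For $(2) \Rightarrow (1)$, assume $\Sa$ is a synchronization tree of height $\leq k$. Then each $a' \in S_a$ has a unique path $a \labar{\alpha_1} \cdots \labar{\alpha_j} a'$ with $j \leq k$, and I define $\gamma(a') := [a, \alpha_1, a_1, \ldots, \alpha_j, a']$ to be this path, viewed as an element of $\Mk(\Sa, a)$. Uniqueness of paths makes $\gamma$ well-defined, and the coalgebra equations follow by direct inspection as above. It remains to check $\gamma$ is a homomorphism: for a unary $P$, $\PMA(\gamma(a')) \IFF \PA(\epsA \gamma(a')) \IFF \PA(a')$, so $P$ is preserved and reflected; for a binary $\Ralph$, $a' \labar{\alpha} a''$ implies that the unique path to $a''$ extends the unique path to $a'$ by $[\alpha, a'']$, so $\gamma(a'') = \gamma(a')[\alpha, a'']$ and hence $\RMA_{\alpha}(\gamma(a'), \gamma(a''))$ holds by definition.

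The main obstacle, and the point deserving the most care, is the treatment of the $\Ralph$-preservation in the forward direction: one must argue that the homomorphism condition on $\gamma$ combined with the definition $\RMA_{\alpha}(s,t) \IFF t = s[\alpha, a']$ really does pin down $\gamma(a'')$ as the one-step extension of $\gamma(a')$, thereby forcing uniqueness of paths rather than merely existence. This is what rules out a vertex being reachable by two distinct paths, which is precisely the synchronization-tree (as opposed to arbitrary generated submodel) condition. Everything else is a routine unwinding of the coalgebra laws parallel to Theorem~\ref{tdth}, so I would keep those verifications brief.
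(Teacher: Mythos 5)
Your proof is correct and takes essentially the same route as the paper's: both directions unwind the two coalgebra laws to show that $\gamma$ must encode, for each element of $S_a$, the unique path reaching it from $a$, with the homomorphism condition on $\Ralph$ forcing $\gamma(a'') = \gamma(a')[\alpha,a'']$ and hence uniqueness of paths. Your version is slightly more explicit than the paper's (notably in verifying the homomorphism property in the converse direction and in spelling out why transition-preservation pins down path uniqueness), but there is no substantive difference in approach.
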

\begin{proof}
        Suppose there is a coalgebra $\gamma:(\Sa,a) \rarr \Mk(\Sa,a)$. For $a' \in S_a$, let $\gamma(a') = [a_0,\alpha_{1},a_{1},\dots,\alpha_{j},a_j]$ where $a_0 = a$. The first coalgebra equation states that $\gamma(a_{i}) = [a_0,\alpha_{1}\dots,\alpha_{i},a_{i}]$ for $0 \leq i \leq j$. The second coalgebra equation states that $a_j = a'$. Therefore, $\gamma$ is injective. By injectivity and since $
        \gamma$ is a homomorphism and thus preserves transitions, the path of transitions determined by $\gamma(a')$, i.e. $a \labar{\alpha_{1}} a_1 \cdots \labar{\alpha_{j}} a'$ is unique. Hence, every $a' \in S_a$ has a unique path from $a$, so the submodel generated by $a$ is a synchronization tree. Since $j \leq k$, the height of the tree is at most $k$.

        Conversely, suppose $\Sa$ is a synchronization tree of height $\leq k$, then for every $a'$ in this submodel, there exists a unique path of transitions $a \labar{\alpha_{1}} \cdots \labar{\alpha_{j}} a'$. We can define the morphism $\gamma:(\Sa,a) \rarr \Mk(\Sa,a)$ as $\gamma(a') = [a,\alpha_{1},a_1,\dots,\alpha_{j},a']$. Since the path of transitions is unique, the first coalgebra equation is satisfied. Moreover, since the last element of $\gamma(a')$ is $a'$, the second coalgebra equation is satisfied.  
\end{proof}

\noindent We define the modal depth $\md(\As, a) = k$ if the submodel $\Sa$ generated by $a$ is a synchronization tree of height $k$.

\begin{theorem}
Let $\As$ be a Kripke structure, and $a \in A$ be such that the submodel generated by $a$ is a synchronization tree of finite height.
Then $\md(\As, a) = \cnM(\As, a)$.
\end{theorem}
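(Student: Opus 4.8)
The plan is to derive this theorem from Proposition~\ref{prop:modalCoalg} in exactly the manner that the tree-depth equality was derived from Theorem~\ref{tdth}: rather than comparing the two invariants directly, I would show that for every $k$ they exhibit the same threshold behaviour, namely that an $\Mk$-coalgebra exists precisely when $\md(\As,a) \leq k$, and then read off the equality of the two least such $k$.

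Unwinding the definitions, $\cnM(\As,a)$ is by definition the least $k$ for which $(\As, a)$ carries an $\Mk$-coalgebra. Proposition~\ref{prop:modalCoalg} tells us that such a coalgebra exists if and only if the generated submodel $\Sa$ is a synchronization tree of height $\leq k$; and by definition $\md(\As,a)$ is exactly the height of this synchronization tree. Hence the set of $k$ admitting an $\Mk$-coalgebra is precisely $\{ k \mid \md(\As,a) \leq k \}$, whose least element is $\md(\As,a)$ itself. This gives $\cnM(\As,a) = \md(\As,a)$.

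The one point requiring care — and the main obstacle — is the mismatch of carriers: Proposition~\ref{prop:modalCoalg} concerns coalgebras on the generated submodel $(\Sa, a)$, whereas the coalgebra number is phrased for $(\As, a)$. I would resolve this by invoking the counit law $\epsA \circ \gamma = \id$: for any coalgebra $\gamma : (\As, a) \to \Mk(\As, a)$ and any $a' \in A$, the element $\gamma(a')$ is a sequence in $\Mk(\As,a)$ whose last entry is $a'$, so $a'$ lies on a path from $a$ and thus $a' \in S_a$. Consequently a coalgebra on $(\As, a)$ can exist only when $A = S_a$, in which case it is literally a coalgebra on $(\Sa, a)$; so Proposition~\ref{prop:modalCoalg} applies verbatim, and there is no loss in reading $\cnM(\As,a)$ as the coalgebra number of the generated submodel. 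The finiteness assumption on the height guarantees that this least $k$ is attained and finite, so both quantities are well-defined and equal.
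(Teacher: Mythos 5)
Your proposal is correct and matches the paper's (implicit) argument: the paper gives no separate proof of this theorem, treating it as an immediate consequence of Proposition~\ref{prop:modalCoalg} via exactly the threshold reading you describe, in parallel with how $\td(\As) = \cnE(\As)$ is read off from Theorem~\ref{tdth}. Your additional observation that the counit law forces $A = S_a$ for any coalgebra on $(\As,a)$ --- so that the proposition's statement about $(\Sa,a)$ genuinely governs $\cnM(\As,a)$ --- addresses a carrier mismatch the paper leaves implicit, and is precisely the point behind the paper's remark on the ``conditional nature'' of this result.
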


Note the conditional nature of this result, which contrasts with those for the other comonads. The modal comonad is defined in such a way that the universe $\Mk (A, a)$ reflects information about the possible transitions. Thus having a coalgebra at all, regardless of the value of the resource parameter, is a strong constraint on the structure of the transition system.

\section{Characterization of Rossman-type equivalences}
\label{sec:Rossman}

As a simple application of the results of the previous section, we show how they yield characterizations of some approximation preorders on structures, and their associated equivalences.

We begin with a completely general result. Given objects $A$, $B$ of a category $\CC$, we shall use the notation $A \to B$ to mean that there exists a morphism from $A$ to $B$ in $\CC$.

\begin{proposition}
\label{coalgapproxprop}
Let $G$ be a comonad on a category $\CC$. For all objects $A$, $B$ of $\CC$, the following are equivalent: 
\begin{enumerate}
    \item $GA \to B$
    \item For all $G$-coalgebras $\alpha : C \to GC$, $C \to A \IMP C \to B$
\end{enumerate}
\end{proposition}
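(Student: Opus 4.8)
The plan is to prove both implications using the comonad laws, exploiting the fact that for any coalgebra $\alpha : C \to GC$, the counit law $\epsA_C \circ \alpha = \id_C$ makes $\alpha$ a section of the counit. The key observation is that $\id_{GA} : GA \to GA$ is itself a (the cofree) $G$-coalgebra via $\delta_A$, and more to the point, $GA$ carries a canonical coalgebra structure, so the universally quantified statement in (2) can be instantiated at this particular coalgebra to recover (1).

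For the direction $(1) \Rightarrow (2)$, I would assume given a morphism $f : GA \to B$, a coalgebra $\alpha : C \to GC$, and a morphism $g : C \to A$. I must produce a morphism $C \to B$. The natural candidate is $f \circ Gg \circ \alpha$: here $\alpha : C \to GC$, then $Gg : GC \to GA$ applies the functor to $g$, and finally $f : GA \to B$. This composite visibly has type $C \to B$, which suffices since we only need existence of a morphism. No commuting diagram is required for this direction — we are merely exhibiting a morphism, so this step is immediate once the types are checked.

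For the converse $(2) \Rightarrow (1)$, the idea is to instantiate the universally quantified coalgebra at $C := GA$ with the canonical coalgebra $\delta_A : GA \to G(GA) = G C$, which is indeed a $G$-coalgebra by the comonad axioms (coassociativity and the counit law). For this instance I need a morphism $GA \to A$, and the counit $\epsA_A : GA \to A$ provides exactly such a morphism, so the hypothesis $C \to A$ is satisfied. Then (2) yields a morphism $C \to B$, i.e.~$GA \to B$, which is precisely statement (1).

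The main obstacle — really the only subtle point — is verifying that $\delta_A : GA \to G(GA)$ genuinely defines a $G$-coalgebra on the object $GA$; this is exactly the content of the two comonad diagrams recalled in Section~\ref{comonadsec}, with the coassociativity square giving the first coalgebra equation and the counit triangle giving the second. Once this is in hand the proof is purely formal. I would present this cofree-coalgebra instantiation explicitly, since it is the conceptual crux: the statement (2) ranges over \emph{all} coalgebras, and the trick is that the cofree coalgebra on $A$ is the universal witness that forces (1).
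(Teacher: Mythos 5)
Your proof is correct and follows essentially the same route as the paper: the forward direction exhibits the composite $f \circ Gg \circ \alpha$, and the converse instantiates (2) at the cofree coalgebra $\delta_A : GA \to GGA$ with the counit $\varepsilon_A : GA \to A$ witnessing $GA \to A$. The paper states this more tersely but the argument is identical.
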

\begin{proof}
Suppose $f : GA \to B$, $\alpha : C \to GC$ is a coalgebra, and $g : C \to A$. Then
$f \circ Gg \circ \alpha : C \to B$.

Conversely, suppose that for all coalgebras $\alpha : C \to GC$, $C \to A$ implies $C \to B$. Then since $\delta_A : GA \to GGA$ is a coalgebra, and $\varepsilon_A : GA \to A$, we conclude that $GA \to B$.
\end{proof}

Rossman defined the following preorder in \cite{Rossman2008}: $\As \arEk \Bs$ iff for all structures $\Cs$ with $\td(\Cs) \leq k$, $\Cs \to \As \IMP \Cs \to \Bs$. The associated equivalence $\eqk$ plays a major role in  \cite{Rossman2008}, as a resource-bounded approximation to homomorphism equivalence.

We can similarly define $\As \arPk \Bs$ in terms of treewidth: $\As \arPk \Bs$ iff for all structures $\Cs$ with $\tw(\Cs) < k$, $\Cs \to \As \IMP \Cs \to \Bs$ \footnote{We use strict inequality, in this case, due to the fact that tree-width is one less than the $\Pk$-coalgebra number.}.

Finally, we can define $(\As, a) \arMk (\Bs, b)$ iff for all synchronization trees $(\Cs, c)$ with $\md(\Cs, c) \leq k$, $(\Cs, c) \to (\As, a) \IMP (\Cs, c) \to (\Bs, b)$.

As an immediate consequence of Proposition~\ref{coalgapproxprop} and the results of the previous section, we obtain:

\begin{proposition}
For all structures $A$, $B$ in $\CS$:
\begin{enumerate}
\item $\As \arEk \Bs \;\; \IFF \;\; \Ek \As \to \Bs$
\item $\As \arPk \Bs \;\; \IFF \;\; \Pk \As \to \Bs$
\item $(\As, a) \arMk (\Bs, b) \;\; \IFF \;\; \Mk (\As, a) \to (\Bs, b)$ (when $\As,\Bs$ are Kripke structures)
\end{enumerate}

\end{proposition}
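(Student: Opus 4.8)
The plan is to derive all three statements as direct instances of the general Proposition~\ref{coalgapproxprop}, once I identify the comonad $G$, the category $\CC$, and the relevant class of coalgebra objects in each case. The key observation is that Proposition~\ref{coalgapproxprop} says $GA \to B$ iff for all $G$-coalgebras $\alpha : C \to GC$, $C \to A$ implies $C \to B$. So the only gap between that proposition and the claimed equivalences is to show that, in each case, the structures $\Cs$ quantified over in the definition of $\arEk$, $\arPk$, $\arMk$ are \emph{exactly} the carriers of coalgebras for the corresponding comonad, as identified by the coalgebra-number results of Section~\ref{coalgsec}.

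First, for part (1), I would apply Proposition~\ref{coalgapproxprop} with $G = \Ek$ and $\CC = \CS$. This immediately gives $\Ek \As \to \Bs$ iff for all $\Ek$-coalgebras $\alpha : \Cs \to \Ek \Cs$, $\Cs \to \As \IMP \Cs \to \Bs$. It then remains to match the condition ``there exists an $\Ek$-coalgebra on $\Cs$'' with the condition ``$\td(\Cs) \leq k$'' from the definition of $\arEk$. This is precisely the content of the coalgebra-number theorem: by Theorem~\ref{tdth} (and the subsequent identification $\td(\As) = \cnE(\As)$), $\Cs$ admits an $\Ek$-coalgebra if and only if $\td(\Cs) \leq k$. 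Substituting this equivalence into the quantifier yields exactly $\As \arEk \Bs \IFF \Ek \As \to \Bs$.

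Parts (2) and (3) follow the same template. For (2), I take $G = \Pk$ on $\CS$, apply Proposition~\ref{coalgapproxprop}, and use the identification $\tw(\As) = \cnP(\As) - 1$ (via Theorem~\ref{kpfcth}): a $\Pk$-coalgebra exists on $\Cs$ iff $\tw(\Cs) < k$, which is exactly the side condition in the definition of $\arPk$ (explaining the strict inequality, as noted in the footnote). For (3), I take $G = \Mk$ on the pointed category $\CSp$, apply Proposition~\ref{coalgapproxprop} to the pointed structures $(\As,a)$, $(\Bs,b)$, and use Proposition~\ref{modalcoalgprop} together with the definition $\md(\As,a) = \cnM(\As,a)$: a $\Mk$-coalgebra on $(\Cs,c)$ exists iff the submodel generated by $c$ is a synchronization tree of height $\leq k$, i.e.\ $\md(\Cs,c) \leq k$, matching the definition of $\arMk$.

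I expect no serious obstacle, since the whole argument is a clean instantiation of an already-proved general fact combined with already-proved bijective correspondences; the only point requiring a moment's care is the bookkeeping in the modal case, where one must work in $\CSp$ and ensure the quantification ranges over pointed synchronization-tree structures $(\Cs,c)$ rather than arbitrary pointed structures. This is handled by Proposition~\ref{modalcoalgprop}, which guarantees that the $\Mk$-coalgebra objects are precisely the synchronization trees of bounded height, so the match is exact.
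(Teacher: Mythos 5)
Your proposal is correct and is essentially identical to the paper's own argument: the paper likewise obtains all three parts as an immediate consequence of Proposition~\ref{coalgapproxprop} combined with the coalgebra-number characterizations of tree-depth, tree-width, and modal depth from the preceding section. The only point you spell out more explicitly than the paper is the translation between ``$\Cs$ has combinatorial parameter $\leq k$'' and ``$\Cs$ carries a $\Ck$-coalgebra,'' which is exactly the intended use of those results.
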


\section{Coalgebras and conjunctive queries}
\label{sec:conjunctive}

We shall now connect coalgebras with logic. We will show that they are closely related to \emph{conjunctive queries}.
We shall focus on boolean conjunctive queries, \ie closed formulas built from atomic formulae using only conjunction and existential quantification. We allow the empty conjunction $\top$.
Given a finite structure $\As$ with universe $A = \{ a_1, \ldots , a_n\}$, the canonical conjunctive query for $\As$ is a formula $\QA := \exists v_1 , \ldots , v_n. \bigwedge \{ R(\vv) \mid R \in \sg, \va \in \RA\}$.
Here the correspondence $v_i \leftrightarrow a_i$ uses the  linear ordering on the universe implicitly given by the enumeration.

The key property of the canonical conjunctive query is the following \cite{chandra1977optimal}:
\begin{theorem}
For all finite structures $\As$, $\Bs$, $\As \to \Bs$ iff $\Bs \models \QA$.
\end{theorem}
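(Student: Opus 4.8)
The plan is to prove both directions of the biconditional by exhibiting explicit homomorphisms, using the canonical conjunctive query $\QA$ as a bridge between syntax and semantics. The formula $\QA$ is built so that its existentially quantified variables $v_1, \ldots, v_n$ correspond bijectively to the elements $a_1, \ldots, a_n$ of $A$, and the matrix is the conjunction of all atomic facts that hold in $\As$. The guiding intuition is that a satisfying assignment for $\QA$ in $\Bs$ is exactly the data of a function $A \to B$ that respects all the relations of $\As$, \ie a homomorphism.

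First I would prove the forward direction. Suppose $h : \As \to \Bs$ is a homomorphism. I claim the assignment sending each variable $v_i$ to the element $h(a_i) \in B$ witnesses $\Bs \models \QA$. For each conjunct $R(\vv)$ of $\QA$, the definition of $\QA$ guarantees that the corresponding tuple $\va$ lies in $\RA$. Since $h$ is a homomorphism, $\RA(\va)$ implies $\RB(h(\va))$, so the image tuple satisfies $R$ in $\Bs$. As every conjunct is satisfied under this assignment, the existential sentence $\QA$ holds in $\Bs$.

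Conversely, suppose $\Bs \models \QA$. Then there is an assignment of elements $b_1, \ldots, b_n \in B$ to the variables $v_1, \ldots, v_n$ under which the matrix of $\QA$ is true. I would define $h : A \to B$ by $h(a_i) := b_i$, using the fixed enumeration of $A$ to make this well-defined. To check that $h$ is a homomorphism, take any $R \in \sg$ and any tuple $\va \in \RA$. By construction every such tuple contributes a conjunct $R(\vv)$ to $\QA$, and since the witnessing assignment satisfies this conjunct, $\RB(h(\va))$ holds. Thus $h$ preserves all relations and is a homomorphism $\As \to \Bs$.

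The argument is essentially a routine unwinding of definitions, so there is no serious obstacle; the only point requiring mild care is the bookkeeping between elements and variables, which the fixed linear ordering on the universe $A$ makes precise. One should note that finiteness of $\As$ is used to ensure $\QA$ is a genuine (finitary) formula with finitely many variables and a finite conjunction, so that the notion of satisfaction is the standard first-order one.
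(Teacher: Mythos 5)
Your proof is correct and is the standard Chandra--Merlin argument: a homomorphism yields a satisfying assignment for $\QA$ via $v_i \mapsto h(a_i)$, and conversely a witnessing assignment defines a homomorphism, with well-definedness secured by the fixed enumeration of $A$. The paper itself gives no proof of this theorem (it simply cites Chandra and Merlin), so there is nothing to compare against; your unwinding of the definitions is exactly the intended argument, including the correct observation that finiteness of $\As$ is what makes $\QA$ a genuine first-order sentence.
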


We now consider a number of rewrite rules on formulas:
\[ \begin{array}{ll}
(R1) & \mbox{Associative-Commutative-Identity rewriting of conjunctions} \\
(R2) & \exists v.\, (\vphi \AND \psi) \rew (\exists v.\, \vphi) \AND \psi\;\;(v \not\in \FV(\psi)) \\
(R3) & \exists v. \, \exists w. \, \vphi \rew \exists w. \, \exists v. \, \vphi \\
(R4) & \exists v.\, \vphi \rew \exists w.\, \phi[w/v]
\end{array}
\]
Note that $(R1)$ amounts to taking formulas as a commutative monoid under the binary operation of conjunction, with $\top$ as the identity.

\subsection{The Ehrenfeucht-\Fraisse case}
As usual, we can use a rule $\theta \rew \chi$ to rewrite a formula $C[\theta]$ to $C[\chi]$. We write $\vphi \redE \psi$ if $\vphi$ can be rewritten to 
$\psi$ in some finite number of steps using rules (R1)--(R3).

\begin{theorem}
Let $\As$ be a finite structure, and $\QA$ its canonical conjunctive query. The following are equivalent:
\begin{enumerate}
\item $\As$ has a coalgebra $\As \to \Ek \As$
\item $\QA \redE \vphi$, where $\vphi$ has quantifier rank $\leq k$.
\end{enumerate}
\label{thm:efRewrite}
\end{theorem}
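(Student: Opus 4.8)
The plan is to establish the equivalence by exploiting the bijective correspondence from Theorem~\ref{tdth} between $\Ek$-coalgebras on $\As$ and forest covers of $\Gf(\As)$ of height $\leq k$, and then showing that such a forest cover is exactly the data needed to drive a sequence of rewrites taking $\QA$ to a formula of quantifier rank $\leq k$. The governing intuition is that quantifier rank of a conjunctive query, after optimal rewriting, measures the tree-depth of the underlying structure: nesting existential quantifiers along the branches of a forest cover lets us re-use quantifier depth across incomparable subtrees, and the height of the forest bounds the depth we ever need.

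For the direction $(1) \Rightarrow (2)$, I would first invoke Theorem~\ref{tdth} to convert the coalgebra into a forest cover $(A, {\leq})$ of $\Gf(\As)$ with $\hgt \leq k$. I would then describe an explicit rewriting procedure guided by this forest order. The idea is to build $\vphi$ recursively over the forest: to each node $a \in A$ associate the conjunction of all atoms $R(\vv)$ all of whose arguments are $\leq a$ in the forest and in which $a$ actually appears (or more carefully, the atoms ``introduced'' at $a$), and then nest the existential quantifier $\exists v_a$ outside the subformulas corresponding to the children of $a$. Because every atom $R(a_1,\dots,a_n)$ of $\QA$ has its arguments pairwise comparable in the forest (this is the forest-cover condition via the Gaifman graph), each atom lives entirely on a single branch, so it can be slotted underneath the quantifier for its maximal argument. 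The rewrites (R1)--(R3) are precisely what is needed to massage the flat prefix $\exists v_1 \cdots \exists v_n$ of $\QA$ into this nested, branch-structured shape: (R3) permutes quantifiers into the order determined by the forest, (R2) pulls each quantifier inward past the conjuncts that do not mention it so that a quantifier for a node only scopes over the atoms on its own subtree, and (R1) regroups the conjunctions to match the tree structure. The resulting formula has quantifier rank equal to $\hgt(A, {\leq}) \leq k$, since the maximal nesting of quantifiers along any branch equals the length of the longest chain.

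For the converse $(2) \Rightarrow (1)$, I would argue that since the rules (R1)--(R3) are all logical equivalences that preserve the set of free variables and the ``atom graph'' of the formula, any $\vphi$ with $\QA \redE \vphi$ has a quantifier nesting structure that induces a forest order on the variables (equivalently, on $A$ via the correspondence $v_i \leftrightarrow a_i$): declare $v \leq w$ iff the quantifier $\exists v$ scopes over $\exists w$ in $\vphi$ and $v$ lies on the branch below $w$. The key points to verify are that this relation is genuinely a forest order of height $\leq k$ (height bounded by the quantifier rank of $\vphi$), and that it is a forest \emph{cover} of $\Gf(\As)$, i.e.\ whenever $a \adj a'$ the corresponding variables are comparable. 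The latter holds because an edge in the Gaifman graph comes from an atom of $\QA$, and rule (R2) can only have been applied to move a quantifier $\exists v$ inward past a conjunct $\psi$ when $v \notin \FV(\psi)$; hence any atom mentioning both $v$ and $w$ forces one quantifier to remain within the scope of the other, making the two variables comparable in the induced order. Applying Theorem~\ref{tdth} in reverse then yields the coalgebra.

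The main obstacle I anticipate is the bookkeeping in $(2) \Rightarrow (1)$: extracting a clean, well-defined forest order from an arbitrary rewrite sequence, and proving it is invariant under the allowed rewrites. One must show that no sequence of (R1)--(R3) steps can ever destroy comparability forced by a shared atom, which amounts to a careful induction on the rewrite sequence tracking how the scope-nesting relation evolves; the side condition $v \notin \FV(\psi)$ in (R2) is doing the essential work and must be used precisely. A secondary subtlety is the role of (R4), the renaming rule, which is deliberately \emph{excluded} from $\redE$ (only (R1)--(R3) are permitted here); I would remark on why admitting variable reuse via (R4) would break the correspondence, since it would allow lower quantifier rank than tree-depth by identifying variables, and is therefore reserved for the finite-variable (pebbling) case. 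With these caveats handled, both directions reduce cleanly to Theorem~\ref{tdth}.
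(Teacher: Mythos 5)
Your proposal is correct and follows essentially the same route as the paper: both directions rest on the correspondence between quantifier-scope nesting and forest orders, using Theorem~\ref{tdth} to translate coalgebras into forest covers, proving that (R1)--(R3) preserve the forest-cover property of the induced order for $(2)\Rightarrow(1)$, and performing a forest-guided rewrite for $(1)\Rightarrow(2)$ (which the paper organizes as a leaves-first induction on $|A|$, introducing fresh relation symbols to abbreviate the already-quantified subformulas). The only slip is your stated reason for excluding (R4): renaming cannot lower quantifier rank, which is invariant under $\alpha$-conversion; the rule is omitted because it would destroy the bijective variable--element correspondence needed to read a forest order on $A$ off the rewritten formula.
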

\begin{proof}
Firstly, we note that given any conjunctive query $\chi$ satisfied by $\As$, with variable $v_i$ interpreted by $a_i$, there is a corresponding forest order on $A$, with $a_i < a_j$ if $v_j$ occurs in the scope of $\exists v_i$ in $\chi$.

Now assuming (2), there is a  forest order on $A$ corresponding to $\vphi$.  Since the quantifier rank of $\vphi$ is $\leq k$, so is the height of this order.
Moreover, the order induced by $\QA$ is obviously a forest cover of the Gaifman graph $\Gf(A)$, and application of the rules (R1)--(R3) preserves that property of the orders induced by conjunctive queries.  In particular, in applying (R2), the free variable condition implies that no elements which become incomparable in the corresponding order are adjacent in the Gaifman graph. By induction on the length of the rewrite sequence, we conclude that the order induced by $\vphi$ is a forest cover, and hence, by Theorem~\ref{tdth}, determines an $\Ek$-coalgebra on $\As$.

For the converse, we argue by induction on the cardinality of $A$, or equivalently on the number $n$ of variables in $\QA$. 
We prove the following statement $\Phi(n)$ by induction on $n$:

For all finite relational vocabularies $\sigma$, and for all $\sigma$-structures $\As$ of cardinality $n$, if $\As$ has a forest cover of height $k$, then $\QA \redE \vphi$, where the quantifier rank of $\vphi$ is $\leq k$.

The base case $n=1$ is trivial. Now consider 
\[\QA = \exists v_1 , \ldots , v_j. \exists v_{j+1}, \ldots , v_n . \bigwedge_{i \in I} R_i(\vec{v}_i) . \]
We assume that the variables are in an order which linearizes the given forest order on $A$. In particular, $v_{j+1}, \ldots , v_n$ are the leaves (maximal elements) of maximum height $k$ in the forest order. Note that if $v$ and $v'$ are leaves, they cannot be adjacent in the Gaifman graph of $\As$, since this would violate the forest cover condition. Hence we can partition $I$ as $I = I_{j+1} \sqcup \cdots I_n \sqcup L$, where $I_p$ is the set of indices labelling atomic formulas in which $v_p$ occurs, $p = j+1, \ldots , n$. Using (R1) and (R2), we can rewrite $\QA$ to 
\[ \vphi \, := \, \exists v_1 , \ldots , v_j. \, (\bigwedge_{p=j+1}^n \chi_p) \wedge \theta \]
where $\chi_p := \exists v_p. \, \bigwedge_{j \in I_p} R_j(\vv_j)$ and $\theta := \bigwedge_{l \in L} R_l(\vv_l)$.
We now define 
\[ \vphi' := \exists v_1 , \ldots , v_j. (\bigwedge_{p=j+1}^n R_p(\ww)) \wedge \theta , \]
where $R_p$ is a new relation symbol, and $\ww_p$ lists all the variables other than $v_p$ occurring in $\chi_p$. 
This formula $\vphi'$ is the canonical conjunctive query for a new structure $\As'$, of cardinality $j<n$. There are many new adjacencies in the Gaifman graph of $\As'$ compared with that for $\As$. However, note that, since (the element of $A$ labelled by) each $v_p$ was maximal in the given forest cover of $\As$, the set of variables adjacent to $v_p$ must be linearly ordered in that forest cover. Hence the restriction of the forest cover on $\As$ to $\{ v_1, \ldots , v_j \}$ is a forest cover on $\As'$, of height $k-1$.
Our induction hypothesis can be applied to this forest cover on $\As'$ as a $\sigma'$-structure, where $\sigma' = \sigma \cup \{R_p \mid p = j+1, \ldots , n \}$, yielding a rewrite $\vphi' \redE \psi'$, where $\psi'$ has quantifier rank $ m < k$.
We can perform the same rewrite steps to obtain $\vphi \redE \psi$, 
where $\psi$ results from $\psi'$ by replacing each $R_p(\ww)$ by $\chi_p$. We have $\QA \redE \vphi \redE \psi$. Moreover, the quantifier rank of $\psi$ is $m+1 \leq k$, as required.
\end{proof}
%Note that (R3) is not really needed in the above analysis. its role is simply to render the argument independent of which linear order on $A$ is chosen in passing to ``the'' conjunctive query $\QA$. 
We can think of (R1)--(R3) as encoding a non-deterministic algorithm for computing the minimum quantifier rank for $\QA$ (and hence the tree-depth of $\As$). We use (R3) to guess an order on the variables, and then apply (R1) and (R2) to the quantifier prefix from the inside out. The above argument shows that this algorithm does, for some choice of order, find the minimum quantifier rank.

\subsection{The pebbling case}
We can give a very similar analysis for the pebbling case. We define $\vphi \redP \psi$ if $\vphi$ can be rewritten to 
$\psi$ in some finite number of steps using rules (R1), (R2), and (R4). This set of rules is widely used in query optimization \cite{dalmau2002constraint,ullman1989database}.
\begin{theorem}
Let $\As$ be a finite structure, and $\QA$ its canonical conjunctive query. The following are equivalent:
\begin{enumerate}
\item $\As$ has a coalgebra $\As \to \Pk \As$
\item $\QA \redP \vphi$, where $\vphi$ has number of variables $\leq k$.
\end{enumerate}
\end{theorem}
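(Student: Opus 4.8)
The plan is to follow the template of the proof of Theorem~\ref{thm:efRewrite}, replacing ``forest cover'' by ``$k$-pebble forest cover'' and ``quantifier rank'' by ``number of variables'', and invoking Theorem~\ref{kpfcth} in place of Theorem~\ref{tdth}. The organising idea is that from any formula reached from $\QA$ by the rules (R1), (R2), (R4) one reads off a forest order on $A$ (from the nesting of quantifiers, each quantifier still corresponding to the element it bound in $\QA$) together with a pebbling function $p$ sending each element to the index of the variable name carried by its quantifier. When $\vphi$ uses at most $k$ names, $p$ takes values in $\kset$. So it suffices to prove that $\Gf(\As)$ has a $k$-pebble forest cover iff $\QA \redP \vphi$ with $\vphi$ using at most $k$ variables; the stated theorem then follows from Theorem~\ref{kpfcth}.

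For $(2) \Rightarrow (1)$ I would show that the pair (forest order, pebbling) read off from a formula is a $k$-pebble forest cover, and that this is an invariant of the rewriting. Rule (R1) leaves the quantifier nesting and the atoms untouched, hence preserves both the order and $p$. Rule (R2) narrows the scope of $\exists v$, and, exactly as in Theorem~\ref{thm:efRewrite}, its side condition $v \notin \FV(\psi)$ ensures that no two elements which become incomparable are adjacent in $\Gf(\As)$, so the forest-cover condition survives; it does not touch $p$. The essential point is (R4): a capture-free renaming changes only the value of $p$ at one element, and capture-freeness is precisely the combinatorial content of the pebbling condition. Indeed, an edge $a \adj a'$ with $a \le a'$ arises from a common atom occurring at or below the quantifier for $a'$, and for the occurrence of $a$'s variable there to be bound by $a$'s quantifier rather than shadowed, no intermediate node $w \in (a, a']$ may carry the same name, i.e.\ $p(a) \neq p(w)$. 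Since $\QA$ itself yields the trivial injective pebbling of a chain, the invariant holds throughout.

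For $(1) \Rightarrow (2)$ I would argue by induction on $|A|$, strengthening the statement to record that the quantifier for each element $a$ carries the variable name indexed by $p(a)$ (so that at most $k$ names appear). Given a $k$-pebble forest cover, I peel off the maximal elements $v_{j+1}, \ldots, v_n$ as in Theorem~\ref{thm:efRewrite}, using (R1) and (R2) to isolate each leaf conjunct $\chi_p = \exists v_p.\, \bigwedge_{l \in I_p} R_l(\vv_l)$; the neighbours of $v_p$ form a chain of active predecessors, all with pebbles distinct from $p(v_p)$ by the pebbling condition. Replacing each $\chi_p$ by a fresh atom $R_p(\ww_p)$ over these neighbours gives the canonical query of a structure $\As'$ of smaller cardinality, and the restriction of $(\le, p)$ to $\{ v_1, \ldots, v_j \}$ is again a $k$-pebble forest cover: the only new adjacencies are among neighbours $w_i < w_{i'} < v_p$ of a peeled leaf, and the inclusion $(w_i, w_{i'}] \subseteq (w_i, v_p]$ transfers the pebbling inequality. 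By the induction hypothesis $Q^{\As'} \redP \psi'$ with names indexed by $p$; lifting those steps through the substitution $R_p(\ww_p) \mapsto \chi_p$ and finally applying (R4) to rename each $v_p$ to the name indexed by $p(v_p)$ yields $\QA \redP \vphi$. The renaming is capture-free precisely because each neighbour of $v_p$ has a pebble distinct from $p(v_p)$, and the names used are $\{ \mathsf{x}_{p(a)} : a \in A \} \subseteq \{ \mathsf{x}_1, \ldots, \mathsf{x}_k \}$, so $\vphi$ has at most $k$ variables.

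The main obstacle I expect is the bookkeeping in $(1) \Rightarrow (2)$: making the (R4) renamings simultaneously capture-free and consistent with the names already fixed by the induction hypothesis, and checking that the rewrite steps witnessing $Q^{\As'} \redP \psi'$ genuinely lift to the larger formula---in particular that the side condition of (R2) survives the replacement of each atom $R_p(\ww_p)$ by $\chi_p$, which holds because $\FV(\chi_p) = \ww_p$. The verification that (R4) preserves the pebbling condition in $(2) \Rightarrow (1)$, namely the identification of capture-freeness with the pebbling inequality, is conceptually the crux, but once phrased correctly it is routine.
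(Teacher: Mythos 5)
Your proposal is necessarily a different route from the paper's, since the paper disposes of this theorem by citing Theorem 27 of \cite{abramsky2017pebbling}; what you supply instead is the natural direct adaptation of the proof of Theorem~\ref{thm:efRewrite}. The $(2) \Rightarrow (1)$ direction of your argument is sound: reading the forest order off the quantifier scoping, reading the pebbling function off the (at most $k$) variable names, observing that (R1), (R2), (R4) preserve the forest-cover property of the induced order, and identifying capture-freeness of the renamings with condition (P) is exactly the right analysis, and Theorem~\ref{kpfcth} then delivers the coalgebra.

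The gap is in $(1) \Rightarrow (2)$, and it is not the bookkeeping you flag at the end. You ``peel off the maximal elements $v_{j+1}, \ldots, v_n$ as in Theorem~\ref{thm:efRewrite}'', but that step in the Ehrenfeucht-\Fraisse proof rests on the sentence ``we assume that the variables are in an order which linearizes the given forest order on $A$'', which is licensed there by rule (R3). The relation $\redP$ is defined using only (R1), (R2) and (R4), and none of these can exchange nested quantifiers: (R2) only narrows scopes and (R4) only renames, so the scoping forest of every formula reachable from $\QA$ must refine the fixed enumeration order of the universe. Your chosen $k$-pebble forest cover need not be compatible with that enumeration, and in general no compatible one exists. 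Concretely, take $\As$ to be a star with leaves $a_1, \ldots, a_m$ enumerated before the centre $c$: there is a coalgebra $\As \to \mathbb{P}_2 \As$, but in $\QA$ the innermost quantifier binds the variable for $c$, which occurs in every atom, so no nontrivial application of (R2) is ever possible; the quantifiers stay totally nested with all atoms innermost, every reachable formula needs $m+1$ distinct variables, and the peeling step has nothing to act on. So either your induction must be rerun with a $k$-pebble forest cover that linearizes the given enumeration (which the star shows need not exist), or quantifier exchange must be available in $\redP$. Your proof does not address this, and as it stands the $(1) \Rightarrow (2)$ direction does not go through with the rule set the statement prescribes.
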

\begin{proof}
This is the statement of Theorem 27 in \cite{abramsky2017pebbling}.
\end{proof}

\subsection{The modal case}
For the modal case, we move to the category of pointed modal structures $\CSp$. 
Given a pointed modal structure $(\As, a)$, there is a natural notion of \emph{canonical modal conjunctive query} from $a$, defined recursively as follows:
\[ \MQAa \; \equiv \; \bigwedge \{ p \mid \As, a \models p \} \wedge \bigwedge \{ \Diamond_{\alpha} \MQAb \mid \Ralph(a,b) \} 
\]
Note that, even if $A$ is finite, this definition may not be well-founded if there are transition cycles. However, if the submodel $\Sa$ generated by $a$ is finite and acyclic, then this definition yields a well-defined finite modal formula.
In particular, this holds if $\Sa$ is a finite synchronization tree.\footnote{Note that if $\Sa$ is finite and acyclic, it is bisimilar to a finite synchronization tree \cite[Proposition 2.15]{blackburn2002modal}.}
This may be regarded as the ``existential-conjunctive half'' of a modal form of Scott sentence for the structure \cite{scott2014logic}.
Under the modal translation, this formula will map into a first-order formula $\psi(x)$ in one free variable, whose quantifier rank matches the modal depth of the formula. 

To relate this to the rewrite rules,
we  modify the standard canonical conjunctive query $\QS$ to have one free variable; we write this as $\QAp(x)$. Note that we can rearrange the block of existential quantifiers in $\QS$ using rule (R3) so that $\exists x$ is the leftmost existential quantifier. Hence, we could write $\QS$ as $\exists x \QAp(x)$, so $(\Sa,a) \vDash \QAp(x) \IFF \Sa \vDash \QS$. 

We write $\varphi(x) \redM \psi(x)$ if $\varphi(x)$ can be rewritten to $\psi(x)$ in some finite number of steps using rules (R1)--(R4).

\begin{theorem}
Let $(\As, a)$ be a finite modal structure. 
If $\Sa$ has a coalgebra 
$(\Sa,a) \rarr \Mk(\Sa,a)$,
then 
$\QAp(x) \redM \lsem \MQAa \rsem$, where $\MQAa$ is the canonical modal conjunctive query for $(\As, a)$, with modal depth $k$.
%The following are equivalent:
%\begin{enumerate}
%\item $\Sa$ has a coalgebra %$(\Sa,a) \rarr \Mk(\Sa,a)$
%\item $\QAp(x) \redM \lsem \varphi %\rsem$ where $\varphi$ is the %canonical modal conjunctive query %for $(\As, a)$ with modal depth %$k$, and  
%\end{enumerate}
\end{theorem}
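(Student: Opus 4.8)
The plan is to mirror the argument for the Ehrenfeucht-\Fraisse case (Theorem~\ref{thm:efRewrite}), but now tracking the distinguished free variable $x$ and exploiting the tree structure supplied by the coalgebra. First I would invoke Proposition~\ref{prop:modalCoalg}: the existence of a coalgebra $(\Sa, a) \to \Mk(\Sa, a)$ tells us that $\Sa$ is a synchronization tree of height $\leq k$. In such a tree the binary relations are exactly the labelled transitions, each element of $S_a$ is reached from $a$ along a \emph{unique} labelled path, and there are no back-edges and no edges between distinct subtrees. Consequently the Gaifman graph of $\Sa$ is the underlying undirected tree, and the variables of $\QAp(x)$ partition naturally according to the subtrees hanging off the root.

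The argument itself proceeds by induction on the height $k$ (equivalently, on $|S_a|$). In the base case the tree is the single world $a$, so $\QAp(x)$ is just $\bigwedge\{P(x) \mid \As, a \models p\}$, which coincides with $\lsem \MQAa \rsem$ up to the associative-commutative-identity rewriting of (R1). For the inductive step, let $b_1, \ldots, b_m$ enumerate the children of $a$, where $b_i$ is reached by the unique transition $\alpha_i$, and let $\mathscr{S}_{b_i}$ be the subtree generated by $b_i$, a synchronization tree of height $\leq k-1$ (hence carrying an $\mathbb{M}_{k-1}$-coalgebra, again by Proposition~\ref{prop:modalCoalg}). The crucial structural fact is that, because $\Sa$ is a tree, every atomic formula of $\QAp(x)$ other than the unary predicates on $x$ lies entirely within a single subtree or is one of the connecting atoms $R_{\alpha_i}(x, y_i)$; in particular no atom mentions variables from two different subtrees, and none links a child back to $x$ except the connecting edge. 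This is precisely the condition needed to discharge the side-condition $v \notin \FV(\psi)$ when applying (R2).

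I would then use (R1) to pull the unary conjuncts on $x$ to the front and group the remaining conjuncts by subtree, (R3) to reorder the quantifier prefix so that the variables of subtree $i$ form a contiguous block headed by $y_i$, and (R2) to narrow the scope of each block so that it sits under its connecting atom. Since $R_{\alpha_i}(x, y_i)$ mentions $y_i$ but none of the deeper variables of subtree $i$, a final application of (R2) factors it out, rewriting $\QAp(x)$ into
\[ \bigwedge\{P(x)\} \wedge \bigwedge_{i=1}^m \exists y_i.\, \bigl(R_{\alpha_i}(x, y_i) \wedge Q^{\mathscr{S}_{b_i}}(y_i)\bigr), \]
where $Q^{\mathscr{S}_{b_i}}(y_i)$ is the canonical conjunctive query of the subtree rooted at $b_i$ with its root variable $y_i$ left free. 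Applying the induction hypothesis to each subtree gives $Q^{\mathscr{S}_{b_i}}(y_i) \redM \lsem MQ^{(\As, b_i)} \rsem$, and substituting these in — using (R4) to rename bound variables to whatever fixed names the standard translation $\lsem \cdot \rsem$ assigns to nested diamonds — yields exactly $\lsem \MQAa \rsem$, by the recursive definition of $\MQAa$ and the clause $\lsem \Diamond_{\alpha} \vphi \rsem = \exists y.\, R_{\alpha}(x, y) \wedge \lsem \vphi \rsem(y)$. The modal-depth claim follows since each nesting of a diamond raises the depth by one and the recursion bottoms out at depth equal to the height of the tree, which is $k$.

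The main obstacle I anticipate is the variable bookkeeping: verifying that the scope-narrowing and reordering steps (R2)/(R3) genuinely respect their free-variable side-conditions — which is exactly where tree-ness (absence of cross-subtree and back-edges) is indispensable — and that (R4) suffices to reconcile the variable names manufactured by the conjunctive query with the rigid naming convention of the modal standard translation. A secondary point requiring care is the transitions-to-conjuncts correspondence: uniqueness of paths in the synchronization tree guarantees that each child $b_i$ is reached by exactly one labelled transition, so the connecting atoms $R_{\alpha_i}(x, y_i)$ are in bijection with the conjuncts $\Diamond_{\alpha_i} MQ^{(\As, b_i)}$ of $\MQAa$, which is what makes the two formulas match syntactically rather than merely semantically.
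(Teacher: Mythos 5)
Your proposal is correct and follows essentially the same route as the paper's proof: reduce via Proposition~\ref{prop:modalCoalg} to the case where $\Sa$ is a synchronization tree of height $\leq k$, then induct on $k$, using (R1)/(R2) (with (R3)/(R4) for reordering and renaming) to peel the query apart into the unary conjuncts on $x$ and one block $\exists y\,(R_{\alpha}(x,y) \wedge \QBp(y))$ per child, and apply the inductive hypothesis to each subtree. Your version is somewhat more explicit about why tree-ness discharges the free-variable side-condition of (R2), but the decomposition and key lemma are the same.
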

\begin{proof}
%The proof is similar to the proof of Theorem~\ref{thm:efRewrite}, but takes into account the labels of the accessibility relations that guard quantifiers in translated modal formulas.
By Proposition \ref{prop:modalCoalg}, it suffices to show that: for every $a \in \As$, if $\Sa$ is a synchronization tree of height $\leq k$, then $\QAp(x) \redM \phi(x)$ where $\phi(x) = \lsem \MQAa \rsem$ and $\MQAa$ has modal depth $\leq k$. We prove this statement by induction on $k$. For the base case $k = 0$, we can conclude that $\Sa$ is such that $\Ralph^{\Sa}$ is empty for every $\Ralph \in \sg$. Therefore, $\QAp(x)$ is a conjunction of unary predicates $P \in \sigma$ that $a$ satisfies. We can take $\phi(x) = \QAp(x)$ and $\MQAa$ would be a conjunction of the propositional variables corresponding to the unary relations satisfied by $a$. For the inductive step, suppose $\Sa$ is a synchronization tree of height $\leq k+1$, then we can successively apply (R1) and (R2), to obtain a conjunction of formulas of two forms: 
\begin{enumerate}
\item $\exists y (R_{\beta}(x,y) \wedge \QBp(y))$, for some $y$ corresponding to $b$ such that $\RbA(a,b)$.
\item $P(x)$ for some unary $P \in \sg$ such that $\PA(a)$.
\end{enumerate}
Applying the inductive hypothesis to $\Sb$, and replacing $\QBp(y)$ with its rewrite $\chi(y) = \lsem \MQAb \rsem$ of modal depth $\leq k$ yields the desired $\phi(x)$.

%By the inductive hypothesis, $\chi(y) = \lsem \MQAb \rsem$ where $\MQAb$ has modal depth $\leq k$. We then obtain $\MQAa$  as a conjunction of sentences of the form $\dib \$ and propositional variables $p$ (corresponding to each type of conjunct in $\phi(x)$) as desired.  
\end{proof}

Note that, by our previous discussion, the converse to this result holds ``up to bisimulation''.
That is, if $(\As,a)$ has a canonical modal conjunctive query $\MQAa$ of modal depth $k$, then $\Sa$ is bisimilar to a synchronization tree $(T,a)$ of depth $k$. By Proposition~\ref{modalcoalgprop}, $T$ has a coalgebra 
$(T,a) \rarr \Mk(T,a)$.

\section{Characterization of Eilenberg-Moore categories}
\label{sec:adjoints}
Monads and comonads are closely related to \emph{adjunctions}. We recall that an adjunction
\[ \begin{tikzcd}
\CC \arrow[r, bend left=25, ""{name=U, below}, "L"]
\arrow[r, leftarrow, bend right=25, ""{name=D}, "R"{below}]
& \DD
\arrow[phantom, "\bot", from=U, to=D] 
\end{tikzcd}
\]
between categories $\CC$ and $\DD$ is given by functors $L : \CC \to \DD$ (the \emph{left adjoint}) and $R : \DD \to \CC$ (the \emph{right adjoint}), together with natural transformations $\eta_A : A \to RL A$ (the \emph{unit} of the adjunction), and $\ve_{B} : LR B \to B$ (the \emph{counit} of the adjunction), such that the maps
$\theta_{A,B} : \CC(A, RB) \to \DD(LA, B)$, and $\theta'_{A,B} : \DD(LA, B) \to \CC(A, RB)$, defined by
\[ \theta_{A,B}(f) = \ve_B \circ Lf, \qquad \theta'_{A,B}(g) = Rg \circ \eta_{A}, \]
are mutually inverse.

Each such adjunction gives rise to a monad on $\CC$, and a comonad on $\DD$. The comonad is $(G, \ve, \delta)$, where $G = LR$, and $\delta_B : LRB \to LRLRB$ is given by $\delta_B = L(\eta_{RB})$.

Conversely, every comonad arises from an adjunction, known as a \emph{resolution} of the comonad, in this way. In fact, there is a category of such resolutions for a given comonad $G$. The minimal (initial) resolution is the adjunction associated with the coKleisli category $\Kl(G)$, while the maximal (terminal) resolution arises from the category of coalgebras of $G$, also known as the \emph{Eilenberg-Moore category} $\EM(G)$ \cite{eilenberg1965adjoint}.

We have already studied the structure of the coalgebras for our game comonads. We shall now complete this analysis, by characterising the Eilenberg-Moore categories in terms independent of the comonads which give rise to them. This will give a new perspective on these constructions, as universal solutions to the problem of building various kinds of resource-bounded tree-structured covers of a given relational structure.

\subsection{The Ehrenfeucht-\Fraisse adjunction}

We define a \emph{tree-ordered $\sg$-structure} $(\As, {\leq})$ to be a $\sg$-structure $\As$ with a forest order $\leq$ on $A$, satisfying the following condition: 
\begin{center}
(E) if $a \adj b$ in $\Gf(\As)$, the Gaifman graph of $\As$ as a $\sg$-structure, then $a \comp b$.   
\end{center}
Note that such a structure is the same thing as a forest cover of $\Gf(\As)$ with universe $A$.
We write the covering relation of $\leq$ as $\cvr$.
A morphism of tree-ordered $\sg$-structures $f : (\As, {\leq}) \to (\Bs, {\leq'})$ is a $\sg$-homomorphism $f : \As \to \Bs$ which maps roots to roots, and preserves the covering relation. This determines a category $\RT(\sg)$. For each $k>0$, if we restrict to forest orders of height $\leq k$, we get a sub-category $\RTk(\sg)$.
%, and a functor $U_k : \RTk(\sg) \to \CS$.

\begin{theorem}
\label{EFadj}
For each $k >0$, the Eilenberg-Moore category $\EM(\Ek)$ is isomorphic to $\RTk(\sg)$.
\end{theorem}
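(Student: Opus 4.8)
The plan is to exhibit an explicit isomorphism of categories $\Phi : \EM(\Ek) \to \RTk(\sg)$ which is the identity on underlying $\sg$-homomorphisms, so that functoriality is automatic and only the object- and morphism-level bijections need checking. On objects, Theorem~\ref{tdth} already does the work: an $\Ek$-coalgebra $\alpha : \As \to \Ek \As$ corresponds bijectively to a forest cover of $\Gf(\As)$ of height $\leq k$, and such a forest cover is by definition exactly an object $(\As, {\leq_\alpha})$ of $\RTk(\sg)$, where $a \leq_\alpha a'$ iff $\alpha(a) \preford \alpha(a')$. I would recall from that proof that $\alpha(a) = [a_1, \ldots, a_j]$ is precisely the list of predecessors $a_1 <_\alpha \cdots <_\alpha a_j = a$ of $a$ in the forest order. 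So I set $\Phi$ to send $\alpha$ to $(\As, {\leq_\alpha})$, with inverse sending $(\As, {\leq})$ to the coalgebra reconstructed from predecessor chains.

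The crux is the morphism correspondence. Given coalgebras $\alpha$, $\beta$ on $\As$, $\Bs$, a coalgebra morphism is a $\sg$-homomorphism $h$ with $\beta \circ h = \Ek h \circ \alpha$, where $\Ek h [a_1, \ldots, a_j] = [h(a_1), \ldots, h(a_j)]$. Evaluating at $a$ with $\alpha(a) = [a_1, \ldots, a_j]$, the coalgebra square says that the predecessor chain of $h(a)$ in $(\Bs, {\leq_\beta})$ is exactly $[h(a_1), \ldots, h(a_j)]$. Reading off the case $j = 1$ gives that roots go to roots; reading off the last two entries when $a_{j-1} \cvr a_j = a$ gives that $h(a_{j-1})$ is the immediate predecessor of $h(a)$, i.e.~$h$ preserves the covering relation. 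Hence every coalgebra morphism is a morphism of tree-ordered structures.

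For the converse, I would suppose $h$ is a $\sg$-homomorphism preserving roots and covers, and show the coalgebra square commutes. Given $a$ with predecessor chain $a_1 \cvr \cdots \cvr a_j = a$ and $a_1$ a root, applying $h$ yields $h(a_1) \cvr \cdots \cvr h(a_j) = h(a)$ with $h(a_1)$ a root; the covering steps force the images to be strictly increasing, hence distinct, and in a forest a chain from a root to $x$ built out of covering steps must be the \emph{entire} predecessor chain of $x$. Therefore $\beta(h(a)) = [h(a_1), \ldots, h(a_j)] = \Ek h(\alpha(a))$, so $h$ is a coalgebra morphism. Since both categories compose underlying homomorphisms in the same way and share the same identities, $\Phi$ preserves identities and composition and is a bijection on objects and on each hom-set, yielding the claimed isomorphism. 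I expect the main obstacle to be this backward direction: one must argue that root- and cover-preservation genuinely reconstruct the full predecessor chain of $h(a)$, which relies essentially on the forest structure (uniqueness of the chain below each element) together with the fact that covering steps cannot collapse under $h$.
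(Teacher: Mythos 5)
Your proposal is correct and follows essentially the same route as the paper: the object bijection is delegated to Theorem~\ref{tdth}, and the morphism correspondence is established by identifying the coalgebra square with preservation of covering chains, which is equivalent to preserving roots and the covering relation. You simply spell out in more detail the step the paper leaves implicit, namely that a cover-and-root-preserving map sends the full predecessor chain of $a$ onto the full predecessor chain of $h(a)$ by uniqueness of covering chains in a forest.
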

\begin{proof}
By Theorem~\ref{tdth}, the objects of the two categories are in bijective correspondence.
Thus it remains to show that a $\sg$-homomorphism $f : \As \to \Bs$ is an $\Ek$-coalgebra morphism iff it is an 
$\RTk(\sg)$ morphism. From the analysis in Theorem~\ref{tdth}, if $\alpha : \As \to \Ek \As$ is a coalgebra, then
$\alpha(a) = [a_1, \ldots , a_k]$ iff $a_1 \cvr \cdots \cvr a_k = a$ is the covering chain of predecessors of $a$ in the forest order. Thus a coalgebra morphism preserves covering chains, and in particular preserves roots and the covering relation. Conversely, if a morphism preserves roots and the covering relation, then it preserves covering chains, and hence is a coalgebra morphism.
\end{proof}

As an immediate consequence of this result, we can describe the canonical resolution of $\Ek$ given by the adjunction between $\CS$ and $\EM(\Ek)$ in terms of $\RTk(\sg)$. 
There is an evident  forgetful functor $U_k : \RTk(\sg) \to \CS$, which simply forgets the forest order.

\begin{theorem}
The functor $U_k$ has a right adjoint given by $G_k$, where $G_k(\As) = (\Ek \As, {\preford})$.
The comonad arising from this adjunction is $\Ek$.
\end{theorem}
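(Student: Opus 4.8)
The plan is to establish the adjunction $U_k \dashv G_k$ by exhibiting the unit and verifying the triangle identities, and then to identify the induced comonad with $\Ek$. Since we already know from Theorem~\ref{EFadj} that $\EM(\Ek) \cong \RTk(\sg)$, and since every comonad arises canonically from the adjunction between its base category and its Eilenberg--Moore category (as recalled at the start of Section~\ref{sec:adjoints}), the cleanest route is to transport the standard Eilenberg--Moore resolution of $\Ek$ across the isomorphism of Theorem~\ref{EFadj}. Concretely, I would argue that the forgetful functor $U_k : \RTk(\sg) \to \CS$ corresponds, under the isomorphism $\EM(\Ek) \cong \RTk(\sg)$, to the canonical forgetful functor $\EM(\Ek) \to \CS$ sending a coalgebra $\alpha : \As \to \Ek\As$ to its carrier $\As$. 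That forgetful functor always has the cofree functor $\As \mapsto (\Ek\As, \delta_{\As})$ as its right adjoint, and the comonad it generates is $\Ek$ by the general theory. So the theorem reduces to checking that the cofree coalgebra $(\Ek\As, \delta_{\As})$, viewed in $\RTk(\sg)$ via Theorem~\ref{tdth}, is exactly the tree-ordered structure $(\Ek\As, {\preford})$, i.e.\ that $G_k(\As) = (\Ek\As, \preford)$ as claimed.

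First I would spell out the unit and counit directly, to make the statement self-contained. The counit is the counit $\epsA : \Ek\As \to \As$ of the comonad $\Ek$, which is a $\CS$-morphism and hence lands in the image of $U_k$. For the unit $\eta_{(\As,\leq)} : (\As,\leq) \to G_k U_k (\As,\leq) = (\Ek\As, \preford)$, I would use the coalgebra map determined by the forest order: by Theorem~\ref{tdth}, the forest cover $(\As,\leq)$ of height $\leq k$ corresponds to the coalgebra $\alpha : \As \to \Ek\As$ sending $a$ to its chain of predecessors $[a_1,\ldots,a_j]$ with $a_j = a$. I would verify that this $\alpha$ is a morphism of tree-ordered structures $(\As,\leq) \to (\Ek\As, \preford)$: it preserves roots (a root $a$ maps to the singleton $[a]$, which is $\preford$-minimal) and preserves the covering relation (if $a \cvr a'$ in $(\As,\leq)$ then $\alpha(a') = \alpha(a)[a']$, so $\alpha(a) \cvr \alpha(a')$ in the prefix order). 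The naturality of $\eta$ is then a direct computation from the definition of the coextension.

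The core verification is the triangle identities $\epsA \circ U_k(\eta_{(\As,\leq)}) = \id$ and $G_k(\epsA) \circ \eta_{G_k \As} = \id$. The first is immediate: $\epsA(\alpha(a)) = \epsA[a_1,\ldots,a_j] = a_j = a$, which is precisely the second coalgebra equation. The second amounts to the comonad law $\epsA^* = \id_{\Ek\As}$ together with the fact that $\eta_{G_k\As}$ is the comultiplication $\delta_{\As}$, and here I would invoke the coalgebra identity $\delta_{\As} = \id_{\Ek\As}^*$ from the Kleisli presentation in Section~\ref{comonadsec}. I expect the bookkeeping around identifying $\eta_{G_k\As}$ with $\delta_{\As}$ to be the one genuinely error-prone spot: one must check that the forest order on the carrier $\Ek\As$ induced (via Theorem~\ref{tdth}) by the cofree coalgebra $\delta_{\As}$ is exactly the prefix order $\preford$, so that $G_k(\As) = (\Ek\As, \preford)$ on the nose rather than merely up to isomorphism. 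This follows because the chain of predecessors of a sequence $s = [a_1,\ldots,a_j]$ under the coalgebra $\delta_{\As}$ is $[[a_1],[a_1,a_2],\ldots,s]$, whose associated order is the prefix order. Once these identifications are in place, the final sentence --- that the comonad arising from the adjunction is $\Ek$ --- is the general fact that the comonad of the cofree/forgetful adjunction on an Eilenberg--Moore category is the original comonad, with $G = U_k G_k$ acting as $\As \mapsto \Ek\As$ and $\delta_{\As} = \eta_{G_k\As}$ as just computed.
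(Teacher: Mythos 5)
Your proposal is correct and follows essentially the same route as the paper, which states this theorem as an immediate consequence of Theorem~\ref{EFadj}: one transports the canonical forgetful/cofree resolution of $\Ek$ across the isomorphism $\EM(\Ek) \cong \RTk(\sg)$, and checks via Theorem~\ref{tdth} that the cofree coalgebra $(\Ek\As, \delta_{\As})$ corresponds exactly to $(\Ek\As, {\preford})$. Your explicit verification of the unit, counit, and triangle identities simply fills in details the paper leaves implicit.
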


\subsection{The pebbling adjunction}
We define a \emph{$k$-pebble tree-ordered $\sg$-structure} $(\As, {\leq}, p)$ to be a $\sg$-structure $\As$ with a forest order $\leq$ on $A$, and a pebbling function $p: A \to \kset$. In addition to condition (E), it must also satisfy the following condition: 
\begin{center}
(P) if $a \adj b$ in $\Gsg(\As)$, and $a < b$ in the forest order, then for all $x \in (a, b]$, $p(a) \neq p(x)$.
\end{center}
Morphisms of these structures are morphisms of tree-ordered structures which additionally preserve the pebbling function. These define a category $\RPk(\sg)$ (where $k$ bounds  the number of pebbles, rather than the height of the forest order), and there is an evident forgetful functor $V_k : \RPk(\sg) \to \CS$.

\begin{theorem}
For each $k >0$, the Eilenberg-Moore category $\EM(\Pk)$ is isomorphic to $\RPk(\sg)$.
\end{theorem}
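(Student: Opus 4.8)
The plan is to follow the template of the Ehrenfeucht-\Fraisse adjunction (Theorem~\ref{EFadj}), adding the bookkeeping needed to carry the pebbling function through the argument. The object part is essentially already done: by Theorem~\ref{kpfcth} the $\Pk$-coalgebras $\alpha : \As \to \Pk \As$ are in bijection with the $k$-pebble forest covers of $\Gf(\As)$ with universe $A$, and such a forest cover is exactly an object of $\RPk(\sg)$, since conditions (E) and (P) are precisely the forest-cover and pebbling-compatibility conditions. So it remains to identify the morphisms of the two categories.

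First I would unwind the coalgebra equations to read off the explicit shape of $\alpha$, just as in Theorem~\ref{tdth}. Writing $\alpha(a) = [(p_1,a_1), \ldots, (p_j,a_j)]$, the counit law forces $a_j = a$, and the comultiplication law forces the image of $\alpha$ to be prefix-closed with $\alpha(a_i) = [(p_1,a_1), \ldots, (p_i,a_i)]$. Hence $a_1 \cvr \cdots \cvr a_j = a$ is the covering chain of predecessors of $a$ in the induced forest order, and $p_i = p(a_i)$ records the value of the pebbling function along this chain.

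Next I would examine the coalgebra-morphism square $\Pk h \circ \alpha = \beta \circ h$. The crucial point is the functorial action: for a $\sg$-homomorphism $h$, the map $\Pk h$ applies $h$ to each $A$-component and leaves the pebble indices untouched, so that
\[ \Pk h\, [(p_1,a_1), \ldots, (p_j,a_j)] = [(p_1, h(a_1)), \ldots, (p_j, h(a_j))]. \]
Commutativity therefore says that the covering chain of $h(a)$ in $\Bs$ is the $h$-image of the covering chain of $a$, \emph{with identical pebble labels}. Specializing to singleton chains shows that $h$ maps roots to roots; comparing consecutive entries shows that $h$ preserves the covering relation $\cvr$; and matching the pebble indices componentwise shows $p(h(a)) = p(a)$ for every $a$. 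These are exactly the defining conditions of an $\RPk(\sg)$-morphism. Conversely, any $\sg$-homomorphism preserving roots, the covering relation, and the pebbling function preserves whole labelled covering chains, so the square commutes.

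The step I expect to need the most care is confirming that preservation of the pebbling function is genuinely \emph{forced} by the square, rather than being an extra hypothesis: this rests entirely on the observation that $\Pk h$ transports the pebble indices verbatim, so the labels appearing in $\beta(h(a))$ must coincide componentwise with those in $\alpha(a)$, and ranging over all $a$ (each being the top of its own chain) yields $p \circ h = p$. Reassuringly, the more delicate relational condition (2) in the definition of $R^{\Pk \As}$, together with its forest-cover counterpart (P), enters only at the object level and is already absorbed into Theorem~\ref{kpfcth}, so it need not be revisited when comparing morphisms.
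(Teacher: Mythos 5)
Your proposal is correct and follows essentially the same route as the paper: objects are identified via Theorem~\ref{kpfcth}, and morphisms are matched by observing that the coalgebra-morphism square forces preservation of labelled covering chains (hence roots, the covering relation, and the pebbling function, since $\Pk h$ transports pebble indices verbatim), and conversely. You simply spell out in more detail what the paper's proof states tersely.
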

\begin{proof}
The proof proceeds on similar lines to that of Theorem~\ref{EFadj}. 
By Theorem~\ref{kpfcth}, the objects of the two categories are in bijective correspondence.
Thus to complete the argument, we must show that coalgebra morphisms coincide with $\RPk(\sg)$-morphisms.
Coalgebra morphisms preserve covering chains, and also preserve pebble indices, and this is equivalent to the conditions for $\RPk(\sg)$-morphisms.
\end{proof}

Once again, there is an immediate corollary to this result.
\begin{theorem}
The functor $V_k$ has a right adjoint $H_k$, where $H_k(\As) = (\Pk \As, {\preford}, p)$, where $p([(p_1, a_1), \ldots , (p_j, a_j)]) = p_j$. Moreover, $\Pk$ is the comonad arising from this adjunction.
\end{theorem}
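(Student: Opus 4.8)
The plan is to follow exactly the pattern used for the Ehrenfeucht-\Fraisse adjunction, exploiting the isomorphism $\EM(\Pk) \cong \RPk(\sg)$ established in the preceding theorem together with the general theory of resolutions recalled at the start of Section~\ref{sec:adjoints}. Recall that for any comonad $G$ on a category $\CC$, the forgetful functor $U^{\EM} : \EM(G) \to \CC$, sending a coalgebra $(\As, \alpha)$ to its carrier $\As$, is a left adjoint whose right adjoint is the cofree functor $F^{\EM} : \As \mapsto (G\As, \delta_{\As})$, and the comonad induced by $U^{\EM} \dashv F^{\EM}$ is precisely $G$. Specializing to $G = \Pk$ and transporting this adjunction along the isomorphism $\Phi : \EM(\Pk) \to \RPk(\sg)$ will yield the claimed adjunction, once the two functors are identified on both sides.

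First I would note that $\Phi$ leaves the underlying $\sg$-structure and the underlying homomorphisms unchanged, so that $V_k \circ \Phi = U^{\EM}$, i.e.\ $V_k = U^{\EM} \circ \Phi^{-1}$. Since $\Phi$ is an isomorphism of categories, $V_k$ inherits a right adjoint, namely $\Phi \circ F^{\EM}$, and it remains only to verify that $\Phi \circ F^{\EM} = H_k$ on objects; the morphism action and functoriality then follow from those of $\Phi$ and $F^{\EM}$. The one genuine computation is the explicit form of the cofree coalgebra $\delta_{\As} = \id_{\Pk \As}^{*} : \Pk \As \to \Pk \Pk \As$, followed by reading off its forest order and pebbling function through the correspondence of Theorem~\ref{kpfcth}. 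Applying the coextension formula with $f = \id_{\Pk \As}$, one finds that $\delta_{\As}$ sends a play $s = [(p_1, a_1), \ldots , (p_j, a_j)]$ to the pebble-tagged sequence of all its prefixes, whose last entry is $(p_j, s)$.

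Consequently the forest order read off from $\delta_{\As}$, namely $s \leq s'$ iff $\delta_{\As}(s) \preford \delta_{\As}(s')$, is exactly the prefix order $\preford$ on $\Pk A$, and the pebbling function assigns to $s$ the index $p_j$ of its last move; this is precisely the $k$-pebble tree-ordered structure $(\Pk \As, \preford, p)$ with $p([(p_1,a_1),\ldots,(p_j,a_j)]) = p_j$, so $\Phi(F^{\EM}(\As)) = H_k(\As)$. Condition (P) need not be checked by hand, since $(\Pk \As, \preford, p)$ is the image under $\Phi$ of a genuine $\Pk$-coalgebra and hence is automatically a legitimate object of $\RPk(\sg)$. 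Combining these observations gives $V_k \dashv H_k$, and because transporting an adjunction along an isomorphism of categories preserves the induced comonad, the comonad arising from $V_k \dashv H_k$ is $V_k H_k = U^{\EM} F^{\EM} = \Pk$, with its counit $\ve$ and comultiplication $\delta$. I expect the only real work to be the bookkeeping in the coextension computation --- in particular confirming that the pebble indices carried by $\delta_{\As}$ recover $p$ and that the prefix order is the order induced by $\delta_{\As}$ --- while the adjunction and the identification of the comonad are then formal consequences of the preceding isomorphism theorem and the resolution machinery of Section~\ref{sec:adjoints}.
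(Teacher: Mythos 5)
Your proposal is correct and follows essentially the same route as the paper, which presents this theorem as an immediate corollary of the isomorphism $\EM(\Pk) \cong \RPk(\sg)$ together with the standard Eilenberg--Moore resolution; you have simply made explicit the transport of the adjunction $U^{\EM} \dashv F^{\EM}$ along that isomorphism and the computation of $\delta_{\As}$ identifying the cofree coalgebra with $(\Pk \As, {\preford}, p)$. Nothing further is needed.
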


\subsection{The modal adjunction}

Let $\sg$ be a modal vocabulary. We define a modal $\sg$-structure $(\As, a, {\leq})$ to be a $\sg$-structure $\As$ with a partial order $\leq$ on $A$, such that ${\uparrow}(a) := \{ a' \in A \mid a \leq a' \}$ is a tree order. This must satisfy the following condition: 
\begin{center} 
(M) for $x, y \in {\uparrow}(a)$,  $x \cvr y$ iff for some unique $\Ralph$, $\Ralph(x,y)$. 
\end{center}
A morphism of such structures is a $\sg$-homomorphism preserving the root and the covering relation. For each $k>0$, there is a category $\mathcal{R}^{M}_{k}(\sg)$, with $k$ bounding the height of the tree order, and a forgetful functor $W_k : \mathcal{R}^{M}_{k}(\sg) \to \CSp$, which sends $(\As, a, {\leq})$ to $(\As, a)$.

\begin{theorem}
For each $k >0$, the Eilenberg-Moore category $\EM(\Mk)$ is isomorphic to $\mathcal{R}^{M}_{k}(\sg)$.
\end{theorem}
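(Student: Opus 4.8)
The plan is to follow exactly the template of Theorem~\ref{EFadj} and its pebbling analogue: first establish that the objects of the two categories are in bijective correspondence, and then show that, under this correspondence, $\Mk$-coalgebra morphisms coincide with $\mathcal{R}^{M}_{k}(\sg)$-morphisms.

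For the object correspondence I would refine the analysis behind Proposition~\ref{prop:modalCoalg}. Given an $\Mk$-coalgebra $\gamma : (\As, a) \to \Mk(\As, a)$ in $\CSp$, the counit law $\epsA \circ \gamma = \id$ forces the last element of $\gamma(a')$ to be $a'$, so every $a'$ carries a path $a = a_0 \labar{\alpha_1} \cdots \labar{\alpha_j} a'$ recorded by $\gamma(a')$; in particular every element is reachable from $a$, whence $A = S_a$. The comultiplication law then forces $\gamma(a_i)$ to be the length-$i$ prefix of $\gamma(a')$, so $\gamma$ is injective and $\As$ is a synchronization tree of height $\leq k$, as in Proposition~\ref{prop:modalCoalg}. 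Defining $a' \leq a''$ iff $\gamma(a') \preford \gamma(a'')$ then yields a tree order on $A = {\uparrow}(a)$ of height $\leq k$, and condition (M) holds, since a cover $x \cvr y$ corresponds to extending the recorded path by a single transition, and uniqueness of paths in the tree makes the witnessing label $\alpha$ unique. Conversely, given $(\As, a, {\leq}) \in \mathcal{R}^{M}_{k}(\sg)$, each $a'$ has a unique covering chain $a = a_0 \cvr \cdots \cvr a_j = a'$ from the root, and condition (M) supplies the unique labels $\alpha_i$ with $R_{\alpha_i}(a_{i-1}, a_i)$; setting $\gamma(a') = [a_0, \alpha_1, a_1, \ldots, \alpha_j, a']$ gives a well-defined map into $\Mk(\As, a)$ (height $\leq k$ bounds its length), and the coalgebra equations are immediate.

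The key verification here is that $\gamma$ is a $\sg$-homomorphism, and this is precisely where the ``iff'' of condition (M) is used: it forces the labelled transition relation within ${\uparrow}(a)$ to coincide with the uniquely-labelled covering relation, so that every transition $R_\alpha(a', a'')$ is a cover $a' \cvr a''$ and hence $R_{\alpha}^{\Mk \As}(\gamma(a'), \gamma(a''))$ holds. For the morphism correspondence I would unwind the coalgebra morphism condition exactly as in Theorem~\ref{EFadj}. A pointed homomorphism $f : (\As, a) \to (\Bs, b)$ is an $\Mk$-coalgebra morphism iff $\Mk f \circ \gamma_{\As} = \gamma_{\Bs} \circ f$; since $\Mk f$ relabels only the $A$-components of a sequence by $f$ while preserving the modality labels, evaluating both sides at $a'$ shows this holds iff $f$ sends the labelled covering chain of $a'$ in $\As$ to that of $f(a')$ in $\Bs$. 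This is equivalent to $f$ preserving the root and the covering relation, which is exactly the definition of an $\mathcal{R}^{M}_{k}(\sg)$-morphism.

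I expect the main obstacle to be the homomorphism verification in the converse object direction, together with the reachability bookkeeping. One must check that condition (M) is genuinely strong enough to rule out transitions that are not covers, so that $\gamma$ lands in $\Mk \As$, and that the existence of a coalgebra forces $A = {\uparrow}(a)$, so that the object bijection is exact rather than holding only after restriction to the generated submodel $\Sa$. The morphism half, by contrast, is a routine unwinding identical in spirit to the Ehrenfeucht-\Fraisse and pebbling cases.
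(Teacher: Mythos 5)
Your proof is correct and follows exactly the template the paper uses for the Ehrenfeucht-\Fraisse and pebbling cases (object bijection via the coalgebra characterization, then identification of coalgebra morphisms with root- and covering-chain-preserving homomorphisms); the paper in fact states the modal case without proof, leaving precisely this argument to the reader, and the subtleties you flag (reachability forcing $A = {\uparrow}(a)$, and condition (M) forcing every transition in ${\uparrow}(a)$ to be a uniquely labelled cover) are exactly the points that need care. The one thing you do not exploit is the shortcut the paper mentions afterwards: $\Mk$ is idempotent, so $\EM(\Mk)$ is a coreflective subcategory of $\CSp$, and the result can alternatively be read off from the known coreflection of tree-ordered Kripke structures.
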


\begin{theorem}
For each $k >0$, the functor $W_k$ has a right adjoint, which sends $(\As, a)$ to $(\Mk \As, [a], {\preford})$. Moreover, $\Mk$ is the comonad arising from this adjunction.
\end{theorem}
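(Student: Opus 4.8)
The plan is to obtain this result as an instance of the canonical Eilenberg--Moore resolution of a comonad, transported across the isomorphism $\EM(\Mk) \cong \mathcal{R}^{M}_{k}(\sg)$ established in the previous theorem, exactly as in the Ehrenfeucht--\Fraisse and pebbling cases. As recalled at the start of this section, for any comonad $(G, \varepsilon, \delta)$ on a category $\CC$ the forgetful functor $U : \EM(G) \to \CC$ from the category of coalgebras has a right adjoint $F$, the \emph{cofree} functor, sending an object $X$ to the cofree coalgebra $(GX, \delta_X)$; moreover the comonad $UF$ arising from $U \dashv F$ is $G$ itself, since $UF(X) = GX$ on objects and the induced comultiplication is $\delta$. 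Under the isomorphism $\EM(\Mk) \cong \mathcal{R}^{M}_{k}(\sg)$, the functor $U$ corresponds precisely to $W_k$, so $W_k$ inherits the right adjoint $F$, and the comonad arising from the adjunction is $\Mk$. It therefore remains only to identify, under the isomorphism, the tree-ordered structure to which the cofree coalgebra on a pointed structure $(\As, a)$ corresponds.

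For this I would compute the comultiplication explicitly. By definition $\delta_{(\As,a)} = (\id_{\Mk(\As,a)})^*$, and unfolding the recursion for the modal coextension with $f = \id$ gives, for $s = [a_0, \alpha_1, a_1, \ldots, \alpha_j, a_j]$ with $a_0 = a$,
\[ \delta_{(\As,a)}(s) = [\, [a_0], \, \alpha_1, \, [a_0, \alpha_1, a_1], \, \ldots, \, \alpha_j, \, s \,], \]
that is, $\delta$ sends each play $s$ to the sequence of its successive prefixes. Recall from Proposition~\ref{modalcoalgprop} and the proof of the isomorphism $\EM(\Mk) \cong \mathcal{R}^{M}_{k}(\sg)$ that a coalgebra $\gamma$ induces the forest order $x \leq y \iff \gamma(x) \preford \gamma(y)$. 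Applying this with $\gamma = \delta_{(\As,a)}$ on the universe $\Mk A$, the displayed formula shows that $\delta(s) \preford \delta(t)$ holds exactly when $s \preford t$, since the list of prefixes of $s$ is a prefix of the list of prefixes of $t$ iff $s$ is a prefix of $t$. Hence the induced order is the prefix order $\preford$, with least element $[a]$ (as $\delta[a] = [[a]]$), so the cofree coalgebra corresponds to $(\Mk \As, [a], {\preford})$, as claimed; its being a genuine object of $\mathcal{R}^{M}_{k}(\sg)$, in particular its satisfying condition (M), is guaranteed by the object part of the previous theorem.

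Combining these observations yields the theorem: $W_k$ has the stated right adjoint, and $\Mk$ is the comonad arising from the adjunction. The only genuinely computational step is the identification of the induced order in the second paragraph; everything else is a transport of the general comonadic facts along the isomorphism. The main obstacle to watch is unfolding the recursion for $\delta$ correctly and checking that the ``prefix of prefix-lists'' argument indeed recovers $\preford$ with root $[a]$. The argument runs in complete parallel to the Ehrenfeucht--\Fraisse adjunction, the sole difference being that the cofree object carries the modal tree structure rather than a bare forest order.
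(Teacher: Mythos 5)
Your proposal is correct and follows essentially the route the paper intends: the paper states this theorem as an immediate corollary of the isomorphism $\EM(\Mk) \cong \mathcal{R}^{M}_{k}(\sg)$, transporting the standard cofree-coalgebra resolution $U \dashv F$ along that isomorphism, exactly as in the Ehrenfeucht-\Fraisse and pebbling cases; your explicit unfolding of $\delta$ to identify the induced order as the prefix order with root $[a]$ is the right computation and is correct. (The paper additionally remarks that a simpler analysis is available here because $\Mk$ is idempotent and arises from a coreflection, but that is an aside rather than the stated proof strategy.)
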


We have stated the result in this fashion for uniformity with the other cases, but it is worth noting that a simpler analysis applies in this case.\footnote{The authors are indebted to Dan Marsden for pointing this out.} The modal comonad is idempotent, \ie the comultiplication is an isomorphism.
This means that it encodes a coreflective subcategory.
The category of trees $\mathcal{R}^{M}_{k}(\sg)$ is coreflective in the category of all pointed Kripke structures (see e.g. \cite{winskel1995models}), and the modal comonad arises from this coreflection.

\subsection{$I$-morphisms and relative adjunctions}
\label{Imoreladjsec}

To fit $I$-morphisms, as discussed in section~\ref{Imorsec}, into this picture, we shall use \emph{relative adjunctions} \cite{ulmer1968properties}.
Given functors $J : \BB \to \DD$ and $L : \CC \to \DD$, we say that $L$ has a $J$-right adjoint $R : \BB \to \CC$ if there is a natural isomorphism
\[ \hom_{\DD}(L(-), J(-)) \; \cong \; \hom_{\CC}(-, R(-)) . \]
In particular, if $L$ has a \textit{bona fide} right adjoint $R : \DD \to \CC$, we obtain a $J$-right adjoint by composing $R$ with $J$.

In our case, we have the functor $J : \CS \to \CSI$, and we obtain relative versions of the Ehrenfeucht-\Fraisse and pebbling adjunctions, again using the fact that our constructions are uniform in the relational vocabulary, and hence can be applied to $\CSI$. These relative right adjoints give rise to the relative comonads $\Ekp$ and $\Pkp$.

\section{Logical Equivalences II: Open pathwise embeddings and back-and-forth equivalences}
\label{backandforthsec}

For each comonad $\Ck$, we shall now define a ``back-and-forth'' equivalence $\eqbCk$, intermediate between $\eqaCk$ and $\eqcCk$, and use it to characterize the logical equivalences induced by $\Lk$.
These back-and-forth equivalences will be more specific to ``game comonads'' defined on  $\CS$, but they will still be defined and shown to have the appropriate properties in considerable generality. We shall use a variant of the well-known notion of \emph{open map bisimulation} of Joyal, Nielsen and Winskel \cite{joyal1993bisimulation}. 

Although it is folklore that Ehrenfeucht-\Fraisse equivalence is ``essentially'' a form of bisimulation, to our knowledge this is the first time that graded elementary equivalences have been captured in a precise common format with open map bisimulations. The key novel ingredient in our approach is the use of embeddings, and the notion of \emph{pathwise embedding}.
This allows us to capture a general notion of \emph{property-preserving bisimulation}, which specializes to capture all the notions of interest for model comparison games.

Referring to our study of adjunctions characterizing the game comonads in Section~\ref{sec:adjoints},
we can extract the following common structure. For each comonad $\Ck$ (variously $\Ek$, $\Pk$, or $\Mk$), we have a category $\RCk$, and a commuting diagram of categories and functors:
\[ \begin{tikzcd}
& \RCk \arrow[dl] \arrow[dr, "U_k"] \\
\Trees \arrow[dr] & & \CS \arrow[dl] \\
& \Set &
\end{tikzcd}
\]
Here $\Trees$ is the category of tree orders and maps which preserve the root and the covering relation. The unlabelled arrows are the evident faithful forgetful functors. The faithful functor $U_k$ has a right adjoint $F_k$, and $\Ck$ arises from this adjunction.

Note that the Ehrenfeucht-\Fraisse and pebbling comonads produce \emph{forests} of non-empty sequences rather than trees. However, it is always possible, and often convenient, e.g.~for the base case of an inductive proof, to turn the forests into trees by adding the empty sequence. This is justified by the fact that $\Forests$, the category of forest orders and maps which take roots to roots, and preserve the covering relation, is equivalent to $\Trees$.

The above description applies directly to the Ehrenfeucht-\Fraisse and pebbling cases. For the modal case, we should use $\CSp$ rather than $\CS$.
In this case, the construction produces a tree directly.

We recall the notion of $I$-morphism from Section~\ref{Imorsec}.
We shall build this into the current picture by working with the $J$-relative versions of the Ehrenfeucht-\Fraisse and pebbling adjunctions, as in section~\ref{Imoreladjsec}. This means that objects of $\RCk$ carry an interpretation of the $I$-relation, which morphisms of $\RCk$ must preserve. The right adjoint functor $F_k$ produces a relation $I^{F_k \As}$ by the functorial action of $F_k$ on the identity relation $I^{\As}$.

A morphism $e$ in $\RCk$ is an \emph{embedding} if
$U_k(e)$ is an extremal mono in $\CS$, \ie an embedding of relational structures. We write  $e: \As \rightarrowtail \Bs$ to indicate that $e$ is an embedding.
 
We define a subcategory $\Paths$ of $\RCk$ whose objects have tree orders which are linear, so they comprise a single branch. If $P$ is a path, then $I^{P}$ is the identity relation. Morphisms of paths are embeddings. More generally, we say that $e : P \embed \As$ is a \emph{path embedding} if $P$ is a path. A morphism $f : \As \to \Bs$ in $\RCk$ is a \emph{pathwise embedding} if for any path embedding $e : P \embed \As$, $f \circ e$ is a path embedding.

We can now define what it means for a morphism $f : \As \to \Bs$ in $\RCk$ to be \emph{open}.
This holds if, whenever we have a diagram
\[ \begin{tikzcd}
  P \arrow[r, rightarrowtail] \arrow[d,rightarrowtail]
    & Q \arrow[d, rightarrowtail] \\
  \As \arrow[r,  "f"']
&\Bs 
\end{tikzcd}
\]
where $P$ and $Q$ are paths, there is an embedding $Q \rightarrowtail \As$ such that
\[ \begin{tikzcd}
  P \arrow[r, rightarrowtail] \arrow[d,rightarrowtail]
    & Q \arrow[dl, rightarrowtail] \arrow[d, rightarrowtail] \\
  \As \arrow[r,  "f"']
&\Bs 
\end{tikzcd}
\]
This is often referred to as the \emph{path-lifting property}. If we think of $f$ as witnessing a simulation of $\As$ by $\Bs$, path-lifting means that if we extend a given behaviour in $\Bs$ (expressed by extending the path $P$ to $Q$), then we can find a matching behaviour in $\As$ to ``cover'' this extension. Thus it expresses an abstract form of the notion of  ``p-morphism'' from modal logic \cite{blackburn2002modal}, or of functional bisimulation.

We can now define the \emph{back-and-forth equivalence} $\As \eqbCk \Bs$ between structures in $\CS$. This holds if there is a structure $\Rs$ in $\RCk$, and a span of open pathwise embeddings
\[ \begin{tikzcd}
& \Rs \arrow[dl] \arrow[dr] \\
F_k\As & & F_k \Bs
\end{tikzcd}
\]
This gives us a general, structural description of back-and-forth equivalence. At the same level of generality, we shall now define a \emph{back-and-forth game} $\Gk(\As,\Bs)$ played between the structures $\As$ and $\Bs$, corresponding to the comonad $\Ck$. Positions of the game are pairs $(s,t) \in \Ck A \times \Ck B$. 

We define a relation  $\WABC \, \subseteq \, \Ck A \times \Ck B$ as follows. A pair $(s,t)$ is in $\WABC$ iff for some path $P$, path embeddings $e_1 : P \embed \Ck \As$ and $e_2 : P \embed \Ck \Bs$, and $p \in P$, $s = e_1(p)$ and $t = e_2(p)$. The intention is that $\WABC$ picks out the winning positions for Duplicator.

At the start of each round of the game, the position is specified by $(s, t) \in \Ck A \times \Ck B$. The initial position is $(\bot, \bot)$.  The round proceeds as follows. Either Spoiler chooses some $s' \rcvr s$, and Duplicator must respond with $t' \rcvr t$, resulting in a new position $(s', t')$; or Spoiler chooses some $t'' \rcvr t$ and Duplicator must respond with $s'' \rcvr s$, resulting in $(s'',t'')$. Duplicator wins the round if they are able to respond, and  the new position is in $\WABC$.

\begin{theorem}
\label{pisoprop}
The following are equivalent:
\begin{enumerate}
\item $\As \eqbCk \Bs$.
\item There is a winning strategy for Duplicator in the $\Gk(\As,\Bs)$ game.
\end{enumerate}
\end{theorem}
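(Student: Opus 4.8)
The plan is to prove the two implications separately, in each case mutually translating between a span of open pathwise embeddings and a Duplicator strategy, maintaining throughout a correspondence between branches of the witnessing object $\Rs$ and plays of the game. Two simple observations from the definitions will be used repeatedly. First, because the objects of $\RCk$ are tree-ordered, a path embedding $e : P \embed \Ck\As$ has as its image exactly the chain of predecessors (the branch) below its top element. Second, each round of $\Gk(\As,\Bs)$ advances both the $\Ck A$- and the $\Ck B$-component by exactly one covering step, so along any play the two components have matching heights.

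For $(1)\Rightarrow(2)$, I would fix a span of open pathwise embeddings $f : \Rs \to F_k\As$ and $g : \Rs \to F_k\Bs$ and let Duplicator maintain the invariant that after each round the current position $(s,t)$ equals $(f(r), g(r))$ for some $r$ in $\Rs$ reached along a branch. The root of $\Rs$ furnishes the initial position $(\bot,\bot)$, and since $f,g$ are pathwise embeddings, restricting them to the branch ending at $r$ exhibits $(s,t) \in \WABC$. When Spoiler plays $s' \rcvr s$ in $\Ck\As$, the branch below $s'$ is a path $Q$ extending the path $P$ below $s$; openness of $f$, applied to the square with $P \embed Q$, $P \embed \Rs$ and $Q \embed F_k\As$, lifts $Q$ to $\Rs$, yielding $r' \rcvr r$ with $f(r') = s'$. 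Duplicator responds with $t' := g(r')$; as $g$ is a pathwise embedding this is a covering successor of $t$, and $(s',t') \in \WABC$, re-establishing the invariant. Spoiler's moves in $\Bs$ are handled symmetrically using openness of $g$, so Duplicator wins.

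For $(2)\Rightarrow(1)$, I would build $\Rs$ as the tree of plays consistent with a fixed winning strategy, ordered by the prefix (extension) order with root the empty play, and take $f,g$ to be the projections sending a play to its last $\Ck A$- and $\Ck B$-component. Relations (and the $I$-relation, for the relative comonads) are pulled back, so that $R^{\Rs}(\vec r)$ holds iff $R^{\Ck\As}(f\vec r)$ and $R^{\Ck\Bs}(g\vec r)$; this makes $f,g$ homomorphisms preserving roots and covering. The first thing to check is that $\Rs$ lies in $\RCk$: if two plays are incomparable they diverge at some round, and at the divergence one of the two components becomes prefix-incomparable, so no pulled-back relation can hold between play-incomparable elements, yielding condition (E) (and, with the pebble or modal data carried along, (P) or (M)). Next I would verify that $f,g$ are pathwise embeddings: for a branch ending at a winning position $(s_l,t_l)\in\WABC$, the witnessing common path $P'$ with embeddings $\varepsilon_1 : P' \embed \Ck\As$ and $\varepsilon_2 : P' \embed \Ck\Bs$ identifies the branch below $s_l$ with that below $t_l$ as a partial isomorphism; strongness of $\varepsilon_1$ together with the height-matching observation forces $R^{\Ck\As}$ along the branch to coincide with $R^{\Ck\Bs}$, so $f,g$ reflect relations along branches and are embeddings there. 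Finally, openness of $f$ (respectively $g$) is exactly the statement that any Spoiler extension in $\As$ (respectively $\Bs$) can be answered by the strategy, producing the required lift.

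The main obstacle is the $(2)\Rightarrow(1)$ direction, specifically the verification that the pulled-back structure $\Rs$ is a genuine object of $\RCk$ and that the projections are pathwise embeddings rather than mere homomorphisms. The delicate point is that the winning condition $\WABC$ must be shown to encode a partial isomorphism between the two branches below a winning position, with heights synchronized by the round structure; this is precisely what converts the ``common path'' clause of $\WABC$ into the strongness needed for embeddings, and it is where the combinatorial specifics of the three comonads (prefix comparability and synchronization of sequence lengths) are actually used.
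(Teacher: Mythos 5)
Your proposal is correct and follows essentially the same route as the paper's proof: the forward direction uses openness of the two legs to lift Spoiler's covering extensions (the paper packages this as an explicit path-lifting Claim), and the converse builds $\Rs$ as the strategy tree viewed as a substructure of the product $F_k\As \times F_k\Bs$, with the projections shown to be pathwise embeddings via the common path witnessing $\WABC$ and shown to be open via the winning condition. Your explicit check that the pulled-back relations respect comparability (condition (E)) is a point the paper leaves implicit, but the argument is otherwise the same.
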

\begin{proof}
$(1) \IMP (2)$. Firstly, consider an open pathwise embedding $q : \Rs \to F_k \Bs$. Since $q$ is a morphism in $\RCk$, any covering chain $\bot \cvr r_1 \cvr \cdots \cvr r_i$ in $\Rs$ is mapped to a covering chain $\bot \cvr s_1 \cvr \cdots \cvr s_i$ in $F_k \Bs$.
Since $\bot \cvr r_1 \cvr \cdots \cvr r_i$ is the image of a path embedding $P \embed \Rs$, and $q$ is a pathwise embedding, for each $j$ with $1 \leq j \leq i$, $(r_j,s_j) \in \WABC$.

\textbf{Claim} For any $s_{i+1} \rcvr s_i \in F_k B$, there is $r_{i+1} \rcvr r_i \in R$ such that $q(r_{i+1}) = s_{i+1}$.

To prove this, note that there are path embeddings $e_1 : P \embed \Rs$ and $e_2 : Q \embed F_k \Bs$ with images
$\bot \cvr r_1 \cvr \cdots \cvr r_i$ and $\bot \cvr s_1 \cvr \cdots \cvr s_i \cvr s_{i+1}$ respectively. Writing $Q_{\leq i}$ for the truncation of $Q$ to  the preimage of $\bot \cvr s_1 \cvr \cdots \cvr s_i$, $\Rs_i$ for the image of $e_1$, and $(F_k \Bs)_i$ for the image of $e_2 |_{Q_{\leq i}}$, we have an embedding $e : P \cong \Rs_i \cong (F_k \Bs)_i \cong Q_{\leq i}\embed Q$, such that $q \circ e_1 = e_2 \circ e$. Since $q$ is open, there is an embedding $Q \embed \Rs$ such that
\[ \begin{tikzcd}
  P \arrow[r, rightarrowtail, "e"] \arrow[d,rightarrowtail, "e_1"']
    & Q \arrow[dl, rightarrowtail] \arrow[d, rightarrowtail, "e_2"] \\
  \Rs \arrow[r,  "q"']
& F_k \Bs 
\end{tikzcd}
\]
We define $r_{i+1}$ to be the image of the greatest element of $Q$ under this embedding. Since $Q$ is a chain, $r_{i+1} \rcvr r_i$. By the commutativity of the above diagram, $q(r_{i+1}) = s_{i+1}$. 

Now suppose we have a span of open pathwise embeddings
\[ \begin{tikzcd}
& \Rs \arrow[dl, "p"'] \arrow[dr, "q"] \\
F_k\As & & F_k \Bs
\end{tikzcd}
\]
For any path embedding $P \embed \Rs$ with image $\bot \cvr r_1 \cvr \cdots \cvr r_i$, 
%we obtain isomorphisms between $\Rs_i$ and $(\F_k \As)_i$, and $\R_is$ and $(F_k \Bs)_i$.
we have a pair $(s, t) = (p(r_i), q(r_i))$ in $\Ck \As \times \Ck \Bs$. Since $p$ and $q$ are pathwise embeddings, $(s, t) \in \WABC$.

Moreover, for any $t' \rcvr t$, we can find $r_{i+1} \rcvr r_i$ and $s' \rcvr s$ such that $(p(r_{i+1}), q(r_{i+1})) = (s',t')$ by the above Claim. A symmetric argument applies to yield an extension for any $s' \rcvr s$. Thus the set of such pairs $(s, t)$ yields a winning strategy for Duplicator in the $\Gk(\As, \Bs)$ game.

$(2) \IMP (1)$. Suppose we have a winning strategy for Duplicator. This determines a set $R  \subseteq \Ck A \times \Ck B$ of the plays following this strategy. 

$R$ is a down-closed subset of a tree, and hence forms a tree in the induced order. It determines a substructure of the product relational structure $F_k \As \times F_k \Bs$, and hence an object $\Rs$ in $\RCk$. Restricting the projections to $R$ determines morphisms $p : \Rs \to F_k \As$, $q : \Rs \to F_k \Bs$.

Firstly, we show that $p$ and $q$ are pathwise embeddings. Given a path embedding $e : P \embed \Rs$, $p \circ e$ is an injective homomorphism, since it is a morphism in $\RCk$, and preserves the covering relation. 
It remains to show that $p \circ e$, or equivalently the restriction of $p$ to the image of $e$, is strong.
Given $(s_1,t_1), \ldots , (s_n,t_n)$ in the image of $e$, let $s$ be the maximum of the $s_i$, which exists since the image of a path forms a chain. If $s = s_j$, then $t_j$ is the maximum of the $t_i$. Now since $(s,t) \in \WABC$, there is a path $Q$, and embeddings $e_1 : Q \embed F_k \As$, $e_2 : Q \embed F_k \Bs$, with $s = e_1(r)$ and $t = e_2(r)$ for some $r \in Q$.
Since $e_1$ and $e_2$ preserve covering chains, they map predecessors of $r$ bijectively to prefixes of $s$ and $t$ respectively, in a height-preserving fashion.\footnote{The height of an element of a forest is the length of its unique covering chain from a root.} Thus for each $i$, there is a unique
element $r_i \in Q$ such that $e_1(r_i) = s_i$ and $e_2(r_i) = t_i$. For an $n$-ary relation $R$ in $\sg$, since $e_1$ and $e_2$ are embeddings, we have 
\[ R^{F_k \As}(s_1, \ldots , s_n) \IFF R^{Q}(r_1, \ldots , r_n) \IFF R^{F_k \Bs}(t_1, \ldots , t_n) . \]
Hence 
\[ R^{F_k \As}(s_1, \ldots , s_n) \IFF (R^{F_k \As}(s_1, \ldots , s_n) \AND R^{F_k \Bs}(t_1, \ldots , t_n)) \IFF R^{\Rs}((s_1,t_1), \ldots , (s_n,t_n)), \]
and thus $p \circ e$ is a strong homomorphism, as required. The verification that $q$ is a pathwise embedding is entirely similar.

Finally, we must show that these morphisms are open.
Suppose we are given a diagram
\[ \begin{tikzcd}
  P \arrow[r, rightarrowtail] \arrow[d,rightarrowtail]
    & Q \arrow[d, rightarrowtail] \\
  \Rs \arrow[r,  "p"']
&F_k \As 
\end{tikzcd}
\]
We consider the case where $Q$ extends $P$ by one element; the general case will then follow by induction. 

Let the image of the maximum element of $P$ be $(s, t)$, and the image of the maximum element of $Q$ be $s' \rcvr s$. Since $R$ encodes the plays of a winning strategy for Duplicator, for some $t' \rcvr t$, we have $(s', t') \in R$, and moreover $(s', t') \in \WABC$. This implies that the map which extends the embedding $P \embed \Rs$ by sending the maximum element of $Q$ to $(s',t')$ determines an embedding $Q \embed \Rs$, which makes the path-lifting diagram commute.

The fact that $q$ is open is shown by a symmetric argument.
\end{proof}

For each of our three comonads, we will show that the game $\Gk$ specializes to exactly the expected concrete game: Ehrenfeucht-\Fraisse, pebbling, and modal bisimulation respectively.
The connections with the corresponding logical equivalences will follow as immediate corollaries.

\subsection{The Ehrenfeucht-\Fraisse comonad}

We now show that the generic game $\Gk$, when instantiated to the Ehrenfeucht-\Fraisse adjunction and comonad, yields exactly the standard Ehrenfeucht-\Fraisse game.

We begin with two observations we will use.
\begin{lemma}
\label{nrisolemm}
Consider $s = [a_1, \ldots , a_i] \in \Ek \As$.
If $s$ is non-repeating, then the substructure of $\Ek \As$ determined by the prefixes of $s$ is isomorphic to the substructure of $\As$ determined by $\{ a_1, \ldots , a_i \}$.
\end{lemma}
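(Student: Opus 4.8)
The plan is to exhibit the counit $\epsA$, restricted to the set of prefixes of $s$, as the desired isomorphism. Write $s_j := [a_1, \ldots , a_j]$ for the prefix of $s$ of length $j$, where $1 \leq j \leq i$, so that the prefixes of $s$ are exactly $\{ s_1, \ldots , s_i \}$ and $\epsA(s_j) = a_j$. Define $\phi$ to be $\epsA$ restricted to this set, so that $\phi(s_j) = a_j$; this is the candidate isomorphism onto the substructure of $\As$ determined by $\{ a_1, \ldots , a_i \}$. (Note that each $s_j$ has length $\leq i \leq k$, so indeed lies in $\Ek \As$.)

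First I would check that $\phi$ is a bijection. Surjectivity is immediate, since every $a_j$ is $\phi(s_j)$. Injectivity is exactly where the non-repeating hypothesis enters: if $\phi(s_j) = \phi(s_l)$, then $a_j = a_l$, and since $s$ is non-repeating this forces $j = l$, whence $s_j = s_l$. (The prefixes $s_1, \ldots , s_i$ are in any case automatically distinct, as they have distinct lengths, so injectivity of $\phi$ is the only place where repetition could cause trouble.)

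The heart of the argument is to show that $\phi$ both preserves and reflects each relation. Here the key observation is that any two prefixes of the single sequence $s$ are comparable in the prefix order, since they are both prefixes of $s$. Consequently, for any tuple $s_{j_1}, \ldots , s_{j_n}$ of prefixes, the pairwise-comparability clause in the definition of $R^{\Ek \As}$ is satisfied vacuously, and the definition collapses to
\[ R^{\Ek \As}(s_{j_1}, \ldots , s_{j_n}) \iff \RA(\epsA s_{j_1}, \ldots , \epsA s_{j_n}) = \RA(\phi(s_{j_1}), \ldots , \phi(s_{j_n})). \]
This is precisely the statement that $\phi$ is an isomorphism between the substructure of $\Ek \As$ on the prefixes of $s$ and the substructure of $\As$ on $\{ a_1, \ldots , a_i \}$, completing the proof.

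There is no genuine obstacle here; the only subtlety worth flagging is that the comparability condition, which is the one feature distinguishing $R^{\Ek \As}$ from a plain pullback of $\RA$ along $\epsA$, becomes trivial for tuples drawn from a chain of prefixes. Thus the relational structure on the prefixes of $s$ is governed entirely by the $\As$-relations among their endpoints, and the non-repeating hypothesis is used solely to upgrade the surjection $\epsA$ to a bijection.
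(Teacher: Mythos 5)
Your proposal is correct and follows essentially the same route as the paper's (much terser) proof: both observe that prefixes of a single sequence are pairwise comparable, so the relational structure on them is just the pullback of $\RA$ along $\epsA$, and that the non-repeating hypothesis is exactly what makes the counit restrict to a bijection. Your write-up simply expands the details that the paper leaves implicit.
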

\begin{proof}
Since the elements of the first set form a chain, this is immediate from the definition of $\Ek \As$, given that the correspondence $[a_1, \ldots , a_j] \mapsto a_j$ is bijective, which holds if $s$ is non-repeating.
\end{proof}

\begin{lemma}
\label{nrembedlemm}
The image of a path embedding $e : P \embed F_k \As$ can only contain non-repeating sequences.
\end{lemma}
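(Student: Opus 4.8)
The plan is to exploit the interpretation of the relation symbol $I$ carried by $F_k \As$ in the relative, $I$-enriched setting, since this is precisely what forces non-repetition. First I would record two facts. Since $F_k \As$ arises by applying the relative Ehrenfeucht-\Fraisse construction to $J \As$, whose $I$-relation is the identity, $I^{F_k \As}(s,t)$ holds exactly when $s$ and $t$ are prefix-comparable and $\epsA s = \epsA t$; that is, when one is a prefix of the other and they end in the same element. Dually, on any path $P$ the relation $I^{P}$ is the identity. Both are immediate from the definitions.

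Next I would describe the image of a path embedding $e : P \embed F_k \As$. As a morphism of $\RCk$, $e$ sends the root of $P$ to a root of $F_k \As$, \ie to a one-element sequence $[a_1]$, and it preserves the covering relation; since covers in the prefix order are exactly one-element extensions, the chain $P$ is carried to a covering chain $[a_1] \cvr [a_1, a_2] \cvr \cdots \cvr [a_1, \ldots, a_m]$. Hence the image of $e$ is the set of all prefixes of a single sequence $t = [a_1, \ldots, a_m]$, and it suffices to prove that $t$ itself is non-repeating, every other image element being a prefix of $t$.

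The core of the argument is then a short contradiction. Suppose $t$ repeats, say $a_j = a_l$ with $j < l$. The prefixes $s_j = [a_1, \ldots, a_j]$ and $s_l = [a_1, \ldots, a_l]$ both lie in the image, are prefix-comparable, and end in the same element, so $I^{F_k \As}(s_j, s_l)$ holds. Because $e$ is an embedding, its underlying map (in the $I$-enriched vocabulary $\sgp$ of the relative construction) is injective and reflects every relation, in particular $I$; writing $p, p'$ for the $e$-preimages of $s_j, s_l$, reflection gives $I^{P}(p,p')$, hence $p = p'$ since $I^{P}$ is the identity. But $s_j \neq s_l$ contradicts injectivity of $e$. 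Thus $t$, and with it every element of the image, is non-repeating.

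The one point I would be careful about, and the step that really carries the lemma, is the appeal to $I$-reflection. Without the relation $I$ the statement is false: when $\As$ carries no relation on an element $a$, the two-element path embeds onto $\{[a], [a,a]\}$ by an injective, strong, cover- and root-preserving $\sg$-homomorphism, whose image contains the repeating sequence $[a,a]$. It is exactly the relative setting, in which $F_k$ is the $I$-relative right adjoint and embeddings must reflect $I$ while paths carry the identity $I$-relation, that excludes this; so the proof depends essentially on working with $\Ekp$ rather than $\Ek$.
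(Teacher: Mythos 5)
Your proof is correct and follows essentially the same route as the paper's: the lemma is established by combining injectivity of the embedding with reflection of the $I$-relation, which forces $I^{F_k \As}$ to be the identity on the image and hence excludes repeating sequences. You merely spell out two steps the paper leaves implicit (that the image of a path embedding is the prefix-closed set of prefixes of a single sequence, and the explicit contradiction from a repetition $a_j = a_l$), which is a faithful elaboration rather than a different argument.
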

\begin{proof}
Since $e$ is an $I$-morphism, it must reflect the $I$ relation, while at the same time, it is injective. This implies that $I^{F_k \As}$ must be the identity relation on the image of $e$, or equivalently, that the image of $e$ must contain only non-repeating sequences.
\end{proof}

\begin{theorem}
\label{EFgeneric}
Given $\sg$-structures $\As$ and $\Bs$, the $\Gk(\As,\Bs)$ game for the Ehrenfeucht-\Fraisse comonad is equivalent to the Ehrenfeucht-\Fraisse game between $\As$ and $\Bs$.

\end{theorem}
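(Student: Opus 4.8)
The plan is to make the abstract game $\Gk$ completely concrete for the Ehrenfeucht-\Fraisse comonad by first computing the winning-position relation $\WABC$ explicitly, and then reading off the correspondence of moves and winning conditions with the classical game. The two lemmas just proved are exactly the tools needed for the first step. I would show that $(s,t) \in \WABC$ holds precisely when $s = [a_1,\dots,a_i]$ and $t = [b_1,\dots,b_i]$ are non-repeating sequences of \emph{equal} length for which $\{(a_j,b_j) \mid 1 \le j \le i\}$ is a partial isomorphism from $\As$ to $\Bs$. The non-repeating requirement is immediate from Lemma~\ref{nrembedlemm}. Equal length follows because the common path $P$ and the covering-preserving embeddings $e_1, e_2$ send $p$ to elements of equal height, and height in $\Ek\As$ is simply sequence length. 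For the partial-isomorphism condition I would apply Lemma~\ref{nrisolemm} on each side: the prefixes of $s$ form a substructure of $\Ek\As$ isomorphic to the substructure of $\As$ on $\{a_1,\dots,a_i\}$, similarly for $t$, and the chain isomorphisms $P \cong e_1(P)$ and $P \cong e_2(P)$ compose to give the bijection $a_j \mapsto b_j$, which is therefore an isomorphism of the two induced substructures.

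With this description in hand, a $\WABC$-position $(s,t)$ is precisely a partial-isomorphism position of the Ehrenfeucht-\Fraisse game after $i$ rounds, with Spoiler's plays $a_1,\dots,a_i$ and Duplicator's replies $b_1,\dots,b_i$. A Spoiler move $s' \rcvr s$ appends a single element $a$ to $s$, which I identify with Spoiler selecting $a$ in $\As$; Duplicator's response $t' \rcvr t$ appends $b$, identified with Duplicator's answer in $\Bs$, and symmetrically when Spoiler plays on the $\Bs$ side, matching the back-and-forth character of the game. By the characterization of $\WABC$, the new position lies in $\WABC$ if and only if the enlarged induced relation is again a partial isomorphism, which is exactly the Ehrenfeucht-\Fraisse winning condition; and since $\WABC$-positions have length $\le k$, the round counts agree. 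This produces a correspondence between Duplicator strategies that keep every position in $\WABC$ and Duplicator winning strategies in the $k$-round Ehrenfeucht-\Fraisse game.

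The one genuine point to reconcile — and the step I expect to require the most care — is the treatment of \emph{repeated} elements. A Spoiler extension $s[a]$ with $a$ already occurring in $s$ yields a repeating sequence, which by Lemma~\ref{nrembedlemm} can never lie in $\WABC$, whereas in the classical game Spoiler may legally replay an element and Duplicator simply copies the earlier response. I would resolve this through the standard observation that replays are inessential: restricting Spoiler to fresh elements does not change which player wins the Ehrenfeucht-\Fraisse game. Indeed, a Duplicator strategy against fresh-element play extends to all plays by answering a replay $a = a_j$ with the earlier reply $b_j$, which leaves the induced partial isomorphism unchanged (single-valuedness in both directions and preservation/reflection of relations all reduce to the fresh sub-play), and the converse inclusion is immediate since fresh-element play is a special case. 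As the $\WABC$ condition confines the substantive play of $\Gk$ to fresh, non-repeating extensions, this shows that $\Gk$ and the classical Ehrenfeucht-\Fraisse game have the same winner.

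Finally, the promised logical corollary is obtained by combining this equivalence with Theorem~\ref{pisoprop}, which identifies Duplicator winning strategies in $\Gk$ with spans of open pathwise embeddings and hence with $\As \eqbEk \Bs$, and with the classical characterization of $\eqLk$ by the Ehrenfeucht-\Fraisse game; together these yield $\As \eqbEk \Bs \IFF \As \eqLk \Bs$.
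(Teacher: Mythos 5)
Your proposal follows essentially the same route as the paper's proof: both reduce the claim to showing that $\WABC$ coincides with the partial-isomorphism winning condition, both invoke Lemmas~\ref{nrisolemm} and~\ref{nrembedlemm} for the two directions of that equivalence, and both dispose of repeated Spoiler moves by observing that Duplicator's responses to replays are forced, so the game may be restricted to non-repeating sequences without changing the winner. Your version is, if anything, slightly more explicit than the paper's about the equal-length/height-preservation point and about why the replay restriction is harmless.
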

\begin{proof}
Since $s' \rcvr s$ in $\Ek \As$ iff $s' = s[a]$ for some $a \in A$, we see from the definition of $\Gk(\As,\Bs)$ for the Ehrenfeucht-\Fraisse comonad
that it coincides with the Ehrenfeucht-\Fraisse game, provided we can show, for $(s,t) \in \Ek \As \times \Ek \Bs$, that $(s,t) \in \WABC$ iff $(s,t)$ satisfies the winning condition for the Ehrenfeucht-\Fraisse game.
We recall that this winning condition is that the relation $s_i \mapsto t_i$ is a partial isomorphism between $\As$ and $\Bs$.

Firstly, suppose that $(s,t) \in \WABC$, via path embeddings $e_1 : P \embed \Ek \As$, $e_2 : P \embed \Ek \Bs$.
This implies that the image of $e_1$, consisting of the prefixes of $s$, is isomorphic to the image of $e_2$, consisting of the prefixes of $t$. 
Applying Lemmas~\ref{nrisolemm} and \ref{nrembedlemm}, we conclude that $(s,t)$ determine a partial isomorphism between $\As$ and $\Bs$, as required.

For the converse, note that for any $(s, t)$ following a winning strategy for Duplicator in the Ehrenfeucht-\Fraisse game, the elementwise correspondence $s_i \mapsto t_i$ must be a partial isomorphism, and in particular a bijection. Hence Duplicator must always give the same response to repetitions of a given move by Spoiler. It follows that the behaviour of the strategy is determined by its restriction to non-repeating sequences, and without loss of generality, we can restrict the strategy to non-repeating sequences $(s,t)$. 
The prefixes of $s$ form an induced substructure of $\Ek \As$ which, applying lemma~\ref{nrisolemm}, is isomorphic to the substructure of $\Ek \Bs$ induced by the prefixes of $t$. Hence these substructures are the images of path embeddings with a common domain, and $(s,t) \in \WABC$.
\end{proof}

\begin{theorem}
\label{EFlogicequivthm}
For all structures $\As$ and $\Bs$: $\As \eqLk \Bs \; \IFF \; \As \eqbEk \Bs$.
\end{theorem}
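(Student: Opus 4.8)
The plan is to derive this logical equivalence as a direct composition of results already established for the Ehrenfeucht-\Fraisse comonad, with the generic game $\Gk$ serving as the bridge between the comonadic and the logical worlds.

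First I would apply Theorem~\ref{pisoprop}, specialized to $\Ck = \Ek$, to rewrite the left-hand condition: $\As \eqbEk \Bs$ holds if and only if Duplicator has a winning strategy in the back-and-forth game $\Gk(\As,\Bs)$. Next I would apply Theorem~\ref{EFgeneric}, which identifies this generic game, for the Ehrenfeucht-\Fraisse comonad, with the standard $k$-round Ehrenfeucht-\Fraisse game played between $\As$ and $\Bs$. After these two steps the claim is reduced to the purely classical correspondence between the back-and-forth game and quantifier-rank equivalence.

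The remaining step is the classical Fraïssé characterization: Duplicator wins the $k$-round Ehrenfeucht-\Fraisse game iff $\As$ and $\Bs$ satisfy the same sentences of quantifier rank $\leq k$, i.e.\ iff $\As \eqLk \Bs$. Here I would take care over the scope of the classical input. The version recorded as Theorem~\ref{classicgamesthm} is phrased for first-order logic over a \emph{finite} vocabulary, whereas the present statement ranges over \emph{all} structures and, by the conventions of Section~\ref{fragssec}, $\Lk$ denotes the quantifier-rank-$\leq k$ fragment of the infinitary logic $\Linf$. The point to flag is that it is precisely the infinitary form of Fraïssé's theorem that delivers this equivalence with no finiteness assumption: passing to $\Linf$ is exactly what is needed so that quantifier-rank-$\leq k$ equivalence matches the winning condition of the $k$-round game for arbitrary structures.

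Since each of the three links in this chain is already available, I do not anticipate a genuine obstacle; the only care required is in selecting the infinitary version of the classical theorem as the correct input, so that the result holds in the stated generality rather than only for finite structures over a finite vocabulary.
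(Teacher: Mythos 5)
Your proof is correct and follows exactly the paper's route: Theorem~\ref{pisoprop} specialised to $\Ek$, then Theorem~\ref{EFgeneric} to identify the generic game with the standard Ehrenfeucht-\Fraisse game, then the classical game--logic correspondence of Theorem~\ref{classicgamesthm}(1). Your additional remark that the infinitary ($\Linf$-fragment) form of the classical theorem is the right input for arbitrary structures is a valid and worthwhile point of care that the paper's one-line proof leaves implicit.
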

\begin{proof}
This is an immediate corollary to Theorems~\ref{classicgamesthm}(1), \ref{pisoprop},  and~\ref{EFgeneric}.
\end{proof}

\begin{remark}
If we drop the $I$-morphism requirement on embeddings, we obtain a weaker notion of equivalence, in which the winning condition is that we have a \emph{partial correspondence} rather than a partial isomorphism. This yields 
a characterization of elementary equivalence for equality-free logic \cite{Casanovas1996}.
\end{remark}

\subsection{The pebbling comonad}
\label{logeqIIsec}

The analysis for the pebbling comonad is largely similar to that for the Ehrenfeucht-\Fraisse comonad.

Firstly, given $s = [(p_1, a_1), \ldots , (p_i, a_i)] \in \Pk \As$, and $p \in \{ p_1, \ldots , p_i \}$, we define
$\lastp(s) = a$, where $s = s_1[(p,a)]s_2$, and $p$ does not occur in $s_2$.
We say that $s$ is \emph{non-duplicating} if for all $s' \preford s$, $\lastp(s') = \lastpp(s') \IMP p = p'$. Thus we never have two pebbles placed on the same element.

Given $s = [(p_1, a_1), \ldots , (p_i, a_i)] \in \Pk \As$ and $t = [(p_1, b_1), \ldots , (p_i, b_i)] \in \Pk \Bs$, we define $\gamma(s,t) := \{ (\lastp(s), \lastp(t)) \mid p \in \{ p_1, \ldots , p_i \} \} \subseteq A \times B$. The prefix of $s$ (respectively $t$) of length $j$ is denoted $s_j$ (respectively $t_j$). The winning condition on plays $(s,t) \in \Pk A \times \Pk B$ for the pebbling game is that for each $j$, $\gamma(s_j,t_j)$ is a partial isomorphism between $\As$ and $\Bs$.

The following results play the role for the pebbling comonad of Lemmas~\ref{nrisolemm} and~\ref{nrembedlemm} for the Ehrenfeucht-\Fraisse comonad.

%We write $i : P_s \hookrightarrow \Pk \As$ for the substructure of $\Pk \As$ determined by the prefixes of $s$.

\begin{lemma}
\label{nonduplemm}
%If $s$ is non-duplicating, then $\epsA \circ i : P_s \to \As$ is an embedding.
If $s$ and $t$ are non-duplicating, then the following are equivalent:
\begin{enumerate}
    \item For all $j$, $\gamma(s_j,t_j)$ is a partial isomorphism from $\As$ to $\Bs$.
   \item  The substructure $P_s$ of $\Pk \As$ determined by the prefixes of $s$ is isomorphic to the substructure $P_t$ of $\Pk \Bs$ determined by the prefixes of $t$.
\end{enumerate}
\end{lemma}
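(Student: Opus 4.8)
The plan is to use the canonical level-preserving bijection $\phi : P_s \to P_t$, $s_j \mapsto t_j$, as the only candidate isomorphism, and to show that ``$\phi$ is an isomorphism'' is exactly condition~(1). This bijection is well defined because $s = [(p_1,a_1),\dots,(p_i,a_i)]$ and $t = [(p_1,b_1),\dots,(p_i,b_i)]$ share both their length $i$ and their sequence of pebble indices $p_1,\dots,p_i$, and it preserves the prefix chain $s_1 \cvr \cdots \cvr s_i$. First I would record the key observation that, for a tuple of prefixes $(s_{j_1},\dots,s_{j_n})$, the relation $R^{\Pk \As}(s_{j_1},\dots,s_{j_n})$ decomposes into the comparability condition~(1), the pebble-currency condition~(2), and the condition~(3) that $R^{\As}(a_{j_1},\dots,a_{j_n})$. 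Since~(1) is automatic on a chain of prefixes and~(2) depends only on the shared pebble indices, the relations $R^{\Pk \As}$ and $R^{\Pk \Bs}$ agree on $\phi$-corresponding tuples except possibly in the $R^{\As}$-versus-$R^{\Bs}$ content of~(3). Thus $\phi$ is an isomorphism iff, for every tuple of prefixes satisfying~(2), we have $R^{\As}(a_{j_1},\dots,a_{j_n}) \IFF R^{\Bs}(b_{j_1},\dots,b_{j_n})$.

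The bridge to condition~(1) is the identification of condition-(2) tuples with tuples of current pebble placings. If $(s_{j_1},\dots,s_{j_n})$ satisfies~(2) and $j = \max_l j_l$, then each pebble $p_{j_l}$ is still current at stage $j$, so $a_{j_l} = \mathsf{last}_{p_{j_l}}(s_j)$ and likewise $b_{j_l} = \mathsf{last}_{p_{j_l}}(t_j)$; conversely every tuple of current placings in $s_j$ arises in this way. Hence the displayed equivalence over all condition-(2) tuples is precisely the statement that each $\gamma(s_j,t_j)$ preserves and reflects every relation. For $(1)\IMP(2)$ this gives that $\phi$ is an isomorphism, whence $P_s \cong P_t$; for $(2)\IMP(1)$, reading the isomorphism as $\phi$, its relation-preservation feeds back through this identification to show each $\gamma(s_j,t_j)$ preserves and reflects relations.

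The non-duplicating hypothesis enters in two places, both needed to upgrade ``preserves and reflects relations'' to ``is a partial isomorphism''. Because $s_j$ is non-duplicating, distinct pebbles carry distinct current placings, so the correspondence $\lastp(s_j) \mapsto \lastp(t_j)$ is single-valued; the same for $t_j$ yields injectivity (the $I$-relation is just the instance $R = I$, and its preservation reduces to exactly this single-valuedness and injectivity). Together with the relation-preservation above, these are precisely what it means for $\gamma(s_j,t_j)$ to be a partial isomorphism. I expect the main obstacle to be pinning down the isomorphism in the $(2)\IMP(1)$ direction: an arbitrary $\sg$-isomorphism $P_s \to P_t$ need not be prefix-preserving, and in fact a non-canonical isomorphism can exist even when condition~(1) fails, so the statement must be read with $\phi$, equivalently with an isomorphism in $\RPk$ respecting the forest order. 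This is exactly the form in which the lemma will be applied, since in $\WABC$ the isomorphism $P_s \cong P_t$ is induced by a common path embedding and is therefore height-preserving.
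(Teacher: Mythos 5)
Your proof is correct and follows essentially the same route as the paper's: both reduce the statement to the canonical prefix-preserving bijection $s_j \mapsto t_j$, decompose $R^{\Pk \As}$ on a chain of prefixes into the pebble-currency condition together with $R^{\As}$ on the current placings, and use non-duplication to upgrade relation-preservation to partial bijectivity. Your caveat that the $(2)\IMP(1)$ direction needs the isomorphism to be the canonical (prefix/height-preserving) one is a fair point that the paper leaves implicit, and, as you observe, it is harmless where the lemma is actually applied, since there the isomorphism is induced by path embeddings with a common domain.
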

\begin{proof}
If $s$ and $t$ are non-duplicating, then each $\gamma(s_j,t_j)$ is a partial bijection.
Also, the correspondence $\beta : s_j \mapsto t_j$ is a bijection between $P_s$ and $P_t$.
For $s_1, \ldots , s_n \in P_s$, since they are all prefixes of $s$ and hence comparable,
$R^{\Pk \As}(s_1, \ldots , s_n)$ iff $R^{\As}(\epsA s_1, \ldots , \epsA s_n)$ and $\{ \epsA s_1, \ldots , \epsA s_n \} \subseteq \dom \, \gamma(s',t')$, where $s'$ is the maximum in the prefix ordering among $s_1, \ldots , s_n$. Similarly for relation instances in $P_t$.
Hence preservation of relation instances by $\beta$ is equivalent to preservation by $\gamma(s_j,t_j)$ for all $j$, and similarly for their inverses.
\end{proof}

\begin{lemma}
\label{ndupembedlemm}
The image of an embedding $e : P \embed G_k \As$ can only contain non-duplicating sequences.
\end{lemma}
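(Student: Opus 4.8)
The plan is to imitate the proof of Lemma~\ref{nrembedlemm}, with the role played there by the $I$-relation detecting \emph{repeated} elements now taken over by the $I$-relation detecting \emph{duplicated} pebble placings. Recall from Section~\ref{Imorsec} that, in the relative ($I$-morphism) version of the pebbling comonad, $G_k \As$ carries an interpretation of $I$ obtained by applying the functor to the identity relation $I^{\As}$, and that (for $u \preford v$) $I^{G_k \As}(u,v)$ holds precisely when $u$ ends in a move $(p,a)$, $v$ ends in a move $(p',a)$ with the same element $a$, and the placing of $p$ at $u$ is still current at $v$. This is exactly the configuration that witnesses a sequence being duplicating.

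First I would reduce to a statement about a single sequence. Since $P$ is a path and $e$ preserves the root and the covering relation, the image of $e$ is a covering chain $s_1 \cvr \cdots \cvr s_i$ of one-step prefix extensions, and so consists exactly of the (non-empty) prefixes of its maximal element $s := s_i$. As any prefix of a non-duplicating sequence is itself non-duplicating, it therefore suffices to show that $s$ is non-duplicating.

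Next I would argue by contradiction. Suppose $s$ is duplicating, so there is a prefix $s' \preford s$ and distinct indices $p \neq p'$ with $\lastp(s') = \lastpp(s') = a$. Let $u$ be the prefix of $s'$ whose last move is the (current) placing $(p,a)$, and $v$ the prefix whose last move is $(p',a)$; since $p \neq p'$ these are distinct moves, so $u \neq v$, and both being prefixes of $s'$ they are comparable, say $u \preford v$. I would then verify directly that $I^{G_k \As}(u,v)$ holds: the elements agree, $\epsA u = a = \epsA v$, and the placing of $p$ at $u$ is current throughout $s'$ (it is the last placing of $p$ before $s'$), hence in particular current at $v \preford s'$. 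Since $u$ and $v$ are prefixes of $s$, they lie in the image of $e$; as $e$ is an embedding it is a strong homomorphism and so reflects $I$, and since $I^{P}$ is the identity relation on the path $P$ while $e$ is injective, this forces $u = v$, contradicting $u \neq v$.

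The one point requiring genuine care is the verification of the pebble-currency condition in $I^{G_k \As}(u,v)$, namely that $p$ is not re-placed between $u$ and $v$; this is precisely where the definition of $\lastp$ in terms of the \emph{last} occurrence of a pebble is used, and it is the (slightly more delicate) pebbling analogue of the non-repetition bookkeeping in Lemma~\ref{nrembedlemm}. Everything else is a direct transcription of that earlier argument.
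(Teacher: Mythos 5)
Your proposal is correct and follows the same route as the paper's proof: the paper argues in one line that an embedding reflects $I$ and is injective, forcing $I^{G_k\As}$ to be the identity on the image, which is equivalent to all sequences there being non-duplicating. You have simply unfolded that equivalence explicitly (exhibiting the $I$-related pair $(u,v)$ that a duplicating sequence produces), which is a faithful, more detailed rendering of the same argument.
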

\begin{proof}
Since $e$ is an $I$-morphism, it must reflect the $I$ relation, while at the same time, it is injective. This implies that $I^{G_k \As}$ must be the identity relation, or equivalently, that the image of $e$ must contain only non-duplicating sequences.
\end{proof}

\begin{theorem}
\label{Pgeneric}
Given $\sg$-structures $\As$ and $\Bs$, the $\Gk(\As,\Bs)$ game for the pebbling comonad is equivalent to the pebbling game between $\As$ and $\Bs$.
\end{theorem}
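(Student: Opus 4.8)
The plan is to follow the template of Theorem~\ref{EFgeneric} for the Ehrenfeucht-\Fraisse comonad, replacing the ``non-repeating'' bookkeeping by the ``non-duplicating'' bookkeeping appropriate to the sliding window of the pebble game, and invoking Lemmas~\ref{nonduplemm} and~\ref{ndupembedlemm} in place of their Ehrenfeucht-\Fraisse counterparts. First I would match up the move structure: in $\Pk \As$ we have $s' \rcvr s$ iff $s' = s[(p,a)]$ for some pebble index $p \in \kset$ and $a \in A$, which is precisely a move of the pebble game in which Spoiler places pebble $p$ on the element $a$, with the current-placing semantics built into condition~(2) of the definition of $R^{\Pk \As}$. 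Hence the moves of $\Gk(\As,\Bs)$ coincide with those of the pebble game, and the whole theorem reduces to identifying the two winning conditions: I must show that for $(s,t) \in \Pk A \times \Pk B$, membership in $\WABC$ is equivalent, after restriction to the non-duplicating plays that matter, to the condition that $\gamma(s_j,t_j)$ is a partial isomorphism for every prefix length $j$.

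For the forward implication, suppose $(s,t) \in \WABC$, witnessed by path embeddings $e_1 : P \embed \Pk \As$ and $e_2 : P \embed \Pk \Bs$ with $s$ and $t$ in their respective images. By Lemma~\ref{ndupembedlemm} both images consist only of non-duplicating sequences, so $s$ and $t$ are non-duplicating; and since $e_1, e_2$ share the domain $P$ and are embeddings, the substructure of prefixes of $s$ is isomorphic to that of prefixes of $t$. Lemma~\ref{nonduplemm} then converts this isomorphism of prefix substructures into the statement that $\gamma(s_j,t_j)$ is a partial isomorphism for all $j$, which is exactly the pebble-game winning condition.

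For the converse, I would argue as in the Ehrenfeucht-\Fraisse case that a winning strategy for Duplicator may be taken, without loss of generality, to produce only non-duplicating plays: the current-placing partial-isomorphism winning condition forces each $\gamma(s_j,t_j)$ to be a partial bijection, so Duplicator's responses are determined on the current window, and any duplicating move by Spoiler contributes no new information to $\gamma$. For such a non-duplicating pair $(s,t)$ satisfying the winning condition, Lemma~\ref{nonduplemm} supplies an isomorphism between the prefix substructures $P_s$ and $P_t$; these two chains are then the images of a pair of path embeddings with common abstract domain $P \cong P_s \cong P_t$, whence $(s,t) \in \WABC$. Combined with Theorem~\ref{pisoprop}, this yields the equivalence of the two games.

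The main obstacle, and the place that genuinely differs from the Ehrenfeucht-\Fraisse argument, is the careful handling of the sliding window. The correspondence $s_j \mapsto t_j$ between prefixes is no longer an elementwise correspondence on the carriers but is mediated by the current-placing functions $\lastp$, and condition~(2) in the definition of $R^{\Pk \As}$ (that only the current placement of each pebble is taken into account) must be tracked to ensure that the isomorphism of prefix substructures matches up exactly with the family of partial isomorphisms $\gamma(s_j,t_j)$. This is precisely what Lemma~\ref{nonduplemm} encapsulates, so the real work of the proof is to deploy it correctly while justifying the reduction to non-duplicating plays; the remaining verifications are routine and parallel to Theorem~\ref{EFgeneric}.
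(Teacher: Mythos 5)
Your proposal is correct and follows exactly the route the paper takes: the paper's own proof of this theorem is a one-line reduction to the argument for Theorem~\ref{EFgeneric}, with Lemmas~\ref{nrisolemm} and~\ref{nrembedlemm} replaced by Lemmas~\ref{nonduplemm} and~\ref{ndupembedlemm}, which is precisely the substitution you carry out (and your expanded treatment of the non-duplicating reduction and the covering-move correspondence fills in the details the paper leaves implicit).
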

\begin{proof}
The proof proceeds essentially identically to that of Theorem~\ref{EFgeneric}, except that the appeal to 
Lemmas~\ref{nrisolemm} and~\ref{nrembedlemm} is replaced by one to Lemmas~\ref{nonduplemm} and~\ref{ndupembedlemm}.
\end{proof}

\begin{theorem}
\label{Plogicequivthm}
For all structures $\As$ and $\Bs$: $\As \eqLvk \Bs \; \IFF \; \As \eqbPk \Bs$.
\end{theorem}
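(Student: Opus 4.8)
The plan is to obtain this as an immediate corollary, exactly paralleling the proof of Theorem~\ref{EFlogicequivthm} in the Ehrenfeucht-\Fraisse case. There, the equivalence $\As \eqLk \Bs \IFF \As \eqbEk \Bs$ was extracted by chaining together three facts: the classical characterization of $\eqLk$ by the concrete game (Theorem~\ref{classicgamesthm}(1)), the identification of the generic comonadic back-and-forth game $\Gk$ with that concrete game (Theorem~\ref{EFgeneric}), and the characterization of $\eqbEk$ in terms of winning strategies for $\Gk$ (Theorem~\ref{pisoprop}). The pebbling case simply substitutes the pebbling counterparts of the first two ingredients, while reusing the fully generic Theorem~\ref{pisoprop} unchanged.

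Concretely, I would argue as follows. By Theorem~\ref{classicgamesthm}(2), $\As \eqLvk \Bs$ holds iff Duplicator has a winning strategy in the $k$-pebble game between $\As$ and $\Bs$. By Theorem~\ref{Pgeneric}, this $k$-pebble game is equivalent to the generic game $\Gk(\As,\Bs)$ instantiated to the pebbling comonad $\Pk$, so Duplicator has a winning strategy in one iff Duplicator has a winning strategy in the other. Finally, by Theorem~\ref{pisoprop}, Duplicator has a winning strategy in $\Gk(\As,\Bs)$ iff $\As \eqbPk \Bs$. Composing these three equivalences yields $\As \eqLvk \Bs \IFF \As \eqbPk \Bs$, as required.

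Since the statement is a direct corollary of results already in hand, there is no genuine obstacle remaining at this point: all the substantive work has been discharged in the supporting theorems, and in particular in Theorem~\ref{Pgeneric}, whose proof rests on the alignment of winning conditions established through the partial-isomorphism condition on $\gamma(s_j,t_j)$ and the handling of the non-duplicating restriction via Lemmas~\ref{nonduplemm} and~\ref{ndupembedlemm}. The only thing one must check is that the three cited results compose cleanly, i.e.\ that the notion of $k$-pebble game appearing in the classical Theorem~\ref{classicgamesthm}(2) is literally the same game whose equivalence with $\Gk$ is proved in Theorem~\ref{Pgeneric}. Granting that routine matching, the corollary is immediate.
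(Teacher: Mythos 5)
Your proposal is correct and matches the paper's proof exactly: the paper also derives this as an immediate corollary of Theorems~\ref{classicgamesthm}(2), \ref{pisoprop}, and~\ref{Pgeneric}, chained together precisely as you describe. Nothing further is needed.
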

\begin{proof}
This is an immediate corollary to Theorems~\ref{classicgamesthm}(2), \ref{pisoprop}, and~\ref{Pgeneric}.
\end{proof}

\subsection{The modal comonad}
The key notion of equivalence in modal logic is bisimulation \cite{blackburn2002modal,sangiorgi2009origins}. We shall define the finite approximants to bisimulation \cite{hennessy1980observing}.\footnote{Our focus on finite approximants in this paper is for uniformity, and because they are relevant in resource terms. We can extend the comonadic semantics beyond the finite levels. We shall return to this point in the final section.}
Given Kripke structures $\As$ and $\Bs$, we define a family of relations $\sim_k \; \subseteq \; A \times B$: $a \sim_{k+1} b$ iff 
\begin{enumerate}
    \item for all unary $P$, $\PA(a)$ iff $\PB(b)$
    \item for all binary $\Ralph$, $\RaA(a, a')$ implies for some $b'$, $\RaB(b,b')$ and $a' \sim_k b'$, and $\RaB(b,b')$ implies for some $a'$, $\RaA(a, a')$ and $a' \sim_k b'$.
\end{enumerate}
The base case $a \sim_{0} b$ holds whenever only the first condition is satisfied.

Plays in the \emph{modal bisimulation game} between pointed structures $(\As,a)$ and $(\Bs,b)$ are represented by positions $(s,t) \in \Mk (A, a) \times \Mk (B, b)$. 
A position $(s, t)$ satisfies the winning condition if $s = [a_0, \alpha_1, a_1, \ldots \alpha_i, a_i]$,
$t = [b_0, \alpha_1, b_1, \ldots \alpha_i, b_i]$, $a_0 = a$, $b_0 = b$, and for all $0 \leq j \leq i$, and unary predicates $P$, $\PA(a_j)$ iff $\PB(b_j)$. 
Note that there is no bijection requirement. This is why $I$-morphisms are not needed in the modal case.

The initial position is $([a], [b])$. If we have reached position $(s,t)$ after $k$ rounds, then round $k+1$ proceeds as follows:
Spoiler either chooses $s' \rcvr s$, and Duplicator must respond with $t' \rcvr t$, producing the new position $(s',t')$; or Spoiler chooses $t'' \rcvr t$, and Duplicator must respond with $s'' \rcvr s$, producing the new position $(s'',t'')$. Duplicator wins the round if they are able to respond, and the new position satisfies the winning condition.

Note that $s' \rcvr s$ iff $s' = s[\alpha, a']$ for a transition relation $\Ralph$ and $a' \in A$ such that  
$\RaA(a, a')$, where $a = \epsA(s)$.

The following standard result is essentially immediate from the definitions.
\begin{proposition}
Given pointed structures $(\As, a)$ and $(\Bs, b)$, the following are equivalent:
\begin{enumerate}
    \item $a \sim_k b$
    \item Duplicator has a winning strategy for the $k$-round modal bisimulation game.
\end{enumerate}
\end{proposition}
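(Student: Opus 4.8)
The plan is to prove the two implications by induction on $k$, exploiting the recursive structure of both $\sim_k$ and the modal bisimulation game, which are defined by matching clauses. Since the statement is essentially a routine unwinding, the strategy is to show that a winning strategy for Duplicator in the $k$-round game is precisely a witness that each pair of current positions stands in the $\sim_j$ relation at the appropriate depth.

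\textbf{Direction $(2) \Rightarrow (1)$.} First I would assume Duplicator has a winning strategy in the $k$-round game and show $a \sim_k b$ by induction on $k$. The base case $k=0$ is immediate: the winning condition at the initial position $([a],[b])$ asserts exactly that for all unary $P$, $\PA(a) \IFF \PB(b)$, which is the defining condition of $a \sim_0 b$. For the inductive step with $k' = k+1$, the winning condition already gives the propositional clause (1). For clause (2), suppose $\RaA(a,a')$, so that $s' = [a,\alpha,a'] \rcvr [a]$ is a legal Spoiler move; since the strategy is winning, Duplicator responds with some $t' = [b,\alpha,b'] \rcvr [b]$, giving $\RaB(b,b')$. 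The residual of the winning strategy after this first round is a winning strategy in the $k$-round game between $(\As,a')$ and $(\Bs,b')$, so by the induction hypothesis $a' \sim_k b'$. The symmetric argument, starting from a Spoiler move $\RaB(b,b')$ in $\Bs$, yields the ``back'' half of clause (2). Hence $a \sim_{k+1} b$.

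\textbf{Direction $(1) \Rightarrow (2)$.} Conversely, assuming $a \sim_k b$, I would construct a winning strategy for Duplicator by induction on $k$. The idea is to maintain the invariant that whenever the game reaches a position $(s,t)$ with $\epsA(s) = a_i$ and $\epsB(t) = b_i$, we have $a_i \sim_{k-i} b_i$. At each round, if Spoiler plays $s' = s[\alpha,a'] \rcvr s$ with $\RaA(a_i,a')$, then by the ``forth'' clause of $a_i \sim_{k-i} b_i$ (noting $k - i \geq 1$ since the game lasts at most $k$ rounds) there is $b'$ with $\RaB(b_i,b')$ and $a' \sim_{k-i-1} b'$; Duplicator responds with $t' = t[\alpha,b']$, preserving the invariant. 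A Spoiler move in $\Bs$ is handled symmetrically using the ``back'' clause. The invariant at depth $0$ supplies the propositional matching $\PA(a_j) \IFF \PB(b_j)$ needed for the winning condition at every visited position, so the resulting strategy is winning.

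The two directions together establish the equivalence. There is no genuine obstacle here: the proof is a direct transcription between the inductive definition of $\sim_k$ and the round structure of the game, and the only point requiring mild care is bookkeeping the resource index, namely tracking that after $i$ rounds the relevant approximant is $\sim_{k-i}$ so that the inductive clauses remain available for the remaining rounds. This is exactly why the statement is asserted to be ``essentially immediate from the definitions.''
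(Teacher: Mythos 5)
Your proof is correct, and it is exactly the direct definition-unwinding that the paper has in mind: the paper offers no written proof at all, declaring the result ``essentially immediate from the definitions,'' and your double induction (with the residual-strategy argument for one direction and the $\sim_{k-i}$ invariant for the other) is the standard way to make that precise. The only point worth flagging is that you correctly read the winning condition as applying already at the initial position $([a],[b])$, which is needed for the base case $k=0$ to match $\sim_0$.
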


\begin{theorem}
\label{Mgeneric}
Given pointed structures $(\As, a)$ and $(\Bs, b)$, the $\Gk(\As,\Bs)$ game for the modal comonad is equivalent to the modal bisimulation game between $(\As, a)$ and $(\Bs, b)$.
\end{theorem}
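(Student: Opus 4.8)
The plan is to follow the template established for the Ehrenfeucht-\Fraisse and pebbling comonads in Theorems~\ref{EFgeneric} and~\ref{Pgeneric}. As observed in the paragraph preceding the theorem, $s' \rcvr s$ in $\Mk(\As,a)$ holds exactly when $s' = s[\alpha, a']$ for some transition $\Ralph$ and some $a'$ with $\RaA(\epsA(s), a')$; hence the moves available to Spoiler and Duplicator in $\Gk(\As,\Bs)$ coincide precisely with the moves of the modal bisimulation game, namely extending the current sequences by a single labelled transition. It therefore remains only to show, for $(s,t) \in \Mk A \times \Mk B$, that $(s,t) \in \WABC$ if and only if $(s,t)$ satisfies the winning condition of the modal bisimulation game.

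The key observation, playing the role of Lemma~\ref{nrisolemm}, is that a path embedding $e : P \embed \Mk(\As,a)$ has image exactly the set of prefixes of some sequence $s$: since $P$ is a linear tree order and $e$ is a covering-preserving embedding taking the root of $P$ to the root $[a]$ of $\Mk(\As,a)$, its image is the covering chain from $[a]$ up to the image $s$ of the top element of $P$. The substructure induced by the prefixes of $s = [a_0, \alpha_1, a_1, \ldots , \alpha_i, a_i]$ is a path whose isomorphism type as a Kripke structure is determined by exactly two pieces of data: the sequence of edge labels $\alpha_1, \ldots , \alpha_i$ (since, by definition of $\RMA_{\alpha}$, consecutive prefixes are related precisely along the label $\alpha$), and, at each prefix $s_j$, the set of unary predicates holding there, which by definition of $\PMA$ equals $\{ P \mid \PA(a_j) \}$. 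Consequently, the prefixes of $s$ and the prefixes of $t = [b_0, \beta_1, b_1, \ldots , \beta_i, b_i]$ form isomorphic objects of $\RCk$ if and only if $\alpha_j = \beta_j$ for all $j$ and $\PA(a_j) \IFF \PB(b_j)$ for all $j$ and every $P$; noting that roots are preserved, so that $a_0 = a$ and $b_0 = b$, this is precisely the modal bisimulation winning condition. Unlike the earlier two cases, no non-repetition or non-duplication hypothesis, and hence no analogue of Lemma~\ref{nrembedlemm} or of the $I$-morphism machinery, is needed here, since the modal winning condition imposes no bijectivity but only the agreement of unary predicates along the two branches.

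With this in hand, both directions follow at once. If $(s,t) \in \WABC$ via path embeddings $e_1 : P \embed \Mk \As$ and $e_2 : P \embed \Mk \Bs$ with $s = e_1(p)$ and $t = e_2(p)$, then the prefix substructures of $s$ and of $t$ are each isomorphic to $P$, hence to one another, so the winning condition holds. Conversely, if $(s,t)$ satisfies the winning condition, then the prefix substructures of $s$ and $t$ are isomorphic; taking $P$ to be this common path and $e_1$, $e_2$ to be the inclusions of the two prefix chains exhibits $(s,t) \in \WABC$.

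The only point requiring genuine care is the verification that the isomorphism type of a prefix chain in $\Mk(\As,a)$ is faithfully recorded by its label sequence together with its per-level unary predicates, and in particular that an $\RCk$-isomorphism of two such chains forces the labels to agree level by level; this rests directly on the definitions of $\RMA_{\alpha}$ and $\PMA$ and on the preservation of roots by morphisms of $\RCk$. Because the modal comonad unfolds the structure into a genuine tree and exhibits no repetition phenomena, this step is in fact more transparent than its counterparts in Theorems~\ref{EFgeneric} and~\ref{Pgeneric}, so I expect it to present no real obstacle.
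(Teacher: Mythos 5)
Your proposal is correct and follows essentially the same route as the paper: both reduce the claim to matching the winning conditions, via the observation that a path embedding into $\Mk(\As,a)$ is exactly a covering chain of prefixes whose isomorphism type is determined by the transition labels and the per-level unary predicates. Your additional remarks (moves coincide via the covering relation; no $I$-morphism machinery needed) mirror what the paper establishes in the surrounding text.
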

\begin{proof}
Given the way we have defined the modal bisimulation game, this immediately reduces to verifying that the winning conditions coincide.
To see this, note that there is a path embedding $e : (R, r) \embed \Mk (\As, a)$ with $e(r_i) = s_i$ iff 
$r = r_0 \lt{\alpha_1} r_1 \lt{\alpha_2} \ldots , \lt{\alpha_i} r_i$, 
$s_i = [a_0, \alpha_1, a_1, \alpha_2, a_2, \ldots, \alpha_i, a_i]$, with $a = a_0$, and for all unary $P$, and
$0 \leq j \leq i$, $P^{Q}(r_i) \IFF P^{\As}(a_i)$.
Chaining these equivalences for a span of path embeddings shows
the equivalence of the winning conditions for $\Gk(\As,\Bs)$ and the modal bisimulation game.
\end{proof}

\begin{theorem}
\label{Mlogicequivthm}
For all pointed structures $(\As, a)$ and $(\Bs,b)$: $\As \eqMk \Bs \; \IFF \; \As \eqbMk \Bs$.
\end{theorem}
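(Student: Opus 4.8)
The plan is to obtain this as an immediate corollary, exactly parallel to the proofs of Theorems~\ref{EFlogicequivthm} and~\ref{Plogicequivthm}, by chaining three already-established facts. The point of the preceding development is that all of the substantive work has been carried out generically in Theorem~\ref{pisoprop} and concretely in Theorem~\ref{Mgeneric}; here one only has to assemble these with the classical Hennessy--Milner theorem.

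First, I would invoke Theorem~\ref{pisoprop}, which gives that the back-and-forth equivalence $\As \eqbMk \Bs$ (read as $(\As,a) \eqbMk (\Bs,b)$) holds if and only if Duplicator has a winning strategy in the generic game $\Gk(\As,\Bs)$ for the modal comonad. Second, Theorem~\ref{Mgeneric} identifies this generic game with the concrete modal bisimulation game between $(\As,a)$ and $(\Bs,b)$; combined with the Proposition immediately preceding it, a Duplicator winning strategy in that game is equivalent to $a \sim_k b$, the $k$-th bisimulation approximant. Third, Theorem~\ref{classicgamesthm}(3) (Hennessy--Milner) states that Duplicator wins the $k$-round bisimulation game from $(\As,a)$ to $(\Bs,b)$ precisely when the two pointed structures satisfy the same modal formulas of depth $\leq k$, which is exactly $\As \eqMk \Bs$. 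Reading these equivalences in sequence yields
\[ \As \eqMk \Bs \;\IFF\; a \sim_k b \;\IFF\; \text{Duplicator wins } \Gk(\As,\Bs) \;\IFF\; \As \eqbMk \Bs, \]
which is the desired statement.

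The only point requiring any care is confirming that the modal bisimulation game defined in this subsection---whose positions lie in $\Mk(A,a) \times \Mk(B,b)$, whose moves are covering-relation extensions $s' \rcvr s$, and whose winning condition constrains the unary predicates along the whole play---really coincides with the classical $k$-round bisimulation game appearing in Theorem~\ref{classicgamesthm}(3). This bridge is supplied by the Proposition just before Theorem~\ref{Mgeneric}, which re-expresses winning strategies for this game in terms of the approximants $\sim_k$; since those approximants are exactly the equivalences that Hennessy--Milner connects to modal logic, no separate verification is needed. I would also remark that, unlike the Ehrenfeucht--\Fraisse and pebbling cases, the modal case needs no $I$-morphism machinery, because the winning condition for the modal bisimulation game carries no bijection or single-valuedness requirement, so the chain goes through without any extra hypotheses.

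In short, there is no genuine obstacle at this stage: the result is pure bookkeeping over Theorems~\ref{pisoprop}, \ref{Mgeneric}, and~\ref{classicgamesthm}(3). The real content---that the abstract span-of-open-pathwise-embeddings formulation specializes to the modal bisimulation game---was discharged in Theorem~\ref{Mgeneric}.
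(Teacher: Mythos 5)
Your proposal is correct and is exactly the paper's own proof: Theorem~\ref{Mlogicequivthm} is derived there as an immediate corollary of Theorems~\ref{classicgamesthm}(3), \ref{pisoprop}, and~\ref{Mgeneric}, precisely the chain you assemble. Your additional remarks on the bridge via the approximants $\sim_k$ and on the absence of the $I$-morphism requirement in the modal case are accurate and consistent with the surrounding text.
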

\begin{proof}
This is an immediate corollary to Theorems~\ref{classicgamesthm}(3), \ref{pisoprop}, and~\ref{Mgeneric}.
\end{proof}

\section{Further Directions}

From the categorical perspective, there is considerable additional structure which we have not needed for the results in this paper, but which may be useful for further investigations.

\textbf{Coequaliser requirements}
In Moggi's work on computational monads, there is an ``equaliser requirement'' \cite{moggi1991notions}. The dual version for a comonad $(G, \varepsilon, \delta)$ is that for every object $A$, the following diagram is a coequaliser:
\begin{center}
\begin{tikzcd}
G^2 A \ar[r, "G \epsA", yshift=0.9ex] \ar[r, "\varepsilon_{GA}"', yshift=-0.6ex]
& GA \ar[r, "\epsA"]
& A
\end{tikzcd}
\end{center}
This says in particular that the counit is a regular epi, and hence  $GA$ ``covers'' $A$ in a strong sense.

This coequaliser requirement holds for all our comonads. For $\Ek$, this is basically the observation that, given a sequence of sequences $[s_1, \ldots , s_j]$, we have $\ve[\ve s_1, \ldots , \ve s_j] \, = \, \ve s_j$. The other cases are similar.

\textbf{Indexed and graded structure}
Our comonads $\Ek$, $\Pk$, $\Mk$ are not merely discretely indexed by the resource parameter. In each case, there is a functor $(\Zp, {\leq}) \to \Comon(\CS)$
from the poset category of the positive integers to the category of comonads on $\CS$. Thus if $k \leq l$ there is a natural transformation with components $i^{k,l}_A : \Ek \As \to \El \As$, which preserves the counit and comultiplication; and similarly for the other comonads. Concretely, this is just including the plays of up to $k$ rounds in the plays of up to $l$ rounds, $k \leq l$.

Another way of parameterizing comonads by resource information is grading \cite{gaboardi2016combining}. Recall that comonads on $\CC$ are exactly the comonoids in the strict monoidal category $([\CC, \CC], {\circ}, I)$ of endofunctors on $\CC$ \cite{mac2013categories}. Generalizing this description, a graded comonad is an oplax monoidal functor $G : (M, {\cdot}, 1) \to ([\CC, \CC], {\circ}, I)$ from a monoid of grades into this endofunctor category. This means that for each $m \in M$, there is an endofunctor $G_m$, there is a graded counit natural transformation $\ve : G_{1} \Rightarrow I$, and for all $m, m' \in M$, there is a graded comultiplication $\delta^{m, m'} : G_{m\cdot m'} \Rightarrow G_m G_{m'}$.

The two notions can obviously be combined. We can see our comonads as (trivially) graded, by viewing them as oplax monoidal functors $(\Zp_{\infty}, {\leq}, \min, \infty) \to ([\CC, \CC], {\circ}, I)$. Given $k \leq l$, we have e.g. $\Ek \Rightarrow \Ek \Ek \Rightarrow \Ek \El$. The unit is interpreted using the $\omega$-colimit $\Eomega$, as described in the following paragraph.

The question is whether there are  interesting graded structures which arise naturally in considering richer logical and computational situations in our setting.

\textbf{Colimits and infinite behaviour}
In this paper, we have dealt exclusively with finite resource levels. However, there is an elegant means of passing to infinite levels. We shall illustrate this with the modal comonad. Using the inclusion morphisms described in the  previous discussion of indexed structure, for each pointed structure $(\As, a)$ we have a diagram
\[ \MM_1 (\As, a) \to \MM_2 (\As, a) \to \cdots \to \Mk (\As, a) \to \cdots \]
By taking the colimits of these diagrams, we obtain a comonad $\Momega$, which corresponds to the usual unfolding of a Kripke structure to all finite levels. 
%For any pointed structure $(\As,a)$, $\Mk (\As,a)$ is bisimilar to $(\As,a)$.
%This will correspond to the bisimulation approximant $\sim_{\omega}$, which coincides with bisimulation itself on image-finite structures \cite{hennessy1980observing}. Transfinite extensions are also possible. 
Similar constructions can be applied to the other comonads. 
%This provides a basis for lifting the comonadic analysis to the level of infinite models.

\textbf{Relations between fragments and parameters}
We can define morphisms between the different comonads we have discussed, which yield proofs about the relationships between the logical fragments they characterize. This categorical perspective avoids the cumbersome syntactic translations  in the standard proofs of these results. For illustration, there is a comonad morphism $t: \Ek \Rightarrow \Pk$ with components $t_A: \Ek \As \rarr \Pk \As$ given by $[a_{1},\dots,a_{j}] \mapsto [(1,a_{1}),\dots,(j,a_{j})]$. Together with  theorems \ref{thm:mainPebble} and \ref{thm:mainEF}, this shows that $\ELk \subseteq \ELvk$ and $\Lck \subseteq \Lvck$. Moreover, composing $t$ with a coalgebra $\As \rarr \Ek \As$ yields a coalgebra $A \rarr \Pk \As$, demonstrating that $\tw(\As) + 1 \leq \td(\As)$. 

Another morphism $\Momega \Rightarrow \PM_2$ shows that modal logic can be embedded into $2$-variable logic. More precisely, we can lift $\PM_2$ to $\CSp$ by defining the distinguished element of $\PM_2 (\As, a)$ to be $[(1,a)]$. The component of the morphism at $(\As, a)$ sends 
$[a_0, \alpha_1, a_1, \ldots , \alpha_j, a_{j}]$ to $[(1, a_0), (2, a_1), (1, a_2), \ldots ((j \mod 2) + 1, a_j)]$. This captures the ``hand-over-hand'' reuse of variables in a syntax-free fashion.

\subsection*{Concluding remarks}
Our comonadic constructions for the three major forms of model comparison games show a striking unity, on the one hand, but also some very interesting differences. For the latter, we note the different forms of logical ``deception'' associated with each comonad, 
the finite character of $\Ek$ and $\Mk$ and the non-finite character of $\Pk$,
%the different forms of back-and-forth equivalences, 
and the different combinatorial parameters which arise in each case.

One clear direction for future work is to gain a deeper understanding of what makes these constructions work. Another is to understand how widely the comonadic analysis of resources can be applied. We are currently investigating the guarded fragment \cite{andreka1998modal,gradel2014freedoms}; other natural candidates include existential second-order logic, and branching quantifiers and dependence logic \cite{vaananen2007dependence}.

Since comonads arise naturally in type theory and functional programming \cite{uustalu2008comonadic,orchard2014programming}, can we connect the study of finite model theory made here with a suitable type theory? Can this lead, via the Curry-Howard correspondence, to the systematic derivation of some significant meta-algorithms, such as decision procedures  for guarded logics based on the tree model property \cite{gradel1999decision}, or  algorithmic metatheorems such as Courcelle's theorem \cite{courcelle1990monadic}?

Another intriguing direction is to connect these ideas with the graded quantum monad studied in \cite{abramskyquantum}, which provides a basis for the study of quantum advantage in $\CS$. This may lead to a form of quantum finite model theory.

\subsection*{Current and ongoing work}
The first author of the present paper and Anuj Dawar have recently begun a U.K. EPSRC-funded joint project on ``Resources and Co-Resources: a junction between categorical semantics and descriptive complexity'' to pursue the ideas introduced in \cite{abramsky2017pebbling} and the present paper. Project participants include the second author of the present paper, Tom Paine, Adam \'O Conghaile, Daniel Marsden, Tom\'a\v{s} Jakl and Luca Reggio.  A categorical analysis of Rossman's Equirank HPT Theorem appears in \cite{abramsky2020whither}, and a significant refinement of this result, involving a combination of the pebbling and Ehrenfeucht-\Fraisse comonads, is presented in \cite{Paine2020}.
Current work in progress includes comonadic treatments of guarded fragments \cite{abramsky2021comonadic} and of generalized quantifiers \cite{conghaile2020game}, and coalgebraic characterizations of further combinatorial invariants such as clique-width.

\bibliographystyle{amsplain}
\bibliography{bibfile}

\end{document}